\newtheorem{cor}{Corollary}
\newtheorem{prop}{Proposition}
\newtheorem{lemma}{Lemma}
\newtheorem{thm}{Theorem}
\newcommand{\prob}{\mathbb P}
\newcommand{\E}{\mathbb E}
\newcommand{\algo}{dandelion}
\newcommand{\Algo}{{\sc Dandelion}}
\DeclareMathOperator*{\argmin}{arg\,min}
\DeclareMathOperator*{\argmax}{arg\,max}
\begin{document}

\title{Dandelion: Redesigning the Bitcoin Network for Anonymity}
%
%
%
%
%

\numberofauthors{3} 
%
\author{
\alignauthor Shaileshh Bojja Venkatakrishnan \\
\affaddr{University of Illinois at Urbana-Champaign} \\
\email{bjjvnkt2@illinois.edu}
\alignauthor Giulia Fanti \\
\affaddr{University of Illinois at Urbana-Champaign} \\
\email{fanti@illinois.edu}
\alignauthor {Pramod Viswanath}\\
\affaddr{University of Illinois at Urbana-Champaign} \\
\email{pramodv@illinois.edu}
}




\maketitle
\begin{abstract}
Bitcoin and other cryptocurrencies have surged in popularity over the last decade. 
Although Bitcoin does not claim to provide anonymity for its users, it enjoys a public perception of being a `privacy-preserving' financial system.
In reality, cryptocurrencies publish users' entire transaction histories in plaintext, albeit under a pseudonym; this is required for transaction validation. 
Therefore, if a user's pseudonym can be linked to their human identity, the privacy fallout can be significant.
Recently, researchers have demonstrated deanonymization attacks that exploit weaknesses in the Bitcoin network's peer-to-peer (P2P) networking protocols.
In particular, the P2P network currently forwards content in a structured way that allows observers to deanonymize users.
In this work, we redesign the P2P network from first principles with the goal of providing strong, provable anonymity guarantees.
We propose a simple networking policy called \Algo, which achieves nearly-optimal anonymity guarantees at minimal cost to the network's utility.
We also provide a practical implementation of \Algo~for deployment.  
\end{abstract}




\section{Introduction}
Cryptocurrencies are digital currencies that provide cryptographic verification of transactions.
Bitcoin is the best-known example of a cryptocurrency \cite{bitcoin}.
In recent years, cryptocurrencies have transitioned from an academic research topic to a multi-billion dollar industry  \cite{coinmarketcap}.

Cryptocurrencies exhibit two key properties: egalitarianism and  transparency.
In this context, \emph{egalitarianism} means that no single party wields disproportionate power over the network's operation.
This diffusion of power is achieved by asking other network nodes (e.g., other Bitcoin users) to validate transactions, instead of the traditional method of using a centralized authority for this purpose.
Moreover, all transactions and communications are managed over a fully-distributed, peer-to-peer (P2P) network.

Cryptocurrencies are \emph{transparent} in the sense that all transactions are verified and recorded with cryptographic integrity guarantees; 
this prevents fraudulent activity like double-spending of money.
Transparency is achieved through a combination of clever cryptographic protocols and the publication of transactions in a ledger known as a \emph{blockchain}.
This blockchain serves as a public record of every financial transaction in the network.

A property that Bitcoin does \emph{not} provide is anonymity. 
Each user is  identified in the network by a public, cryptographic key.
If one were to link such a key to its owner's human identity, the owner's entire financial history could be learned from the public blockchain.
In practice, it is possible to link public keys to identities through a number of channels, including the very networking protocols on which Bitcoin is built \cite{biryukov}.
This is a massive privacy violation, and can be downright dangerous for deanonymized users.

Bitcoin is often associated with anonymity or privacy in the public eye, despite explicit statements to the contrary in the original Bitcoin paper  \cite{bitcoin}.
People may therefore use Bitcoin without considering the potential privacy repercussions \cite{androulaki2013evaluating}.
Moreover, this problem is not unique to Bitcoin; many spinoff cryptocurrencies (known as \emph{altcoins}) use similar technologies, and therefore suffer from the same lack of anonymity in their P2P networks.

The objective of this paper is to redesign the Bitcoin networking stack from {\em first principles} to \emph{prevent network-facilitated
deanonymization} of users.
Critically, this redesign must not reduce the network's reliability or performance.
Although the networking stack is only one avenue for deanonymization attacks, it is an avenue that is powerful, poorly-understood, and often-ignored.
To better convey the problem, we begin with a brief primer on Bitcoin and its networking stack.


\subsection{Bitcoin Primer}
Bitcoin represents each user and each unit of Bitcoin currency by a public-private key pair.
A user ``possesses'' a coin by knowing its private key.
Any time a user Alice wishes to transfer her coin $m$ to Bob, she generates a signed \emph{transaction} message, which states that Alice (denoted by her public key) transmitted $m$ (denoted by its public key) to Bob (denoted by his public key).
This transaction message is broadcast to all active Bitcoin nodes, at which point \emph{miners}, or nodes who choose to help validate transactions, race to append the transaction to a global ledger known as the \emph{blockchain}.
Specifically, each miner aggregates a group of transaction messages into a \emph{block}, or list, and then completes a computational proof-of-work for the block;
the first miner to complete a proof-of-work appends their block to the blockchain and reaps a reward of newly-minted bitcoins and transaction fees.

\subsubsection{Bitcoin message propagation}
This paper focuses on one key step in the pipeline:  broadcasting transactions to other nodes.
The broadcasting process is critical because it affects which nodes can reap a transaction's mining reward (by virtue of the delivery delays to different nodes),
and it also affects the global consistency of the network (e.g., if only a subset of the users receive a given transaction).

To understand the mechanics of broadcasting, note that cryptocurrencies can be abstracted into two layers: the application layer and the network layer.
The application layer handles tasks like transaction management, blockchain processing, and mining.
Nodes are identified by their public keys in the application layer.
The network layer handles communication between nodes, which occurs over a P2P network of inter-node TCP connections.
In the network layer, nodes are identified by their IP addresses.
As we shall see momentarily, a node's IP address and public key should remain unlinkable for privacy reasons.

Bitcoin's peer-to-peer broadcast of transactions and blocks is based on flooding
information along links in the P2P network.
When a node learns of a new transaction or block, it passes the message to
its neighbors who have not yet seen the message with an independent, exponential delay.
The process continues recursively
until all reachable peers receive the message.
This broadcast protocol is commonly known as a \emph{diffusion process};
it forms the basis
of Bitcoin's global, eventually consistent log and is
therefore of utmost importance to its correct and fair operation.

\subsubsection{Desirable Network Properties}
Bitcoin's network layer should exhibit two principal properties: low latency and anonymity.

\noindent \textbf{Low latency} means that the maximum time for a message to reach all network nodes should be bounded and small.
Latency matters because if the network fails to deliver messages within a predictable time bound, the network risks reaching an inconsistent state.

\noindent \textbf{Anonymity} means that the adversary should be unable to link transaction messages (and hence, the associated public keys) to the IP address that originated a transaction.
Recall that every transaction made by a public key is listed in plaintext in the blockchain.
Therefore, if a public key can be linked to an IP address, the adversary can link all of that user's transactions.
In some cases, the IP address could even be used to learn a node operator's human identity.
Thus, deanonymization attacks can result in a user's entire banking history being revealed.
Cryptocurrency users are typically recommended to choose fresh public keys and ``mix'' their coins with others to obscure their transaction history~\cite{coinshuffle,mixcoin} (in practice, few users do so~\cite{coinseer,fistful}). However, these techniques are useless if the IP address of the source of the transaction can be recovered.

\subsubsection{How the Current Network Fails}
In recent years, security researchers have demonstrated multiple deanonymization attacks on the Bitcoin P2P network.
These attacks typically use a ``supernode" that connects to active Bitcoin nodes and listens to the transaction traffic relayed by honest nodes \cite{koshy2014analysis,biryukov,biryukov2015bitcoin}.
Because nodes diffuse transactions symmetrically over the network, researchers were able to link Bitcoin users' public keys to their IP addresses with an accuracy of up to 30\% \cite{biryukov}.
Moreover, the source estimators used in these papers are simple, and exploit only minimal knowledge of the P2P graph structure and the structured randomness of diffusion.
We hypothesize that even higher accuracies may be possible with more sophisticated estimation tools.

These attacks demonstrate that  Bitcoin's networking stack is inadequate for protecting users' anonymity.
Moreover, the Bitcoin networking codebase is copied almost directly in other cryptocurrencies,
so the problem pervades the ecosystem.
To some extent, this is to be expected:
Bitcoin's networking stack was taken directly from Satoshi Nakamoto's original network implementation, which  appears to have been a product of expediency.
However, in the decade since Bitcoin's release,  the networking stack and its  anonymity properties have not been systematically studied.

\subsection{Problem Statement and Contributions}
We aim to address the Bitcoin P2P network's poor anonymity properties through a ground-up redesign of the networking stack.
We seek a network management policy that exhibits two properties: (a) strong anonymity against an adversarial group of colluding nodes (which are a fraction $p$ of the total network size), and (b) low broadcasting latency.
We define these notions formally in Section \ref{sec:model}.
Critically, these  networking protocols should be {\em lightweight} and  provide {\em statistical anonymity guarantees against computationally-unbounded adversaries}.
Lightweight statistical solutions are complementary to cryptographic solutions, which aim to  provide worst-case anonymity guarantees, usually in the face of computationally bounded adversaries.  
Lightweight anonymization methods lower the barrier to adoption since a more efficient, faster protocol leads to a better user experience and also places less burden on developers to significantly modify existing code;  their study is also of basic scientific and engineering interest. Such is the goal of this paper.

Part of the novelty of our work is that the Bitcoin P2P networking stack has not been modeled in any detailed way (much less analyzed  theoretically), to the best of our knowledge.
In addition to modeling this complex, real-world networking system, our contributions are threefold:

\noindent \textbf{(1) Fundamental anonymity bounds.}
 The act of user deanonymization can be thought of as classifying transactions
to source nodes. Precision and recall are natural  performance metrics. 
Recall is simply the probability of detection, a common anonymity metric that captures completeness of the estimator, whereas precision captures the exactness.
We define these terms precisely in Section \ref{sec:anon_metric}.

Given a networking protocol, the adversary has a region of feasible (recall, precision) operating points, which are achieved by varying the source classification algorithm.
We give fundamental bounds on the best precision and recall achieved  by the adversary for any networking protocol, as illustrated in Figure \ref{fig:region};
here $p$ refers to the ratio of colluding nodes to the total number of nodes in the network.
We show that a (recall, precision) point is feasible only if it lies between the red and blue lines in Figure \ref{fig:region}.
Moreover, every networking protocol yields an achievable (recall, precision) region to the adversary that intersects with the shaded region (a) in Figure \ref{fig:region} in at least one point. 

\begin{figure}[h]
    \centering
  \includegraphics[width=.41\textwidth]{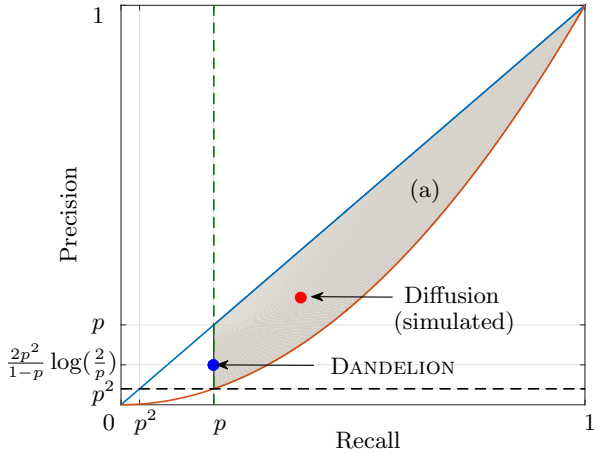}
  \put(-68,80){(a)}
  \put(-96,-15){Recall}
  \put(-184,-8){0}
  \put(-172,-8){$p^2$}
  \put(-142,-8){$p$}
  \put(-2,-8){$1$}
  \put(-188,3){$p^2$}
  \put(-98,13){\Algo}
  \put(-70,40){Diffusion}
  \put(-74,30){(simulated)}
  \put(-221,15){\rotatebox{0}{$\frac{2p^2}{1-p}\log(\frac{2}{p})$}}
  \put(-188,30){$p$}
  \put(-188,146){$1$}
  \put(-200,55){\rotatebox{90}{Precision}}
  \caption{Bounds on the precision and recall of any networking protocol, plotted for $p=0.2$.
 \Algo~ has strong anonymity properties, achieving a precision-recall region close to the fundamental lower bounds.}
  \label{fig:region}
\end{figure}

\noindent \textbf{(2) Optimal algorithm.} We propose a simple networking protocol called \Algo, whose achievable precision-recall region is nearly optimal, in the sense that it is contained in the achievable region of (nearly) every other possible networking protocol.

\Algo~ consists of two phases.
In the first phase, each transaction is propagated on a random line; that is, each relay passes the message to exactly one (random) node for a random number of hops.
In the second phase, the message is broadcast as fast as possible using diffusion.
\Algo~ has  two key constraints: (a) in the first phase, all transactions from all sources should propagate over the \emph{same} line, and (b) the adversary should not be able to learn the structure of the line beyond the adversarial nodes' immediate neighbors.

The point labeled `\Algo' in Figure \ref{fig:region} is the Pareto frontier of \Algo's achievable precision-recall region.
We compare this to the achievable region for diffusion.
The point labeled `Diffusion' was obtained by simulating a diffusion process on a snapshot of the Bitcoin server network from 2015 \cite{coinscope}, and using a suboptimal source classifier.
Because of this, the achievable region must contain the plotted point, but may be larger.
Not only is the region for diffusion a superset of the one for \Algo, but \Algo's region is nearly as small as possible.
We revisit Figure \ref{fig:region} in greater detail in Sections \ref{sec:bounds} and \ref{sec:main}.

\noindent \textbf{(3) Practical implementation.} We outline the practical challenges associated with implementing \Algo.
In particular, constructing the graph for \Algo~in a distributed fashion, and enforcing the assumption that the adversary cannot learn the graph, are nontrivial.
We therefore propose simple heuristics for addressing these challenges in practical implementations.

\vspace{0.1in}
\noindent \textbf{Paper Structure.}
We begin by discussing Bitcoin's P2P networking stack and our problem of interest, which we model in Section \ref{sec:model}.
We then present fundamental bounds on our anonymity metric in Section \ref{sec:bounds};
these bounds are used for comparison with various networking policies later in the paper.
In Section \ref{sec:algos}, we present some first-order solutions, and explain why they do not work.
We present our main result, \Algo, in Section \ref{sec:main}.
Section \ref{sec:systems} discusses the systems challenges of implementing \Algo, and proposes some simple, heuristic solutions.
We discuss the relation between \Algo and prior related work in Section \ref{sec:related}, and conclude with some open problems in Section \ref{sec:conclusion}.

\section{System Model} 
\label{sec:model}
We model three critical aspects of Bitcoin's P2P network: 
the network topology, the message propagation protocol, and the deanonymizing adversary's capabilities.
These models are based on existing protocols and observed behavior.

\subsection{P2P Network Model}
The Bitcoin P2P network contains two classes of nodes: servers and clients. 
Clients are nodes that do not accept incoming TCP connections (e.g., nodes behind NAT), whereas
servers do accept incoming connections.
We focus in this work on servers because (a) they are more permanent in the network, and (b) it is straightforward to generalize server-oriented anonymity solutions to also protect clients.

We model the P2P network as a graph $G(V,E)$, where $V$ is the set of all server nodes and $E$ is the set of edges, or connections, between them.
For a node $v$, $\Gamma(v)$ denotes the set of $v$'s neighbors in $G$. 
Similarly for a set of nodes $U$, $\Gamma(U)$ denotes the set of all neighborhood sets of the nodes in $U$. 
To model the graph's topology, we first discuss Bitcoin's network management protocols.

Each node in the Bitcoin P2P network has an \emph{address manager}---a list of other nodes' contact information represented as a (IP address, port) pair, along with a time estimate of when that node was last active.
When a server first joins the network, its address manager is empty, but 
the node can learn a random set of active addresses by contacting a hard-coded DNS server.
During normal network operation, nodes periodically relay entries from their address managers, 
which helps spread information regarding active peers.
We model address managers by assuming that each node possesses the contact information for \emph{every} other Bitcoin server.
In practice, address managers actually contain a random sample of population IP addresses.

Each server is allowed to establish up to eight outgoing connections 
to nodes in the server's address manager. 
An \emph{outgoing connection} from Alice to Bob is one that is initiated by Alice. 
However, these TCP connections are bidirectional once established. 
We therefore model the subgraph of server-to-server connections as a random 16-regular graph. 
In practice, the degree distribution is not quite uniform---we revisit this issue in Section \ref{sec:systems}.

\subsection{Transaction Model}
\label{sec:trans_model}
As explained in Section \ref{sec:adv_model}, the network is partitioned into honest nodes
and colluding, adversarial nodes, who attempt to deanonymize users.
In this work, we assume that all honest nodes generate one transaction in the time period of interest.
In practice, servers generate transactions at different rates;
however, all transactions by a single node are identified by the node's public key (as long as the node does not generate fresh keys).
Therefore, we treat multiple transactions from the same node as a single transaction to be deanonymized.
We also assume the exact time when each server starts broadcasting its transaction is unknown to the adversary. 
A typical transaction can take up to 60 seconds to propagate through the Bitcoin network \cite{decker2013information}, 
so estimating its time of origin at a useful granularity of a second or sub-second can be difficult. 
$\mathcal{X}$ is the set of all transaction messages from honest servers. 
$X_v$ is the transaction message originating from honest server $v$ and
$\mathbf{X}$ is a vector containing the ground truth mapping between source nodes $v$ and transactions $X_v$. 
We model the mapping between servers and transaction messages as being drawn uniformly from the set of all such mappings.

\noindent \textbf{Spreading Model.}
Once a Bitcoin transaction is complete, the source broadcasts the transaction message over the network. 
The protocol for broadcasting transactions should ensure low transaction latency, in order to provide network consistency and fairness.

Bitcoin currently uses a \emph{diffusion} propagation mechanism to broadcast transactions,
in which each transaction source or relay passes the transaction to the node's neighbors with independent, exponential delays.
Once a node has received a particular transaction, the node does not accept future relays of the transaction.
This diffusion spreading serves as a baseline for our algorithmic improvements.
It has good latency properties due to its exponential spreading \cite{bartlett1956deterministic}. 

More generally, in this work, we consider spreading policies that are symmetric in the neighbor node IDs; 
that is, a forwarding node does not use the IP address values (or other metadata) of its neighbors to influence its forwarding decisions. 
This holds for diffusion spreading, but we constrain our proposed solutions to also satisfy the same property.

\subsection{Adversarial Model} 
\label{sec:adv_model}
We consider an adversary whose goal is to deanonymize users by linking their transactions (and hence, their public keys) to their IP addresses.
In particular, we are interested in defending against \emph{botnets}---large sets of malware-infected hosts that are controlled remotely, often without the host owners' knowledge \cite{silva2013botnets}.
Botnets are a commonly-studied adversarial model for various Bitcoin attacks \cite{apostolaki2016hijacking}, 
largely because they are easy to access, cheap, and pervasive in the Bitcoin network \cite{plohmann2012case}.
While botnets  can have many uses, we wish to defend against a botnet that aims to deanonymize users.

We model the botnet adversary as a set of adversarial, colluding ``spy" nodes that participate in the Bitcoin network as if they were honest nodes (i.e., honest-but-curious). 
We denote honest nodes by $V_H$ and adversarial nodes by $V_A$.
For a parameter $p$, we assume a fixed number of adversarial nodes ($|V_A|=np$) and honest nodes ($\tilde{n} = |V_H| = (1-p)n$). 
The adversarial nodes are dispersed uniformly at random in the network; 
this reflects the botnet's ability to obtain IP addresses uniformly across the IP address space.  
However, for a given topology, the actual locations of the honest/adversarial nodes are random. 
We further assume that all nodes know the complete list of active IP addresses,
and honest nodes cannot distinguish between an adversarial and honest IP address.  

Whenever a transaction is broadcast over the network, the adversarial nodes log the timestamps and the honest neighbors from which they receive the transaction.
We assume a continuous-time system, in which simultaneous transmissions do not occur. 
For each honest server node $v$, we let $S_v$ denote the set of (transaction, receiving spy node, timestamp) tuples $(x,u,T_u(x))$ such that transaction $x$ was forwarded by honest node $v$ to adversary $u\in V_A$ at time $T_u(x)$ (Fig. \ref{fig:spies}); 
$\mathbf{S}$ is the vector of all $S_v$'s.  
We shall see in Section \ref{sec:main} that the honest server who \emph{first} delivers a given transaction to the adversary plays a special role.

\begin{figure}[t]
    \raggedleft
  \includegraphics[width=.33\textwidth]{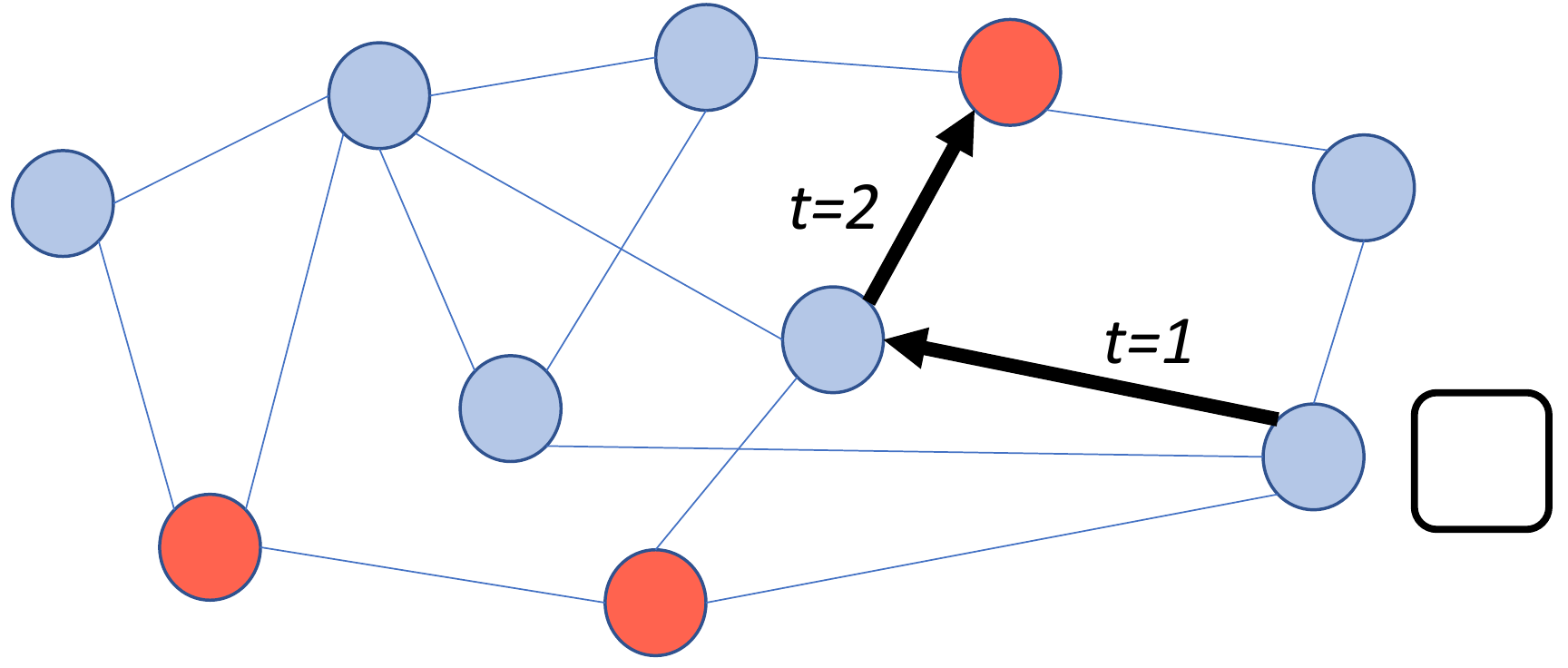}
  \put(-240,50){$|V_H| = 7$}
  \put(-240,35){$|V_A| = 3$}
  \put(-240,20){$S_u = \{(X_v, w, 2), \ldots \}$}
  \put(-15,18){$X_v$}
  \put(-30,20){$v$}
  \put(-80,32){$u$}
  \put(-63,60){$w$}
  \caption{Red nodes are adversarial spies; blue nodes are honest. Message $X_v$ reaches the spy $w$ at time $t=2$.}
  \label{fig:spies}
\end{figure}

In addition to the transaction timestamps, the adversaries can also learn the network structure $G$, partially or completely, over time. 
The extent of such knowledge depends on the dynamism of the network, and will be made clear in the context of the specific networks being considered. 
For example, if the network is static over an extended period of time then adversaries can learn the entire graph $G$. 
On the other hand, in a fast changing network, the adversaries have knowledge of only their local neighborhood $\Gamma(V_A)$. 
For ease of exposition, let us, for now, use  $\mathbf{\Gamma}$ to denote the adversary's knowledge of the graph. 

Once the timestamps have been collected, the adversarial nodes collude to infer the transaction source.
The adversary uses its observations $\mathbf{O} = (\mathbf{S},\mathbf{\Gamma})$  to output a mapping between transactions and honest servers;
we let $\mathtt{M}(X_v)\in V_H$ denote the server associated with transaction $X_v$ in the adversary's mapping.
This mapping is chosen to maximize the adversary's deanonymization payoff,  defined in Section \ref{sec:anon_metric}.

\subsection{Anonymity Metric}
\label{sec:anon_metric}

A common metric for measuring a broadcasting scheme's anonymity is \textbf{probability of detection}. 
For a fixed transaction and estimator, probability of detection is defined as 
\begin{equation}
\prob_{\mathtt{M},G}(\text{detection}) = \frac{\sum_{v\in V_H} \prob(\mathtt{M}(X_v) = v)}{\tilde n},
\end{equation}
or the probability that the estimator outputs the correct source of a single transaction,
computed over all  transaction sources $v\in V_H$, mappings between sources and transactions $\mathbf X$, realizations of the message propagation trajectory, and graph realizations $G$ (if the graph is random).
While probability of detection considers a single source, our problem considers the joint deanonymization of transactions from distinct sources.
In this case, probability of detection inherently captures the {recall}, or completeness, of an estimator. 
We propose to augment this metric by also studying precision, which captures the exactness of an estimator.

Precision and recall are performance metrics commonly used in information retrieval for binary classification. 
Suppose we have $n$ data items, each associated with a class: 0 or 1. 
We are given a classifier that labels each data item as either a 0 or a 1, without access to the ground truth.
We designate one of these classes (e.g. class 1)  `positive'.
For a given classifier output on a single item, a \emph{true positive} means the item was correctly assigned to class 1, 
and a \emph{true negative} means the item was correctly assigned to class 0.
A \emph{false positive} means a 0 item was incorrectly classified as a 1, and a \emph{false negative} means a 1 item was incorrectly classified as a 0.
If we run this classifier on all $n$ data items, precision and recall are defined as follows:
\begin{eqnarray*}
\textbf{Precision} &=& \frac{|\text{True Positives}|}{|\text{True Positives}| + |\text{False Positives}|} \\
\textbf{Recall} &=& \frac{|\text{True Positives}|}{|\text{True Positives}| + |\text{False Negatives}|}
\end{eqnarray*}
where $|\cdot|$ denotes the cardinality of a set, and `True Positives' denotes the set of all data items whose classification output was a true positive (and so forth).

Precision can be interpreted as the probability that a randomly-selected item with label 1 is correct, 
whereas recall can be interpreted as the probability that a randomly-selected data item from class 1 is correctly classified.
Adapting this terminology to our problem, we have a multiclass classification problem; each server is a class, and each transaction is to be classified.
For a given server $v$ and mapping $\mathtt{M}$, the precision $D_{\texttt{M}}(v)$ comparing class $v$ to all other classes is computed as\footnote{Following convention we define $D_\mathtt{M}(v)=0$ if both the numerator and denominator are $0$ in Equation~\eqref{eq:precision}.}
\begin{equation}
D_{\texttt{M}}(v) = \frac{\mathbbm{1}\{\mathtt{M}(X_v) = v\}}{\sum_{w\in V_H} \mathbbm{1}\{ \mathtt{M}(X_w) = v\} },
\label{eq:precision}
\end{equation}
and the recall is computed as
\begin{equation}
R_{\texttt{M}}(v) = \mathbbm{1}\{\mathtt{M}(X_v) = v\}
\label{eq:recall}
\end{equation}
where $\mathbbm{1}\{\cdot\}$ denotes the indicator function.
In multiclass classification settings, precision and recall are often aggregated through \emph{macro-averaging}, which consists of averaging precision/recall across classes.
This approach is typically used when the number of items in each class is equal \cite{sebastiani2002machine}, as in our problem.
We therefore average the precision and recall over all servers and take expectation, giving an expected macro-averaged precision of $\E[D_\mathtt{M}] = \frac{1}{\tilde n}\sum_{v\in V_H} \E[D_{\texttt{M}}(v)]$ and recall of $\E[R_\mathtt{M}] = \frac{1}{\tilde n}\sum_{v\in V_H} \mathbb{E}[R_{\texttt{M}}(v)]$.

We now explain why probability of detection does not capture the distinction between precision and recall. 
First, note that  the expected per-node recall is identical to the probability of detection.
Now consider two estimators: in the first, the adversary's strategy is to assign all $\tilde n$ transactions to one randomly-selected server $v$.
In the second, the adversary creates a random matching between the $\tilde n$ transactions and honest servers.
Both estimators have a probability of detection (i.e., expected per-node recall) of $1/\tilde n$.
However, the first estimator has an expected per-node precision of $1/\tilde n^2$, while the second has an expected per-node precision of  $1/\tilde n$.
Operationally, this can be interpreted as a difference in plausible deniability: 
the implicated node $v$ in the first case can deny being the source of any given transaction, because it could not have generated all $\tilde n$ transactions.
If a node is  correctly implicated in the second estimator, it has no plausible deniability.
Probability of detection alone does not capture this difference, and is therefore insufficient as a standalone metric.

In this work, we quantify anonymity through a combination of \textbf{expected macro-averaged precision} (or ``precision" for short) and \textbf{expected macro-averaged recall} (or ``recall", or probability of detection).
Higher precision and recall favor the adversary. 
For a mapping strategy \texttt{M} let $D_\mathtt{M}$  and $R_\mathtt{M}$ denote the average precision and recall, respectively, obtained in a realization. 
Our metrics of interest, then, are the overall expected precision $\mathbf{D}_\mathtt{M} = \mathbb{E}[D_\mathtt{M}]$ and recall $\mathbf{R}_\mathtt{M} = \mathbb{E}[R_\mathtt{M}]$. 
This expectation is taken over four random variables: the graph realization $G$ (which can be random in general), the mapping between servers and messages $\mathbf X$, the observed timestamp and topological information $\mathbf{O}$, and the adversary's mapping strategy $\mathtt{M}$.
Similarly let $D_\mathtt{M}(v)$ and $\mathbf{D}_\mathtt{M}(v)$ denote the instantaneous and expected precisions at a server $v\in V_H$,
and let $R_\mathtt{M}(v)$ and $\mathbf{R}_\mathtt{M}(v)$ denote the instantaneous and expected recalls.
Let $\mathbf{D}_\mathtt{OPT}$ and $\mathbf{R}_\mathtt{OPT}$ denote the precision and recall, respectively, of the precision-maximizing and recall-maximizing mapping strategies, respectively.
The optimal precision is not necessarily achieved by the same mapping strategy as the optimal recall.
The adversary is computationally unbounded.

\subsection{Problem Statement}
As network designers, we control two aspects of the network: the graph creation/maintenance strategy and the spreading protocol.
Our goal is to choose a graph-selection strategy and a spreading protocol that simultaneously give low average latency, precision, and recall guarantees.
We restrict ourselves to the following model of graph generation:
For a fixed topology $\tau$, we assume that the nodes are equally likely to assume each possible label ordering in $\tau$.
Moving forward, $G(V,E)$ will describe the resulting, labeled graph. 


Let $\mathcal T$ denote the set of all graph topologies over $n$ nodes, and $\Sigma$ the set of graph-independent spreading strategies.
The adversary controls only the estimation algorithm for mapping transactions to nodes. 
Given a topology $\tau \in \mathcal T$ and a spreading strategy $\sigma \in \Sigma$, let $\mathcal {M}_{\tau,\sigma}$ denote the set of mapping strategies that map $\tilde n$ transactions to $\tilde n$ servers, with all knowledge derived from the topology and the spreading strategy. If $\tau$ and $\sigma$ are clear from context, we simply use $\mathcal{M}$ to denote the space of mapping strategies. 
We define the \emph{detection region} for $\tau$  and $\sigma$ as the set of achievable
precision and recall operating points:
\[
\Omega(\tau, \sigma) = \{(D,R) ~ | ~ \exists ~ \texttt{M}\in \mathcal M_{\tau, \sigma},  ~D = \mathbf{D}_\mathtt{M},  R = \mathbf{R}_\mathtt{M}\}.
\]
Note that the detection region always contains the origin. 
The adversary's goal is to find estimators that achieve the boundary points of the region, whereas our goal is to make the detection region as small as possible.

\noindent \textbf{Problem:} Characterize fundamental, protocol-independent bounds on the detection region. Further, identify a $(\tau^*,\sigma^*)$ pair whose detection region is a subset of the detection region of every  graph-generation and spreading strategy:
\begin{equation}
\Omega(\tau^*, \sigma^*) = \bigcap_{\sigma \in \Sigma, \tau \in \mathcal T} \Omega(\tau, \sigma).
\label{eq:intersection}
\end{equation}
It is unclear a priori if such a strategy pair exists.
In this work, we show a simple networking policy that closely approximates condition \eqref{eq:intersection}.

\section{Anonymity Metric Properties}
\label{sec:bounds}
Precision and recall are not generally used as anonymity metrics, since most anonymity systems provide per-user ano-nymity guarantees \cite{tor,tarzan,reiter1998crowds,KFSV14}.
We instead want guarantees against a stronger adversary that jointly deanonymizes multiple users.
The goal of this section is to give intuition about precision and recall as metrics, and to provide fundamental bounds on both. 

Our problem differs from traditional classification in that there is only one data item (transaction) per class (server). 
This restricts the set of achievable macro-averaged precision-recall points in a somewhat unconventional way. 
We first explain how precision and recall are typically used, and then prove fundamental bounds that illustrate the ways in which our problem differs from traditional classification problems. 

\vspace{0.1in}
\noindent \textbf{Precision-Recall Curves.}
Most binary classifiers have an internal parameter (e.g., a threshold) that can be varied to give the classifier different precision and recall characteristics.
Sweeping this parameter typically yields a tradeoff between precision and recall. 
While this tradeoff has been studied theoretically \cite{powers2011evaluation}, it is most often illustrated empirically for a given classifier, through curves like Figure \ref{fig:region_bounds} (right).
Notably, a classifier can achieve high recall ($\approx 1$) at the expense of precision by assigning all data elements to the positive category, 
or high precision ($\approx 1$) at the expense of recall by classifying only data elements that are clearly true positives. 
Hence the precision-recall points $(0,1)$ and $(1,0)$ are typically achievable in practice.

Unlike traditional precision-recall curves, we are not interested in the curve for a single estimator; 
we want to identify the achievable detection region across \emph{all} estimators.
Moreover, since ours is a multi-class classification problem, we consider \emph{macro-averaged} precision and recall. 
With macro-averaging, increasing the recall (resp. precision) for one class will often reduce the recall (resp. precision) for another. 
Therefore, it is unclear what the precision-recall tradeoff will look like, or even if the boundary points $(0,1)$ and $(1,0)$ are achievable.
The following theorem restricts the set of feasible, macro-averaged precision-recall points for \emph{any} estimator the adversary employs.


\begin{thm} \label{thm:bounds_prec_rec}
Any mapping policy $\mathtt{M}\in\mathcal{M}_{\tau,\sigma}$ on a network with topology $\tau\in\mathcal{T}$ and spreading strategy $\sigma \in\Sigma$ has a precision and recall that are bounded as 
\begin{align}
\mathbf{D}_\mathtt{M} \overset{(a)}{\leq} \mathbf{R}_\mathtt{M} \overset{(b)}{\leq} \sqrt{\mathbf{D}_\mathtt{M}}.
\end{align}
\end{thm}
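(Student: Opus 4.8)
The plan is to prove both inequalities by first establishing per-realization versions and then averaging over servers and taking expectations. Fix a realization of $G$, $\mathbf X$, $\mathbf O$ and the resulting (possibly randomized) mapping $\mathtt M$, and let $N_v = \sum_{w\in V_H}\mathbbm 1\{\mathtt M(X_w)=v\}$ be the number of transactions the adversary attributes to server $v$, so that $\sum_{v\in V_H} N_v = \tilde n$ (each of the $\tilde n$ honest transactions is mapped to exactly one honest server). Write $C = \{v\in V_H : \mathtt M(X_v)=v\}$ for the set of correctly identified servers and $T = |C| = \sum_{v\in V_H} R_\mathtt M(v)$, so that the per-realization averaged recall is $R_\mathtt M = T/\tilde n$.

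For inequality (a), observe that whenever $\mathtt M(X_v)=v$ we have $N_v\ge 1$ (the transaction $X_v$ itself is counted), so $D_\mathtt M(v) = \mathbbm 1\{\mathtt M(X_v)=v\}/N_v \le \mathbbm 1\{\mathtt M(X_v)=v\} = R_\mathtt M(v)$; and when $\mathtt M(X_v)\ne v$ both sides are $0$ (using the stated $0/0=0$ convention in case $N_v=0$). Hence $D_\mathtt M(v)\le R_\mathtt M(v)$ for every $v$, and averaging over $v\in V_H$ and taking expectations gives $\mathbf D_\mathtt M\le\mathbf R_\mathtt M$.

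For inequality (b), note that the only servers contributing to the precision sum are those in $C$, so $\sum_{v\in V_H} D_\mathtt M(v) = \sum_{v\in C} 1/N_v$. The key step is a Cauchy--Schwarz estimate: since $\sum_{v\in C} N_v \le \sum_{v\in V_H} N_v = \tilde n$,
\[
T^2 = \Big(\sum_{v\in C} 1\Big)^2 \le \Big(\sum_{v\in C} N_v\Big)\Big(\sum_{v\in C}\frac{1}{N_v}\Big) \le \tilde n\sum_{v\in C}\frac{1}{N_v}.
\]
Dividing by $\tilde n^2$ shows $D_\mathtt M = \frac1{\tilde n}\sum_{v\in C} 1/N_v \ge T^2/\tilde n^2 = R_\mathtt M^2$ in every realization. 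Finally, taking expectations and applying Jensen's inequality to the convex function $x\mapsto x^2$ yields $\mathbf D_\mathtt M = \E[D_\mathtt M] \ge \E[R_\mathtt M^2] \ge (\E[R_\mathtt M])^2 = \mathbf R_\mathtt M^2$; taking square roots (all quantities are nonnegative) gives $\mathbf R_\mathtt M\le\sqrt{\mathbf D_\mathtt M}$.

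I do not expect a serious obstacle: the only care needed is bookkeeping with the $0/0=0$ convention and the identity $\sum_{v\in V_H} N_v=\tilde n$, after which (a) is immediate and (b) is Cauchy--Schwarz followed by Jensen. If the model allowed non-surjective or partial mappings, the same argument goes through verbatim using $\sum_{v} N_v \le \tilde n$.
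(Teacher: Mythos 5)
Your proof is correct and follows essentially the same route as the paper's: the pointwise bound $D_\mathtt{M}(v)\le R_\mathtt{M}(v)$ for part (a), and for part (b) a per-realization bound $D_\mathtt{M}\ge R_\mathtt{M}^2$ obtained from the sum of reciprocals over correctly-matched servers, followed by Jensen. Your Cauchy--Schwarz step is just the standard proof of the AM--HM inequality the paper invokes, and your use of Jensen with $x\mapsto x^2$ is equivalent to the paper's use of concavity of the square root, so the two arguments coincide in substance.
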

(Proof in Section \ref{proof:bounds_prec_rec})

This theorem follows from the definition of macro-averaged precision and recall;  it implies that not only are corner points $(0,1)$ and $(1,0)$ unachievable, but every estimator's detection region must lie between the blue and red lines in Fig. \ref{fig:region_bounds} (left).
Given this constraint, a natural question is whether there exist precision and recall points that can always be achieved, regardless of the networking protocol.  
We demonstrate the existence of such points by analyzing a simple estimator.

\begin{figure}[t]
    \centering
  \includegraphics[width=.21\textwidth]{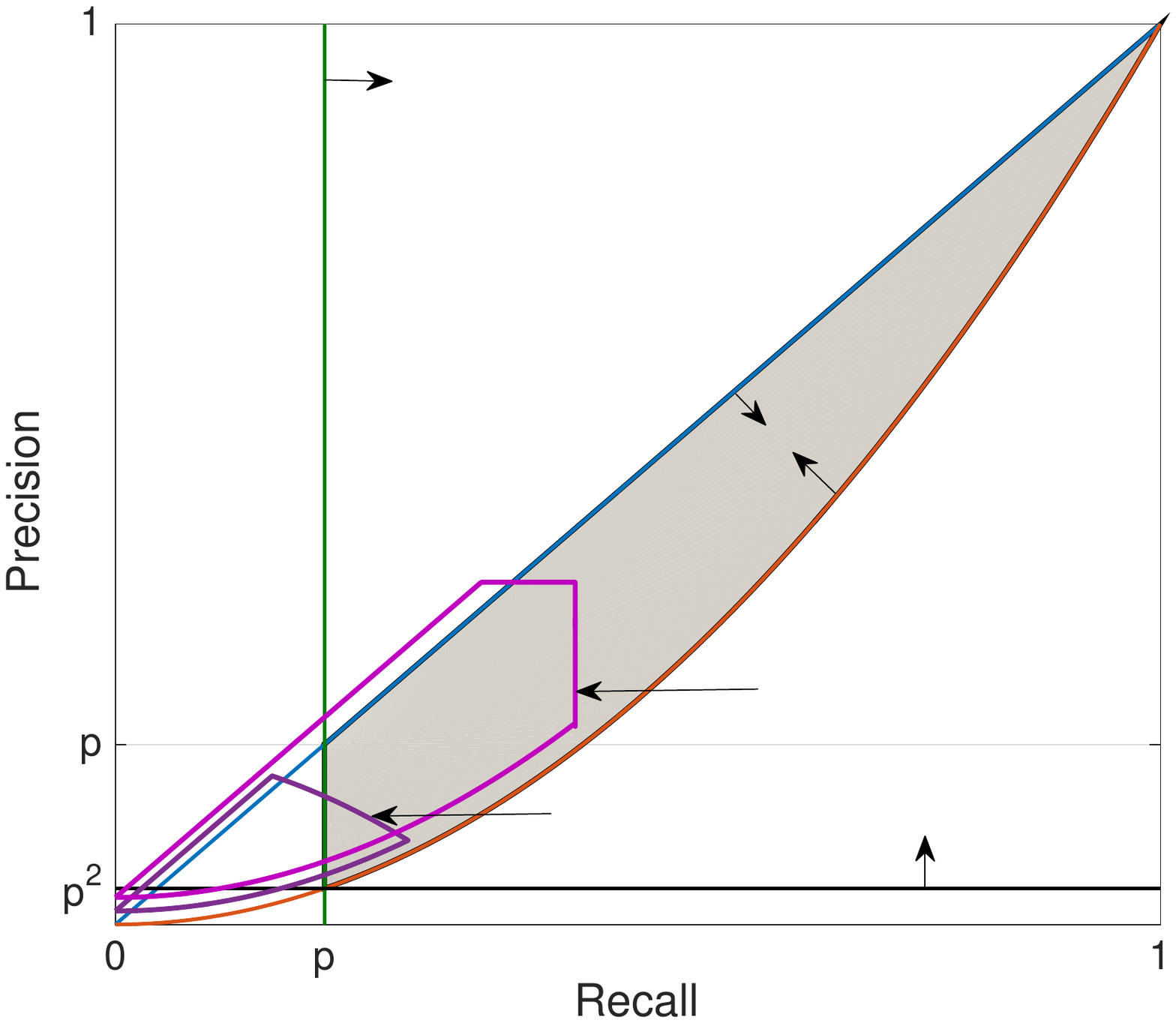}
  ~
  \includegraphics[width=.21\textwidth]{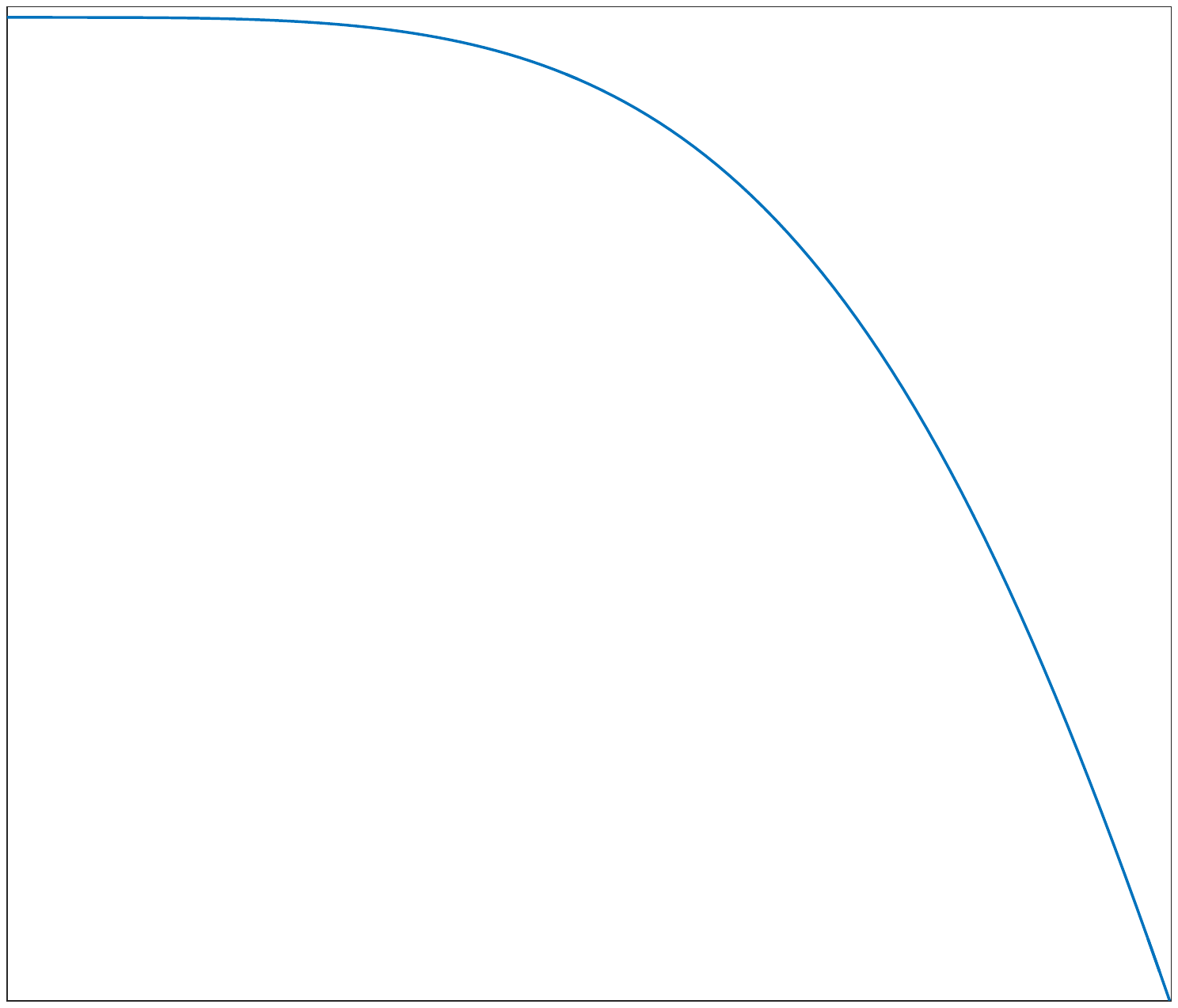}
  \put(-171,51){\rotatebox{45}{\small{(5a)}}}
  \put(-152,33){\rotatebox{49}{\small{(5b)}}}
  \put(-176,-15){Recall}
  \put(-66,-15){Recall}
  \put(-135,9){\small{(11)}}
  \put(-176,11){\tiny{(i)}}
  \put(-155,23){\tiny{(ii)}}
  \put(-198,75){\small{(12)}}
  \put(-224,-8){0}
  \put(-110,-8){0}
  \put(-202,-8){$p$}
  \put(-118,-8){$1$}
  \put(0,-8){$1$}
  \put(-230,3){$p^2$}
  \put(-230,85){$1$}
  \put(-112,85){$1$}
  \put(-235,25){\rotatebox{90}{Precision}}
  \put(-113,25){\rotatebox{90}{Precision}}
  \caption{Bounds on the precision-recall detection region for any networking policy (left). 
 Each bound is labeled with the corresponding equation number from Section \ref{sec:bounds}. 
 Example of a typical precision-recall curve (right).
 }
  \label{fig:region_bounds}
\end{figure}

\vspace{0.1in}
\noindent \textbf{Lower Bounds.} 
Computing lower bounds on precision and recall is challenging because the adversary's knowledge can vary depending on the networking policy.
However, the so-called  \emph{first-spy estimator} (which is used in practical attacks like \cite{biryukov})  relies only on the adversary's knowledge of its local network neighborhood.
The adversaries we consider will always have access to this information.
The first-spy estimator outputs the first honest node to send a given message to any of the adversarial nodes. 
We start by showing that the first-spy estimator always achieves a precision and recall of at least $p^2$ and $p$, respectively, where $p$ is the fraction of spies. 
This in turn implies that the maximum precision and recall over all estimators are individually lower-bounded by $p^2$ and $p$, respectively.

\begin{thm} \label{thm:lower_bounds_fs}
The optimal precision and recall on a network with a fraction $p$ of adversaries and any spreading policy are lower bounded as
\begin{eqnarray}
\mathbf{D}_\mathtt{OPT} &\geq& p^2 \label{eq:prec_lower} \\
\mathbf{R}_\mathtt{OPT} &\geq& p  \label{eq:rec_lower}.
\end{eqnarray}
\end{thm}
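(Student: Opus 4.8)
The plan is to establish both bounds via the \emph{first-spy estimator} alone. Since $\mathbf{D}_\mathtt{OPT}$ and $\mathbf{R}_\mathtt{OPT}$ are, by definition, the maxima of $\mathbf{D}_\mathtt{M}$ and $\mathbf{R}_\mathtt{M}$ over all $\mathtt{M}\in\mathcal{M}_{\tau,\sigma}$, and since the first-spy estimator lies in $\mathcal{M}_{\tau,\sigma}$ (it uses only the timestamped tuples in $\mathbf{S}$ together with the identity of the forwarding honest neighbor), it suffices to prove $\mathbf{R}_\mathtt{FS}\ge p$ and $\mathbf{D}_\mathtt{FS}\ge p^2$, where $\mathtt{FS}$ denotes the first-spy estimator.

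For the recall bound \eqref{eq:rec_lower}, I would fix an honest source $v$ and let $u$ be the first node other than $v$ to receive $X_v$. Three facts suffice. (i) $u$ received $X_v$ directly from $v$: by the continuous-time, no-simultaneity assumption, $v$ is the unique holder of $X_v$ at every instant strictly before $u$'s reception, so $u$'s sender is $v$. (ii) Conditioned on $v$ being honest, $\prob(u\in V_A)=np/(n-1)\ge p$, because the spreading protocols in $\Sigma$ are symmetric in neighbor identities, so the trajectory of $X_v$ (and the realized graph) is a function of $\tau$ and the spreading randomness only, hence independent of the uniformly random placement of the $np$ adversaries among the remaining $n-1$ nodes. (iii) If $u\in V_A$, the transmission $v\to u$ is the first time \emph{any} honest node relays $X_v$ to an adversary (again because $v$ alone holds $X_v$ before that instant), so $\mathtt{FS}(X_v)=v$. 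Chaining (i)--(iii), conditioned on $v$ being honest we get $\prob(\mathtt{FS}(X_v)=v)\ge np/(n-1)\ge p$; averaging over honest sources gives $\mathbf{R}_\mathtt{OPT}\ge\mathbf{R}_\mathtt{FS}\ge p$.

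For the precision bound \eqref{eq:prec_lower}, rather than tracking false positives server by server, I would feed the recall bound into Theorem~\ref{thm:bounds_prec_rec}, which holds for \emph{every} mapping policy and in particular for $\mathtt{FS}$: its inequality (b) gives $\mathbf{R}_\mathtt{FS}\le\sqrt{\mathbf{D}_\mathtt{FS}}$, so $\mathbf{D}_\mathtt{FS}\ge\mathbf{R}_\mathtt{FS}^2\ge p^2$, whence $\mathbf{D}_\mathtt{OPT}\ge\mathbf{D}_\mathtt{FS}\ge p^2$. (Equivalently, one could rerun the Cauchy--Schwarz argument underlying Theorem~\ref{thm:bounds_prec_rec}, specialized to $\mathtt{FS}$, using that the total number of transactions mapped to servers is at most $\tilde n$.)

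The step I expect to need the most care is (ii): one must argue precisely that ``the adversary set is a uniform $np$-subset, independent of the message trajectory'' genuinely follows from restricting to identity-symmetric spreading strategies together with uniform adversary placement — this independence is exactly what the model is designed to grant, but it is the crux. A secondary point is well-definedness of $\mathtt{FS}$: some honest node must relay $X_v$ to an adversary, which holds whenever the source's connected component contains an adversary, and the parent of the first adversary to receive $X_v$ is necessarily honest; so $\mathtt{FS}$ is defined in the regime of interest and the bounds above go through.
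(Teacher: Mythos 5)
Your proposal is correct, and the recall half is the same argument as the paper's: the paper's Lemma~\ref{lem: lower bound} likewise conditions on the first recipient $U$ of $X_v$ being adversarial, uses the uniform placement of the $np$ spies to get $\prob(U\in V_A)=np/(n-1)\geq p$, and notes (as you do via the no-simultaneity assumption, and as the paper does in a closing remark for simultaneous broadcasts) that this event forces the first-spy estimator to output $v$. Where you genuinely diverge is the precision half: the paper does not invoke Theorem~\ref{thm:bounds_prec_rec}; it reproves the bound directly for the first-spy estimator by introducing the sets $\bar S_v$ and the count $T=\sum_{v}\mathbf{1}\{X_v\in\bar S_v\}$ of sources that self-deliver first, showing $D_\mathtt{FS}\geq T^2/\tilde n^2$ per realization via the A.M.--H.M.\ inequality and then using $\E[T^2]\geq\E[T]^2\geq(p\tilde n)^2$. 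Your route—compose $\mathbf{R}_\mathtt{FS}\geq p$ with inequality (b) of Theorem~\ref{thm:bounds_prec_rec} applied to $\mathtt{FS}$ to get $\mathbf{D}_\mathtt{FS}\geq\mathbf{R}_\mathtt{FS}^2\geq p^2$—is valid and non-circular (Theorem~\ref{thm:bounds_prec_rec} is proved independently and holds for any mapping in $\mathcal{M}_{\tau,\sigma}$, which $\mathtt{FS}$ is, once transactions that never reach a spy are assigned arbitrarily, a completion that can only preserve the recall bound and keeps the total assignment count at $\tilde n$ as Theorem~\ref{thm:bounds_prec_rec}'s proof requires). The two derivations rest on the identical inequality chain (A.M.--H.M.\ plus Jensen), so yours buys brevity and reuse of an already-proven bound, while the paper's buys a self-contained computation that exhibits the relevant random variable $T$ explicitly; your flagged concern about the independence of adversary placement from the message trajectory is exactly the symmetry assumption the paper's Lemma~\ref{lem: lower bound} uses implicitly, so no gap there.
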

(Proof in Section \ref{proof:lower_bounds_fs})

This theorem implies that for any networking policy, the detection region must include at least one point in the shaded region of Figure \ref{fig:region_bounds} (left).
Note that if an estimator can achieve a given (recall, precision) point, then it can also achieve points with elementwise lower precision and recall by choosing to discard observed information.  
The purple curves labeled (i) and (ii) outline the boundaries of two  examples of feasible detection regions, staggered for visibility. 

\vspace{0.1in}
\noindent {\bf Optimizing Estimators.}
Given these constraints on the detection region, we want to understand what estimators achieve the maximum precision and recall, respectively.
For a given network specification, precision and recall might be maximized by different estimators; 
if this is the case, then the detection region will have a nontrivial Pareto frontier, like curve (i) in Figure \ref{fig:region_bounds} (left). 
On the other hand, if the same estimator maximizes precision and recall, the detection region's Pareto frontier will be a single point, like curve (ii).


We start by proving that in order to maximize precision, the adversary should use a maximum-weight matching estimator, where the weights depend on the  information observed by the adversary, such as graph structure and timestamps.
\begin{thm}[Precision-Optimal Estimator] \label{thm: optimal estimator}
The \\ precision-optimizing estimator for an adversary with observations $\mathbf{O}=(\mathbf{S},\mathbf{\Gamma})$, is achieved by a matching over the bipartite graph $(V_H,\mathcal{X})$. Moreover, such a matching is a maximum-weight matching for edge weights $\mathbb{P}(X_v = x|\mathbf{O})$ on each edge $(v,x)\in V_H\times \mathcal{X}$ of the graph. 
\end{thm}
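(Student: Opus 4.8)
Throughout, write $\mathtt{M}^{-1}(v)\subseteq\mathcal{X}$ for the set of transactions that a given mapping $\mathtt{M}$ assigns to server $v$, and condition on the adversary's observation $\mathbf{O}=\mathbf{o}$, for which $\mathtt{M}$ is a fixed (deterministic, WLOG) function. The plan is to reduce the maximization of expected precision to a purely combinatorial optimization over maps $\mathtt{M}:\mathcal{X}\to V_H$. First I would observe that, given $\mathbf{o}$, the denominator $\sum_{w\in V_H}\mathbbm{1}\{\mathtt{M}(X_w)=v\}$ is \emph{deterministic} and equals $|\mathtt{M}^{-1}(v)|$: since $\mathbf{X}$ is a bijection between $V_H$ and $\mathcal{X}$, as $w$ ranges over $V_H$ the transactions $X_w$ range over all of $\mathcal{X}$, so the sum just counts the transactions mapped to $v$. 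Hence all the remaining randomness sits in the numerator, and with $w_{v,x}:=\prob(X_v=x\mid\mathbf{O}=\mathbf{o})$ one obtains
\[
\E[D_\mathtt{M}\mid\mathbf{O}=\mathbf{o}]=\frac{1}{\tilde n}\sum_{v\in V_H}\frac{1}{|\mathtt{M}^{-1}(v)|}\sum_{x\in\mathtt{M}^{-1}(v)}w_{v,x},
\]
where a term is read as $0$ when $\mathtt{M}^{-1}(v)=\emptyset$ (consistent with the stated $0/0$ convention, since then $\mathbbm{1}\{\mathtt{M}(X_v)=v\}=0$ too). So the $v$-th term is the \emph{average} of the posteriors $w_{v,x}$ over the transactions assigned to $v$.

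Next I would show the optimum is attained by a matching. Since an average is at most a maximum, any non-injective $\mathtt{M}$ can be replaced by an injective one without decreasing $\E[D_\mathtt{M}\mid\mathbf{O}=\mathbf{o}]$: pick $v$ with $|\mathtt{M}^{-1}(v)|\ge 2$; a counting argument gives at least $|\mathtt{M}^{-1}(v)|-1$ servers with no assigned transaction; reassign every transaction in $\mathtt{M}^{-1}(v)$ except the one maximizing $w_{v,\cdot}$ to distinct empty servers. This replaces the $v$-term (an average) by $\max_{x}w_{v,x}$, which is at least that average, adds only nonnegative terms for the newly occupied servers, and leaves every other term unchanged. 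Iterating strictly decreases the number of collisions and terminates at an injective $\mathtt{M}'$; because $|V_H|=|\mathcal{X}|=\tilde n$, injective means a \emph{perfect} matching of the bipartite graph $(V_H,\mathcal{X})$.

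Finally I would pin down which matching and de-condition. For a perfect matching every $|\mathtt{M}^{-1}(v)|=1$, so $\E[D_\mathtt{M}\mid\mathbf{O}=\mathbf{o}]=\frac{1}{\tilde n}\sum_{(v,x)\in\mathtt{M}}w_{v,x}$, i.e. $1/\tilde n$ times the weight of $\mathtt{M}$ under edge weights $w_{v,x}=\prob(X_v=x\mid\mathbf{O})$; hence the conditionally optimal estimator is a maximum-weight perfect matching for these weights. De-conditioning, the adversary may choose $\mathtt{M}$ as any function of $\mathbf{O}$, so taking the per-realization maximizer maximizes $\mathbf{D}_\mathtt{M}=\E[\,\E[D_\mathtt{M}\mid\mathbf{O}]\,]$ by the tower rule; and a randomized estimator cannot beat the best deterministic one, since conditioned on $\mathbf{O}$ the expected precision is affine in the adversary's mixing distribution (its private coins being independent of $\mathbf{X}$).

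The only genuinely delicate step is the rewriting: correctly arguing that the denominator is non-random given $\mathbf{O}$---which uses both that $\mathtt{M}$ depends on $\mathbf{O}$ only and that $\mathbf{X}$ is a bijection---and aligning this with the $0/0$ convention so that empty-preimage servers contribute exactly zero. Once the objective is in the ``average of posteriors'' form, the matching reduction and the max-weight identification are a routine averaging/rearrangement argument, and a secondary point worth spelling out is why deterministic estimators lose no generality.
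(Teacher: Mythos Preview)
Your proof is correct and follows essentially the same two-step approach as the paper: first reduce to matchings via an exchange argument (moving excess transactions from an overloaded server to empty ones and noting the average-of-posteriors cannot exceed the maximum), then observe that among matchings the objective is exactly the total edge weight. Your treatment is in fact more careful than the paper's on two points the paper leaves implicit: the observation that the denominator $\sum_w \mathbbm{1}\{\mathtt{M}(X_w)=v\}$ is deterministic given $\mathbf{O}$ (because $\mathbf{X}$ is a bijection), and the explicit de-conditioning/derandomization step.
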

(Proof in Section \ref{proof: optimal estimator})

Theorem \ref{thm: optimal estimator} gives a corollary used in Section \ref{sec:algos} for bounding the performance of various networking protocols.
\begin{cor} \label{cor: opt upper bound}
The optimal expected payoff at a server $v$, under observations $\mathbf{O}=(\mathbf{S},\mathbf{\Gamma})$ for the adversaries, is upper bounded as
\begin{align}
\mathbb{E}[D_\mathtt{OPT}(v)|\mathbf{O}] \leq \max_{x\in\mathcal{X}} \mathbb{P}(X_v = x | \mathbf{O}). \label{eq: opt upper bound}
\end{align}
\end{cor}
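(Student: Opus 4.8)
The plan is to leverage the structural characterization of Theorem~\ref{thm: optimal estimator}, which says that the precision-optimal mapping $\mathtt{OPT}$ is a matching on the bipartite graph $(V_H,\mathcal{X})$ (with edge weights $\mathbb{P}(X_v=x\mid\mathbf{O})\ge 0$ defined on all of $V_H\times\mathcal{X}$). Fix a realization of the adversary's observation $\mathbf{O}$. Since the adversary chooses its mapping from its observations, $\mathtt{OPT}$ is a deterministic function of $\mathbf{O}$; conditioning on $\mathbf{O}$ therefore freezes $\mathtt{OPT}$ to a fixed matching, and the only remaining randomness is in the ground-truth assignment $\mathbf{X}$. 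Let $x^\star\in\mathcal{X}$ denote the unique transaction that this matching assigns to server $v$ (the case where $v$ is left unmatched is handled separately below).

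The key observation is that, because $\mathtt{OPT}$ assigns at most one transaction to $v$, the denominator $\sum_{w\in V_H}\mathbbm{1}\{\mathtt{OPT}(X_w)=v\}$ in the precision definition~\eqref{eq:precision} equals $1$ whenever the numerator $\mathbbm{1}\{\mathtt{OPT}(X_v)=v\}$ equals $1$. Hence $D_\mathtt{OPT}(v)=\mathbbm{1}\{\mathtt{OPT}(X_v)=v\}$; that is, for a matching estimator the precision at $v$ collapses to the recall at $v$. Moreover, since $x^\star$ is the only transaction that $\mathtt{OPT}$ maps to $v$, the event $\{\mathtt{OPT}(X_v)=v\}$ is exactly the event $\{X_v=x^\star\}$. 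Taking the conditional expectation,
\[
\mathbb{E}[D_\mathtt{OPT}(v)\mid\mathbf{O}]=\mathbb{P}(X_v=x^\star\mid\mathbf{O})\le\max_{x\in\mathcal{X}}\mathbb{P}(X_v=x\mid\mathbf{O}),
\]
which is the claim. If instead $v$ is unmatched by $\mathtt{OPT}$, then both the numerator and denominator of~\eqref{eq:precision} are $0$, so by the stated convention $D_\mathtt{OPT}(v)=0$ and the inequality is trivial. (One may in any case take $\mathtt{OPT}$ to be a perfect matching, since the weights are nonnegative and adjoining zero-weight edges to complete a matching never lowers its total weight.)

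The argument is short and has no serious obstacle; the only point that needs care is the conditioning step — one must use that $\mathtt{OPT}$ is $\mathbf{O}$-measurable, so that conditioning on $\mathbf{O}$ removes every source of randomness except the true assignment $\mathbf{X}$, turning the expected precision into a single probability $\mathbb{P}(X_v=x^\star\mid\mathbf{O})$. If one prefers not to invoke Theorem~\ref{thm: optimal estimator}, the same bound follows for an arbitrary mapping $\mathtt{M}$ by writing $\mathbb{E}[D_\mathtt{M}(v)\mid\mathbf{O}]=|\mathtt{M}^{-1}(v)|^{-1}\sum_{x\in\mathtt{M}^{-1}(v)}\mathbb{P}(X_v=x\mid\mathbf{O})$ (with the value $0$ when $\mathtt{M}$ maps nothing to $v$), an average of at most $\tilde n$ terms each bounded by $\max_{x\in\mathcal{X}}\mathbb{P}(X_v=x\mid\mathbf{O})$; specializing to $\mathtt{M}=\mathtt{OPT}$ then yields the corollary.
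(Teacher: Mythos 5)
Your proposal is correct. Your primary route goes through Theorem~\ref{thm: optimal estimator}: you fix the observation $\mathbf{O}$, take $\mathtt{OPT}$ to be an $\mathbf{O}$-measurable matching, note that for a matching the denominator in \eqref{eq:precision} is $1$ whenever the numerator is, so $D_\mathtt{OPT}(v)=\mathbbm{1}\{X_v=x^\star\}$, and conclude $\mathbb{E}[D_\mathtt{OPT}(v)\mid\mathbf{O}]=\mathbb{P}(X_v=x^\star\mid\mathbf{O})\le\max_x\mathbb{P}(X_v=x\mid\mathbf{O})$ — a correct observation (the paper itself remarks that precision and recall coincide for matchings), though it carries the small burden of justifying that $\mathtt{OPT}$ may be taken to be a deterministic matching and of treating the unmatched case via the convention, both of which you do. The paper's own proof is instead your closing alternative: for an \emph{arbitrary} mapping $\mathtt{M}$ with $\{x_1,\dots,x_k\}$ mapped to $v$, $\mathbb{E}[D_\mathtt{M}(v)\mid\mathbf{O}]\le\frac{1}{k}\sum_{i=1}^k\mathbb{P}(X_v=x_i\mid\mathbf{O})\le\max_{x\in\mathcal{X}}\mathbb{P}(X_v=x\mid\mathbf{O})$, and then specializes to $\mathtt{OPT}$. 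That version buys generality and independence from Theorem~\ref{thm: optimal estimator} (the bound holds for every estimator, matching or not, with no measurability or tie-handling caveats), while your matching version gives the sharper identity that the conditional expected precision of a matching at $v$ is exactly a single posterior probability. Since you include the paper's argument verbatim as a fallback, there is no gap.
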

(Proof in Section \ref{proof: opt upper bound})

Computing the probabilities in Corollary \ref{cor: opt upper bound} may be challenging, depending on how much information the adversary has. 
Nonetheless, if the adversary can approximate these probabilities with some accuracy (e.g., if it knows the underlying graph $G$), there exist polynomial-time algorithms for computing max-weight matchings \cite{galil1986efficient,bayati2008max}.

The precision-optimal maximum-weight matching does not necessarily maximize recall.
Notice that for any matching, its precision and recall are equal, due to the definitions of precision and recall.
The following theorem characterizes a recall-optimal estimator, which assigns each message $x$ to any server $v$ for which $\prob(X_v = x|\mathbf{O})$ is maximized.

\begin{thm}[Recall-Optimal Estimator] \label{thm: optimal estimator recall}
The \\ recall-optimizing estimator for an adversary with observations $\mathbf{O}=(\mathbf{S},\mathbf{\Gamma})$, is a mapping that assigns each transaction $x\in \mathcal X$ to any server  $v^* \in \argmax_{v\in V_H} \mathbb{P}(X_v = x|\mathbf{O})$. 
\end{thm}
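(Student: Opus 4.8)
The plan is to reduce this to a pointwise, per-transaction optimization by exploiting the fact that recall, unlike precision, imposes no joint constraint across transactions. First I would rewrite the expected recall of a mapping strategy $\mathtt{M}$ as
\[
\mathbf{R}_\mathtt{M} = \frac{1}{\tilde n}\sum_{v\in V_H}\mathbb{P}(\mathtt{M}(X_v)=v) = \frac{1}{\tilde n}\,\mathbb{E}\!\left[\mathbb{E}\!\left[\textstyle\sum_{v\in V_H}\mathbbm{1}\{\mathtt{M}(X_v)=v\}\,\Big|\,\mathbf{O}\right]\right],
\]
where the inner expectation is over the posterior of the ground-truth mapping $\mathbf{X}$ given $\mathbf{O}=(\mathbf{S},\mathbf{\Gamma})$, and the outer one is over $\mathbf{O}$. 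Since a strategy in $\mathcal{M}_{\tau,\sigma}$ is allowed to depend arbitrarily on $\mathbf{O}$, it suffices to maximize the inner conditional expectation for each realization of $\mathbf{O}$ separately; the tower property then upgrades pointwise optimality to optimality of $\mathbf{R}_\mathtt{M}$.

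Next I would re-index the sum over honest servers as a sum over transactions. Because $v\mapsto X_v$ is a bijection between $V_H$ and $\mathcal{X}$ (each honest node emits exactly one transaction), and $\mathtt{M}(X_v)=v$ holds precisely when $\mathtt{M}$ returns transaction $x=X_v$ to its true source, we get
\[
\sum_{v\in V_H}\mathbbm{1}\{\mathtt{M}(X_v)=v\} = \sum_{x\in\mathcal{X}}\mathbbm{1}\{X_{\mathtt{M}(x)}=x\},
\]
so that $\mathbb{E}\!\left[\sum_{v}\mathbbm{1}\{\mathtt{M}(X_v)=v\}\mid\mathbf{O}\right] = \sum_{x\in\mathcal{X}}\mathbb{P}\!\left(X_{\mathtt{M}(x)}=x\mid\mathbf{O}\right)$, where for fixed $\mathbf{O}$ the value $\mathtt{M}(x)$ is a deterministic choice of server.

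Now comes the crucial step, and also the part I expect to need the most care: the terms of this last sum are completely decoupled. A mapping strategy may pick $\mathtt{M}(x)\in V_H$ independently for each transaction $x$, with no injectivity requirement — the recall-optimal estimator, in contrast to the maximum-weight \emph{matching} of Theorem~\ref{thm: optimal estimator}, is permitted to implicate the same server for several transactions. Hence the sum is maximized by maximizing each summand on its own, i.e. by choosing $\mathtt{M}(x)\in\argmax_{v\in V_H}\mathbb{P}(X_v=x\mid\mathbf{O})$ for every $x$, with ties broken arbitrarily (matching the ``any server $v^*$'' phrasing). This is exactly the claimed greedy per-transaction MAP rule.

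Finally I would invoke the law of total expectation: since the argmax rule is optimal for every realization of $\mathbf{O}$, it maximizes $\mathbf{R}_\mathtt{M}$ over all strategies in $\mathcal{M}_{\tau,\sigma}$, establishing the theorem. The only real subtlety is confirming that the model indeed admits non-injective mappings — if $\mathcal{M}_{\tau,\sigma}$ were restricted to matchings, the coordinatewise argmax could force collisions and the decoupling argument would break; under the stated definition of $\mathcal{M}_{\tau,\sigma}$ as arbitrary assignments of transactions to servers this is immediate, and no further work is needed.
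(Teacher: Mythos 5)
Your proof is correct, and it takes a more direct route than the paper's. The paper argues by exchange: starting from an arbitrary mapping $\mathtt{M}$, it takes a message $x_1$ assigned to a server $w \notin \argmax_{v\in V_H} \prob(X_v = x_1 \mid \mathbf{O})$, reassigns $x_1$ to a maximizing server $u$, checks that the combined conditional expected recall at $w$ and $u$ does not decrease, and iterates until every message is assigned to a maximizer---mirroring the structure of the precision proof (Theorem \ref{thm: optimal estimator}), where an exchange argument is genuinely needed to arrive at a matching. You instead rewrite the conditional expected recall as $\sum_{x\in\mathcal{X}}\prob(X_{\mathtt{M}(x)}=x\mid\mathbf{O})$ and observe that, since $\mathcal{M}_{\tau,\sigma}$ imposes no injectivity constraint, the summands decouple and can be maximized term by term, with the tower property lifting pointwise optimality in $\mathbf{O}$ to optimality of $\mathbf{R}_\mathtt{M}$. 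The one caveat you flag---whether non-injective mappings are admissible---is resolved in your favor: the model explicitly allows them (e.g., the estimator in Section \ref{sec:anon_metric} that assigns all $\tilde n$ transactions to a single server), and the paper's own precision proof begins from mappings that are not matchings. What your decomposition buys is transparency: it isolates exactly why recall, unlike precision, admits a per-transaction MAP rule (additivity plus the absence of any coupling constraint), whereas the paper's exchange argument emphasizes the parallel with, and contrast to, the matching-constrained precision case.
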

(Proof in Section \ref{proof: optimal estimator recall})

The first-spy estimator is an instance of a recall-optimal estimator for spreading models in which the exit node to the first-spy is the most likely source. 
Moreover, Theorem \ref{thm: optimal estimator recall} implies that a precision-optimal, maximum-weight matching is only recall-optimal if it also maps each message to its most likely source, elementwise.
For example, if  $k$ servers are equally likely sources for $k$ messages, then the precision-optimal matching estimator is also recall-optimal.

\vspace{0.1in}
\noindent \textbf{Summary.} This section provides fundamental limits on both precision and recall, as well as detailing estimators that optimize precision (Thm \ref{thm: optimal estimator}) and recall (Thm. \ref{thm: optimal estimator recall}), respectively.
These fundamental limits and estimators will be useful benchmarks as we analyze the precision-recall regions for networking policies in Sections \ref{sec:algos} and \ref{sec:main}.

\section{Baseline Algorithms}
\label{sec:algos}

With the fundamental bounds from Section \ref{sec:bounds}, we now tackle our main problem: 
designing a networking policy with a minimal detection region.
A key message of our work is that statistical anonymity requires \emph{mixing} of messages: 
users should spread their own messages and those of their peers in a way that is difficult to distinguish.
Degree of mixing depends on three key properties of a networking policy: (1) the spreading protocol, (2) the topology of the network, and (3) the dynamicity of the network (i.e., how often the P2P graph changes). 
For example, the current Bitcoin network uses diffusion spreading over a static, roughly 16-regular topology.
This policy has poor mixing---i.e., a large detection region---because different nodes have unique spreading patterns and can therefore be deanonymized. 

In this section, we first identify a taxonomy of networking policies, based on the  properties above. 
We then systematically evaluate the anonymity of various first-order, natural networking policies from this taxonomy. 
We show that most of these baseline policies have poor anonymity guarantees, and we extract rules of thumb for improving a policy's anonymity. 
These rules of thumb will build the groundwork for our main result, \Algo, presented in Section \ref{sec:main}.

\subsection{Taxonomy of Networking Policies}
Our taxonomy has three axes: spreading protocol, topology, and dynamicity. 
We consider multiple categories along each axis.

\noindent \textbf{Spreading protocol.}
The space of spreading protocols is vast. 
In this work, we consider a few natural, first-order spreading policies, and also propose a new protocol called \algo~spreading. 

Perhaps the most natural spreading strategy is \textbf{flooding}, where messages are propagated with a fixed delay to all neighbors. 
A slightly refined version is \textbf{diffusion}, which adds independent randomness to the transmission delays of flooding.
Diffusion is explained in Section \ref{sec:trans_model}.
Flooding and diffusion reflect the current status quo in the Bitcoin network. 

Given that our goal is to provide  anonymity, another natural strategy is to forward a message to a randomly-chosen node, which then runs diffusion or flooding. 
We call this spreading protocol \textbf{diffusion-by-proxy}.


Finally, we propose in this paper a new protocol called \textbf{\algo~spreading}.
Dandelion spreading forwards each message on a randomly-selected line before diffusing it to the rest of the network.
Since \algo~spreading is a comparatively new protocol (not a first-order baseline), we defer a detailed discussion to Section \ref{sec:main}.

\vspace{0.05in}
\noindent \textbf{Topology.} We are interested in topologies that are simultaneously simple to construct, analyzable, and good for anonymity. 
We therefore limit ourselves to a set of canonical graph models: lines, trees, $d$-regular graphs, and complete graphs. 
These categories are not mutually exclusive;
lines are a special case of both trees and regular graphs (we consider lines and cycles interchangeably), and complete graphs are a special case of regular graphs.

\vspace{0.05in}
\noindent \textbf{Dynamicity.} Many network-based deanonymization attacks use partial or full knowledge of the connectivity graph between nodes \cite{SZ11a}.
We assume that the network can change the graph at varying rates to control the adversary's ability to learn it. 
We consider two extremes on this spectrum: static graphs and dynamic graphs.
In static graphs, the network never changes the graph, so the adversary learns it fully over time.
In dynamic graphs, the graph is changed at a rate such that the adversary only knows its local neighborhood at any given point in time.

\vspace{0.1in}

\noindent In the remainder of this section, we first explore the regions of our taxonomy by studying three baseline networking policies: flooding, diffusion, and diffusion-by-proxy.
Although none of these baselines has satisfactory anonymity guarantees, the associated analysis provides valuable intuition that helps us design better policies in Section \ref{sec:main}.

\subsection{Flooding} 
\label{sec: trees}
To model flooding, we assume that messages propagate  along each graph edge with a deterministic delay, 
and nodes forward incoming messages to their neighbors with a constant delay. 
On undirected topologies, flooding has poor source-hiding due to symmetry and the deterministic spreading scheme.
However, it is unclear if directed topologies fare better.
We begin by showing that flooding has poor performance on directed, static, $d$-regular graphs.
\begin{prop} \label{thm:flooding static regular}
The expected precision of flooding on a static $d$-regular graph is at least $\mathbf{D}_\mathtt{OPT} \geq (1-(1-p)^d)\geq p$.
\end{prop}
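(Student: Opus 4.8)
The plan is to lower-bound $\mathbf D_\mathtt{OPT}$ by exhibiting one matching estimator and bounding its expected precision from below; since $\mathbf D_\mathtt{OPT}=\max_{\mathtt M}\mathbf D_\mathtt{M}$, this suffices. The estimator exploits the fact that, under flooding, an honest node that has an adversarial out-neighbor hands its \emph{own} transaction to a spy in a single hop, making that transaction easy to attribute.

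\emph{Step 1 (directly exposed nodes).} Let $V_1$ be the set of honest nodes that have at least one out-neighbor in $V_A$. The $np$ spies occupy a uniformly random subset of the node positions, so for a fixed honest $v$ its $d$ out-neighbors are $d$ distinct positions among the other $n-1$, of which $np$ are spies; a without-replacement computation gives
\[
\prob[v\in V_1]=1-\frac{\binom{n-1-np}{d}}{\binom{n-1}{d}}\ \ge\ 1-(1-p)^d ,
\]
because each factor $\tfrac{n-1-np-i}{\,n-1-i\,}$ is at most $1-p$. Summing over $v\in V_H$ gives $\E[\,|V_1|\,]\ge\tilde n\,(1-(1-p)^d)$.

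\emph{Step 2 (an estimator that is correct on $V_1$).} Since the graph is static, the adversary eventually learns $G$ and the identities of the spies, hence the distance vector $\phi_z=(\mathrm{dist}(z,u))_{u\in V_A}$ for every honest $z$. Under deterministic flooding with a fixed per-hop delay $\delta>0$, transaction $X_w$ (started at an unknown time $t_w$) reaches spy $u$ at time $t_w+\delta\,\mathrm{dist}(w,u)$, and the adversary also logs which honest node forwarded it. Thus, from the spy timestamps, the adversary recovers for each transaction $x$ its vector of pairwise inter-spy delays, which equals $\delta$ times $\phi_w$ minus a constant vector when $x=X_w$; the true source is therefore always in the candidate set $C(x)=\{z\in V_H:\ \phi_z\text{ matches the observed delays of }x\text{ up to a constant shift}\}$. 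For $v\in V_1$, $v$ lies at distance $1$ from some spy, the smallest distance an honest node can have to a spy, so generically $\phi_v$ is not a constant shift of $\phi_z$ for any other honest $z$; equivalently $\prob(X_v=x\mid\mathbf O)\in\{0,1\}$ and the adversary pins down $X_v$ exactly. Let $\mathtt M$ map $X_v\mapsto v$ for all $v\in V_1$ (these edges are vertex-disjoint in the bipartite graph $(V_H,\mathcal X)$, hence form a partial matching) and complete it to a perfect matching on the remaining servers and transactions arbitrarily. Alternatively, Theorem~\ref{thm: optimal estimator} says the precision-optimal estimator is a maximum-weight matching with edge weights $\prob(X_v=x\mid\mathbf O)$; it must include every weight-$1$ edge $(v,X_v)$ with $v\in V_1$, so the conclusion below applies directly to $\mathbf D_\mathtt{OPT}$.

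\emph{Step 3 (precision bound) and the main obstacle.} For any matching, precision $=$ recall $=$ (number of correctly matched transactions)$/\tilde n$; since $\mathtt M$ is correct on every $v\in V_1$,
\[
\mathbf D_\mathtt{OPT}\ \ge\ \mathbf D_\mathtt{M}\ \ge\ \frac{\E[\,|V_1|\,]}{\tilde n}\ \ge\ 1-(1-p)^d\ \ge\ p ,
\]
where the last inequality uses $(1-p)^d\le 1-p$ for $d\ge1$. The delicate point is Step~2: showing that each $v\in V_1$ is actually distinguishable, i.e.\ that no other honest node has the same distance profile to the spy set up to an additive constant. This is where one must use structural properties of the random $d$-regular graph --- being exactly equidistant to all $\Theta(n)$ spies as another node is a highly atypical configuration --- so that the degenerate cases contribute only a lower-order correction that does not affect the stated bound (or are dispatched by a more careful matching argument inside each equidistance class). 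The remaining ingredients --- the hypergeometric estimate of $\E[|V_1|]$ and the identity precision $=$ recall for matchings --- are routine.
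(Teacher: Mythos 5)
Your overall skeleton---count the directly exposed nodes $V_1$ (expected fraction at least $1-(1-p)^d$), exhibit a matching that is correct on them, and use precision $=$ recall for matchings---matches the paper's proof, and Steps 1 and 3 are fine. The genuine gap is exactly the point you flag in Step 2: the claim that every $v\in V_1$ is pinned down exactly (i.e., $\prob(X_v=x\mid\mathbf O)\in\{0,1\}$) is neither proved nor, in fact, needed, and as stated it does not hold in the required generality. The randomness in Proposition \ref{thm:flooding static regular} is over the placement of the $np$ spies (and the message-to-source assignment) on a given $d$-regular topology, so you cannot lean on ``structural properties of the random $d$-regular graph'': on symmetric topologies (a cycle, or any vertex-transitive regular graph) two honest neighbors of the same spy can share the same distance profile to the entire spy set, and then no estimator identifies their messages with certainty, so the weight-$1$-edge argument via Theorem \ref{thm: optimal estimator} does not apply. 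Your fallback---``a more careful matching argument inside each equidistance class''---does not automatically rescue the bound either: a random matching inside a class yields only one expected correct assignment per class, so if several members of $V_1$ collapse into a single equidistance class the count drops below $|V_1|$ and the exact $1-(1-p)^d$ bound is lost.

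The paper closes this hole without any distinguishability claim by exploiting the exit-node information that you mention but do not use. Under flooding, whenever $v$ has an adversarial out-neighbor, $X_v$ is delivered to the adversary exclusively by $v$ itself (no eligible exit node is strictly closer to $v$ than $v$), so from the known static graph the adversary can deterministically partition sources into ``wards'' indexed by their unique exit node, and each $v\in V_1$ lies in its own ward $W_v$. A random matching between the messages uniquely delivered by an exit node and the members of its ward is correct on each member with probability $1/|W_v|$, i.e., one expected correct match per nonempty ward, and there are at least $|V_1|$ nonempty wards; hence $\tilde n\,\mathbf{D}_\mathtt{M}\geq \E[|V_1|]\geq \tilde n\left(1-(1-p)^d\right)$ exactly, with no genericity assumption and for every $d$-regular topology. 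If you replace your Step 2 by this ward/random-matching argument (keeping your hypergeometric Step 1 and your Step 3), the proof is complete.
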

(Proof in Section \ref{proof:flooding static regular})

Flooding performs poorly on static regular graphs because each honest node has a unique spreading ``timestamp signature", and the adversary can predict these signatures. 
That is, if node $v$ is the source, then the adversarial nodes receive all messages from $v$ in a deterministic timing pattern. 
Moreover, the adversary can \emph{predict} this pattern from the structure of the graph, due to the fixed nature of flooding.

This reasoning suggests that if the adversary does not know the graph, it cannot predict nodes' spreading patterns, and therefore cannot deanonymize nodes.
However, the following proposition shows that even when the graph is dynamic, the adversary can achieve a high precision. 
\begin{prop} \label{thm:flooding dynamic regular}
Flooding precision for dynamic $d$-regular graphs is bounded as $\mathbf{D}_\mathtt{OPT} \geq cp$ for some constant $c>0$ independent of $p$. 
\end{prop}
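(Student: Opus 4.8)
The plan is to lower-bound $\mathbf{D}_\mathtt{OPT}$ by analyzing one concrete, purely local estimator: the \emph{first-spy estimator}, which assigns each transaction $x$ to the honest node that relayed $x$ to whichever adversarial node received $x$ earliest. Since $\mathbf{D}_\mathtt{OPT}$ dominates the precision of every fixed estimator, it suffices to show $\mathbf{D}_\mathtt{M}\ge cp$ for this one. To capture the dynamic regime, in which the adversary only ever sees $\Gamma(V_A)$, I would let each transaction $X_w$ flood over an independently sampled random $d$-regular labeled graph $G_w$, all sharing the adversarial set $V_A$ with $|V_A|=np$.

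The argument has two halves. First, let $A_v$ be the event that the source $v$ of $X_v$ has an adversarial neighbor in $G_v$. Because flooding uses deterministic, equal per-hop delays, every spy sits at least one hop from the source, so a spy adjacent to $v$ receives $X_v$ one hop after $v$ emits it -- the earliest time possible -- and, being one hop away, necessarily receives it directly from $v$; hence on $A_v$ the first-spy estimator outputs $\mathtt{M}(X_v)=v$. As $v$ has $d$ neighbors and a $p$-fraction of nodes are adversarial, $\prob(A_v)\ge 1-(1-p)^d\ge p$, the finite-$n$ correction to the ``neighbors look uniform'' heuristic being $o(1)$ and absorbed into $c$. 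Second, macro-averaging gives $\mathbf{D}_\mathtt{M}(v)=\E[\mathbbm{1}\{\mathtt{M}(X_v)=v\}/N_v]$ with $N_v=\#\{w\in V_H:\mathtt{M}(X_w)=v\}$; since $\mathbbm{1}\{\mathtt{M}(X_v)=v\}\ge\mathbbm{1}_{A_v}$ and $N_v=1+N'_v$ on $A_v$ (with $N'_v=\#\{w\neq v:\mathtt{M}(X_w)=v\}$), we get $\mathbf{D}_\mathtt{M}(v)\ge\E[\mathbbm{1}_{A_v}/(1+N'_v)]$. Here $A_v$ is a function of $G_v$ and $N'_v$ a function of $(G_w:w\neq v)$, so the two are independent given $V_A$; and by the vertex-symmetry of the random-regular model, $\prob(A_v\mid V_A)$ and $\E[N'_v\mid V_A]$ do not depend on which vertices are adversarial. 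Convexity of $t\mapsto 1/(1+t)$ then gives
\[
\mathbf{D}_\mathtt{M}(v)\ \ge\ \E\!\left[\frac{\mathbbm{1}_{A_v}}{1+N'_v}\right]\ \ge\ \frac{\prob(A_v)}{1+\E[N'_v]}.
\]

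Everything now reduces to $\E[N'_v]=O(1)$, a bound independent of $p$ and $n$. By symmetry $\E[N'_v]=(\tilde n-1)q$, where $q$ is the probability that a fixed honest node is the honest relay delivering a fixed transaction to a first spy; summing $q$ over all honest nodes shows $\E[N'_v]$ is at most the expected number of honest first-spy relays of a single transaction, which is in turn at most the expected number of adversarial nodes lying at minimum hop-distance from that transaction's source. A first-moment computation on the locally tree-like neighborhood of a random $d$-regular graph bounds this last quantity: the smallest ball around the source that contains a spy has $O(1/p)$ vertices in expectation and hence contains $O(1)$ spies. Plugging back, and using symmetry over the choice of source, $\mathbf{D}_\mathtt{OPT}\ge\mathbf{D}_\mathtt{M}=\mathbf{D}_\mathtt{M}(v)\ge \prob(A_v)/(1+\E[N'_v])\ge cp$ for an absolute constant $c>0$.

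The estimate $\E[N'_v]=O(1)$ is the crux, and the step I expect to be most delicate. Conceptually it says that, even with no knowledge of the graph, \emph{no honest node is the last hop before the nearest spy for more than a bounded number of the independently-routed transactions}; this is exactly what separates the true $\Theta(p)$ behavior from the trivial $\mathbf{D}_\mathtt{OPT}\ge p^2$ floor that would result if a node could be blamed for $\Theta(1/p)$ transactions. Making it rigorous needs the standard but slightly technical control of distances and shell sizes in random regular graphs -- in particular ruling out that an atypically large spy-free ball around a source inflates the count -- together with the bookkeeping needed to keep the independence used in the convexity step valid despite all transactions sharing the single adversarial set $V_A$, which is why that step is conditioned on $V_A$ explicitly.
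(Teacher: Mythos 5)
The weak link is not the convexity step you flag but the modeling assumption it rests on: you let every transaction flood over an independently resampled $d$-regular graph $G_w$. In the paper's taxonomy, ``dynamic'' only limits what the adversary \emph{knows} (it observes $\Gamma(V_A)$ and timestamps, not $G$); concurrent transactions still propagate over the same graph realization. This is how the paper treats every other dynamic topology --- in the dynamic-line analysis behind Theorem 6 the set $S_v$ contains the messages of the entire ward behind $v$, which is only possible if the messages share one graph, and in Section 6 the graph is redrawn on a timescale of minutes while transactions arrive several per second. Under that shared-graph model your crux estimate $\E[N'_v]=O(1)$ fails: conditioned on $A_v$ (i.e., $v$ sits next to a spy $s$), the first-spy rule blames $v$ for every source in the branch behind $v$ whose nearest spy is $s$, and this ``spy-Voronoi'' funnel has expected size $\Theta(1/p)$, since the nearest spy to a typical node lies at distance about $\log_{d-1}(1/p)$ and $\Theta(1/p)$ honest nodes share it. Consequently $\E\left[\mathbbm{1}_{A_v}/(1+N'_v)\right]$ is of order $p^2$ up to logarithmic factors, i.e.\ the plain first-spy estimator only recovers the trivial $p^2$ floor --- precisely the distinction your own closing paragraph identifies as the crux.

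The paper closes this hole by using strictly more than first-spy information. Its estimator (Algorithm 3 in the appendix) counts how many adversarial nodes receive a message exactly $\frac{1}{4}\log n - 1$ rounds after its first arrival at any spy: because flooding is deterministic and the neighborhood of the source is almost surely tree-like, a message whose true source is the reporting neighbor $v$ yields a count below $2pn^{1/4}$ w.h.p.\ (Chernoff), whereas any message originating deeper behind $v$ has already spread for at least two extra rounds and yields a count above the threshold; such messages are never assigned to $v$. Hence every spy-adjacent honest node is assigned essentially only its own message (precision $\approx 1$), and since a fraction at least $p$ of honest nodes are spy-adjacent, this gives $\mathbf{D}_\mathtt{OPT}\geq cp$ even though all transactions share the graph. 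Your argument is sound in the per-transaction-fresh-graph model (where it essentially parallels the paper's Proposition 3 for diffusion-by-proxy), but that model is more favorable than the one the proposition is about, so as written the proposal does not establish the stated bound.
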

(Proof sketch in Section \ref{proof:flooding dynamic regular})

%

This result highlights that even if the adversary cannot predict the exact timestamp pattern for a given node, it can infer certain statistical properties of the pattern that are sufficient for deanonymization. 
In short, as long as the topology allows messages to flood in more than one direction, the adversary can use the statistics of observed timestamp signatures to infer the source of a message.

\vspace{0.05in}
\noindent \textbf{Lesson:} Do not flood content in multiple directions on the graph at the same rate.

\subsection{Diffusion}
\label{sec:diff}
Diffusion is a natural successor to flooding; instead of using deterministic delays, it uses random ones.
By introducing uncertainty into the adversary's timing estimates, diffusion reduces the adversary's overall precision and recall.
However, much research in recent years has shown that the source of a diffusion process can nonetheless be identified reliably \cite{SZ11a,SZ12,FC12,WDZT14,LMOZ13,PVF12,PTV12,ZY13}. 
Although there are no theoretical results on the precision or recall under our particular adversarial model, 
several heuristic estimation algorithms
are able to identify the source of a diffusion process on many classes of graphs \cite{PTV12,ZY13}. 
Moreover, theoretical results exist on other adversarial models \cite{SZ11a,SZ12,WDZT14}.
All of these results rely on the intuition that diffusion spreads content symmetrically. 
Because of this, the source node appears at the center of the adversary's observed spreading pattern, and can be identified. 
Diffusion is therefore not a satisfactory solution to this problem.

\vspace{0.05in}
\noindent \textbf{Lesson:} Random forwarding delays are not powerful enough to provide anonymity against spreading protocols that spread content symmetrically. 

\subsection{Diffusion-by-Proxy}
\label{sec:diff_proxy}

The takeaway message from diffusion and flooding is that symmetry of spreading leads to deanonymization. 
To counter this, we must break the symmetry of diffusion. 
A natural strategy for breaking symmetry about the source is to ask someone else to spread the message.
That is, for every transaction, the source node chooses a peer uniformly at random from the pool of all nodes.
It transmits the message to that node, who then broadcasts the message. 
More generally, the network could forward each message a few hops (each hop choosing a new node at random) before diffusing it.
We call this approach diffusion-by-proxy, and it is conceptually equivalent to propagating over a line that changes for every transmission. 
Diffusion-by-proxy might seem like it should have low precision because the graph is so dynamic, but that intuition turns out to be false.

\begin{prop} \label{thm:complete}
The expected first-spy precision of diffusion-by-proxy is bounded as 
\begin{align}
\mathbf{D}_\mathtt{FS} \geq \frac{p}{1-p}(1-e^{p-1}). 
\end{align}
\end{prop}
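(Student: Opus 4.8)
The plan is to reduce, by the symmetry of the model, to a single honest node $v$; to write its first-spy precision as a ``direct-catch'' probability times the reciprocal of the number of transactions the estimator attributes to $v$; and then to control that attribution count through a Poisson comparison together with the convexity of $\mu\mapsto(1-e^{-\mu})/\mu$.

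First I would invoke relabelling symmetry: the graph law, the uniform placement of the $np$ spies, the source-to-transaction assignment, and the uniform proxy draws are all invariant under permuting honest nodes, so $\mathbf{D}_\mathtt{FS}=\E[D_\mathtt{FS}(v)]$ for an arbitrary fixed $v\in V_H$. Under (single-hop) diffusion-by-proxy, $v$ hands its own transaction $X_v$ only to the peer it picks at random, so $\mathtt{FS}(X_v)=v$ holds exactly when that peer lies in $V_A$; call this event $A_v$. Writing $N_v=\sum_{w\in V_H}\mathbbm{1}\{\mathtt{FS}(X_w)=v\}$ for the number of transactions pinned on $v$ and $K_v=\sum_{w\in V_H\setminus\{v\}}\mathbbm{1}\{\mathtt{FS}(X_w)=v\}$ for the spurious co-attributions, we have $N_v=\mathbbm{1}\{\mathtt{FS}(X_v)=v\}+K_v$, hence $D_\mathtt{FS}(v)\ge\mathbbm{1}_{A_v}/(1+K_v)$. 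Conditionally on $G$ and $V_A$ the event $A_v$ (a function of $v$'s own proxy draw) is independent of $K_v$ (a function of the other transactions' propagations) and has conditional probability $np/(n-1)$ regardless of $(G,V_A)$; therefore $\mathbf{D}_\mathtt{FS}\ge\E[\mathbbm{1}_{A_v}/(1+K_v)]=\tfrac{np}{n-1}\,\E[1/(1+K_v)]\ge p\,\E[1/(1+K_v)]$.

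It then remains to show $\E[1/(1+K_v)]\ge(1-e^{p-1})/(1-p)$. Conditioning again on $(G,V_A)$, the other transactions propagate independently, so the events $\{\mathtt{FS}(X_w)=v\}$, $w\neq v$, are conditionally independent, each of (essentially) the same tiny conditional probability $\theta_v(G,V_A)=O(1/n)$; a Poisson approximation then replaces $K_v\mid(G,V_A)$ by $\mathrm{Pois}(\mu_v)$ with $\mu_v=(\tilde n-1)\theta_v(G,V_A)$, and since $\E[1/(1+\mathrm{Pois}(\mu))]=(1-e^{-\mu})/\mu=:g(\mu)=\int_0^1 e^{-\mu t}\,dt$, we get $\E[1/(1+K_v)\mid G,V_A]\approx g(\mu_v)$. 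The function $g$ is an average of the convex decreasing maps $\mu\mapsto e^{-\mu t}$, hence convex and decreasing, so Jensen gives $\E[1/(1+K_v)]\gtrsim\E[g(\mu_v)]\ge g(\E[\mu_v])$. Finally, honest-node symmetry forces $\prob(\mathtt{FS}(X_w)=v)=(1-\prob(\mathtt{FS}(X_w)=w))/(\tilde n-1)\le 1/(n-1)$ for $w\neq v$, so $\E[\mu_v]=\E[K_v]=\sum_{w\neq v}\prob(\mathtt{FS}(X_w)=v)\le(\tilde n-1)/(n-1)\le 1-p$; as $g$ decreases this yields $g(\E[\mu_v])\ge g(1-p)=(1-e^{p-1})/(1-p)$, and chaining the inequalities gives $\mathbf{D}_\mathtt{FS}\ge\frac{p}{1-p}(1-e^{p-1})$.

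The hard part is the Poisson step: one must quantify the approximation of $K_v\mid(G,V_A)$ by $\mathrm{Pois}(\mu_v)$, showing both that the per-transaction attribution probabilities are uniformly $O(1/n)$ and that ``$\mathtt{FS}(X_w)=v$'' has (up to $O(1/n)$) the same probability for every $w\neq v$, so that the induced errors are of lower order. This is genuinely delicate because a Binomial is convex-dominated by the Poisson of equal mean, so the Poisson replacement is not literally free; the honest statement carries an $o(1)$ slack, which is absorbed using the $\Theta(1/n)$ room already present in $\prob(A_v)=np/(n-1)>p$ and in $\E[K_v]\le(\tilde n-1)/(n-1)<1-p$. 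A second, more routine step is checking that the relabelling symmetry genuinely applies to diffusion-by-proxy and its multi-hop variant, which is immediate since proxies are uniform and the spreading dynamics ignore node identities.
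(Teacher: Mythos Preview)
Your overall decomposition --- reduce by symmetry to one node $v$, factor the expected precision as a ``direct-catch'' probability times $\E[1/(1+K_v)]$, and bound the two pieces separately --- is exactly the paper's strategy. The difficulty you flag, however, is self-inflicted.

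The Poisson replacement is a genuine gap. You correctly note that the Binomial is convex-dominated by the Poisson of equal mean, so for the convex map $x\mapsto 1/(1+x)$ the Poisson substitution yields an \emph{upper} bound on $\E[1/(1+K_v)]$, whereas you need a lower bound. Your proposed patch --- absorb the error into the $\Theta(1/n)$ slack from $\prob(A_v)=np/(n-1)>p$ and $\E[K_v]<1-p$ --- is not actually argued, and there is no reason the constants should cooperate. The Jensen step after it is fine, but it is being applied to the wrong object.

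The paper sidesteps all of this. In its diffusion-by-proxy model each hop is a fresh uniform draw from all $n$ nodes, continued until an adversary is hit; a direct path-counting computation gives $\prob(\mathcal E_{u,v})=1/n$ for every honest $u\neq v$, \emph{exactly} and regardless of which nodes are adversarial. Hence $K_v$ is exactly $\mathrm{Binom}(\tilde n-1,\,1/n)$ --- there is no graph $G$ to condition on and no ``essentially the same $\theta_v$'' to worry about. The Binomial has the closed form
\[
\E\!\left[\frac{1}{1+\mathrm{Binom}(m,q)}\right]=\frac{1-(1-q)^{m+1}}{(m+1)q},
\]
and the elementary bound $(1-q)^{m+1}\le e^{-(m+1)q}$ with $m=\tilde n-1$, $q=1/n$ gives
\[
\E\!\left[\frac{1}{1+K_v}\right]\ \ge\ \frac{1-e^{-\tilde n/n}}{\tilde n/n}\ =\ \frac{1-e^{p-1}}{1-p}.
\]
No Poisson comparison, no Jensen, no $o(1)$ bookkeeping. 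Your symmetry and independence observations are correct; replacing the Poisson step by the Binomial identity above converts your outline into a complete proof matching the paper's.
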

(Proof in Section \ref{proof:complete})

Intuitively, this statement holds because each node delivers its own message to the adversary with probability $p$, and few other nodes report to the adversary over the same edge. 
So even though diffusion-by-proxy breaks the symmetry of diffusion, 
it also provides many paths for messages to reach the adversary. 
Since there are many total paths to the adversary, each path sees (relatively) less traffic, which in turn reduces the amount of mixing that happens. 
A simple countermeasure is to reduce the number of paths over which messages can flow. 

\vspace{0.05in}
\noindent \textbf{Lesson:} There is anonymity in numbers; dense graphs achieve poor mixing because they do not constrain messages to flow over the same paths.

\section{Main Result: Dandelion}
\label{sec:main}
The baseline spreading protocols from Section \ref{sec:algos} provide us with a key  guideline for building more anonymous networking policies: spread asymmetrically over a sparse graph.
In this vein, we propose a new protocol: \textbf{dandelion spreading}. 
While the basic intuition of \algo~spreading is used in several point-to-point anonymous communication systems \cite{tor,reiter1998crowds}, 
it has not been formally studied in the context of anonymous broadcast messaging. 

\begin{algorithm}
\DontPrintSemicolon
\KwIn{Message $X_v$, source $v$, anonymity graph $G$, spreading graph $H$, parameter $q\in (0,1)$}
anonPhase $\gets$ True \;
head $\gets v$ \;
recipients $\gets  \{v \}$ \;
\While{anonPhase} {
    \tcc{forward message to random node} 
    target $\sim$ Unif$(\mathcal N_{out}(G, \text{head}))$\;
    recipients $\gets$  recipients $\cup \{X_v\}$ from head to target \;
    head $\gets$ target \;
    $u \sim$ Unif$([0,1])$ \;
    \If{$u \leq q$} {
      anonPhase $\gets$ False \;
    }
}
\tcc{Run diffusion over $H$ from `head'} 
{\sc Diffusion}$(X_v, \text{head}, H)$
\caption{{\sc Dandelion Spreading}. $\mathcal N_{out}(G,v)$ denotes the out-neighbors of node $v$ on directed graph $G$.}
\label{algo:dandelion}
\end{algorithm}

Dandelion spreading consists of an anonymity phase and a spreading phase (Algorithm \ref{algo:dandelion}). 
In the anonymity phase, the protocol spreads the message over a randomly-selected line for a random number of hops;
in the spreading phase, the message is broadcast using diffusion until the whole network receives the message.
In general, the two phases can occur over different graphs. 
In this work, we will design a (possibly time-varying) graph $G$ over which the anonymity phase occurs, and we will assume the spreading phase occurs over the current Bitcoin P2P network $H$.
The name `\algo~spreading' reflects the spreading pattern's resemblance to a dandelion seed head (Figure \ref{fig:dandelion}).

\begin{figure}[h]
    \centering
  \includegraphics[width=.3\textwidth]{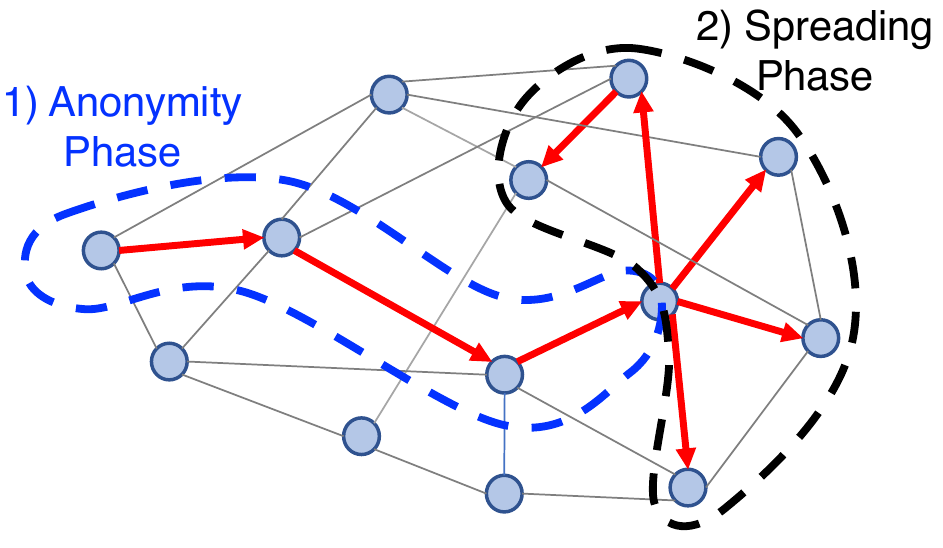}
  \caption{Dandelion spreading forwards a message in a line over the graph, then broadcasts it using diffusion. Here both phases occur over the same graph, i.e., $H=G$.}
  \label{fig:dandelion}
\end{figure}

The two-phase nature of \algo~spreading  allows us to separately  design networking policies that optimize anonymity and latency.
This separated architecture is not necessarily optimal in terms of a latency-anonymity tradeoff; exploring that tradeoff is an interesting direction for future work.
However, diffusion is known to have good spreading properties \cite{bartlett1956deterministic}, but poor anonymity properties \cite{SZ11a}.
Therefore, we combine it with an anonymity phase of constant duration (in an order sense), such that the average latency is increased by a small, bounded factor.
We subsequently assume that the spreading phase can be fully deanonymized; 
i.e., the node that launches the diffusion process can be identified.
As such, we only need to analyze the precision and recall of the anonymity phase.
This assumption does not weaken our anonymity guarantees since it gives the adversary more power.


A key observation for this analysis is that the anonymity phase of \algo~spreading largely \emph{removes the need for exact timestamps}.
For honest server $v$, let $S'_v\subseteq S_v$ denote a trimmed down version of $S_v$, in which we retain only those transaction log tuples $(x,u,T_u(x))$ that correspond to the first time transaction $x$ was received by an adversary from any honest node. 
That is, we only keep a tuple if $u$ was the first spy to see message $x$, and $x$ was delivered to $u$ by honest exit node $v$.
As before, let $\mathbf{S}'$ denote the vector of all $S'_v$'s. 
Then, with \algo~spreading, it holds that $\mathbf{X} - (\mathbf{S}',\mathbf{\Gamma}) - \mathbf{S}$ forms a Markov chain. 
Therefore it is sufficient to use only the first observation information $\mathbf{S}'$ instead of $\mathbf{S}$ for computing transaction likelihoods. 
In fact, the sufficient statistic $\mathbf{S}'$ can be further simplified by ignoring the timestamp coordinate $T_u(x)$ in the tuples. 
This is possible due to our assumption that the transactions' originating times  are  unknown \emph{a priori} to the adversary, which removes the observed timestamps from any temporal reference frame.
Hence, in the remainder of this paper, with a slight abuse of notation, we use $S_v$, for honest server $v$, to denote the set of message tuples $(x,u)$ such that (i) $u$ was the first adversarial node to receive $x$ and (ii) $u$ received $x$ from $v$. $\mathbf{S}$ denotes the vector of $S_v$'s.

Note that a similar argument does not hold for spreading mechanisms like flooding or diffusion, in which multiple independent timestamps across different nodes (i.e., not just the first observation of a message)  are used to compute likelihoods. 
The diversity of such observations allows the estimator to compare timestamps across nodes, thus making them useful for detection. 

\vspace{0.01in}
We begin by showing that the maximum recall for \algo~spreading over any connected topology is $p$, the lower bound from Theorem \ref{thm:lower_bounds_fs}.
\begin{thm} \label{thm:dandelion recall}
The expected maximum recall for \algo~ \\ spreading on any connected graph of $n$ nodes with a fraction $p$ of adversaries is $\mathbf{R}_\mathtt{OPT} = p$. 
\end{thm}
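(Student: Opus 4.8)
The plan is to establish $\mathbf R_\mathtt{OPT}\ge p$ and $\mathbf R_\mathtt{OPT}\le p$ separately. The first is immediate from Theorem~\ref{thm:lower_bounds_fs}, which holds for any spreading policy on any network. For the second I would first invoke the sufficiency reduction stated just before the theorem: it is enough to let the adversary's observation be $\mathbf O=(\mathbf S,\mathbf\Gamma)$, where (after discarding timestamps) $\mathbf S$ records, for each transaction $x$, only the honest \emph{exit node} $e_x$ that first handed $x$ to an adversary together with the identity of that adversary, and $\mathbf\Gamma$ is the adversary's knowledge of the (possibly time-varying) anonymity graph. By Theorem~\ref{thm: optimal estimator recall} the recall-optimal estimator maps each $x$ to some $\hat v(x)\in\argmax_{v\in V_H}\mathbb P(X_v=x\mid\mathbf O)$, and since $\hat v(x)$ is $\mathbf O$-measurable,
\[
\mathbf R_\mathtt{OPT}=\frac1{\tilde n}\sum_{x\in\mathcal X}\mathbb P\big(\hat v(x)=\mathrm{src}(x)\big)
=\frac1{\tilde n}\sum_{x\in\mathcal X}\mathbb E\Big[\max_{v\in V_H}\mathbb P(X_v=x\mid\mathbf O)\Big].
\]

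The conceptual core is the claim that for \algo~spreading the exit node is always a maximum-a-posteriori source, i.e.\ $\max_{v}\mathbb P(X_v=x\mid\mathbf O)=\mathbb P(X_{e_x}=x\mid\mathbf O)$ whenever $e_x$ exists. To see this, condition on $\mathbf O$ and write $\mathbb P(\mathbf O\mid X_v=x)$ as a sum over the admissible hidden prefixes of $x$'s line, namely honest walk trajectories from $v$ that terminate at $e_x$; since a node never re-relays a message it already holds, these prefixes are self-avoiding. For $v=e_x$ the only such prefix is the empty one (weight $1$, up to factors common to all candidates), while for $v\ne e_x$ each admissible prefix has length $\ell\ge1$ and carries weight equal to its random-walk probability times $(1-q)^{\ell}$; because the events ``the walk follows a self-avoiding honest path first hitting $e_x$ at step $\ell$'' are disjoint across $\ell$, the total weight for $v\ne e_x$ is strictly less than $\sum_{\ell\ge1}\mathbb P(\text{walk from }v\text{ hits }e_x\text{ along such a path})\le1$. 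Hence $e_x$ is the posterior maximizer, so the first-spy estimator is recall-optimal on those transactions whose line reaches an adversary. For the remaining transactions the adversary's only trace is the diffusion entry point --- the endpoint of a (mixing) random walk from the source --- which is asymptotically independent of the source and contributes a vanishing amount to the recall. Thus
\[
\mathbf R_\mathtt{OPT}=\mathbb P\big(\text{a uniformly chosen honest node is the exit node of its own transaction}\big)+o(1),
\]
and self-avoidance forces $e_{X_v}=v$ to coincide exactly with the event that the first node to which $v$ relays $X_v$ is adversarial; averaging over the uniform placement of the $np$ adversaries among the $n$ nodes and over $v$'s neighbor-symmetric first-relay choice, this probability equals $p$. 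With the lower bound, $\mathbf R_\mathtt{OPT}=p$.

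I expect the main obstacle to be twofold: (i) carrying the maximum-a-posteriori argument through in full generality --- the trajectory/weight bookkeeping above is clean for a fixed regular graph, but it must hold for arbitrary connected, possibly time-varying anonymity graphs, and the observation also includes the suffix of the line past the first spy, which has to be incorporated without disturbing the comparison; and (ii) making precise, for an arbitrary connected graph, that revealing the diffusion entry point is asymptotically uninformative about the source --- this is where the ``any connected graph'' hypothesis interacts with the choice of $q$, since on poorly-mixing graphs the anonymity phase must be run long enough for the endpoint to decorrelate from the source.
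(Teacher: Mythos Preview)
Your plan mirrors the paper's proof: the lower bound is Theorem~\ref{thm:lower_bounds_fs}; for the upper bound you invoke Theorem~\ref{thm: optimal estimator recall}, argue that the exit node is always a posterior-maximizing source so that first-spy is recall-optimal, and then compute $\mathbf R_\mathtt{FS}=p$. The paper's execution of the MAP step is shorter than yours: it writes the likelihood of a realized anonymity-phase path $\Pi_u=(u,\ldots,z,\text{spy})$ as $\prod_i 1/\deg(\Pi_{i,u})$ and simply notes that this product is maximized by the shortest admissible path, namely the one-step path with $u=z$; no self-avoidance hypothesis (which Algorithm~\ref{algo:dandelion} does not in fact enforce) and no $(1-q)^\ell$-weighted sum over trajectories is needed. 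Your obstacle~(i) dissolves via the Markov property of the walk: the suffix past the first spy is conditionally independent of the source given the walk's state there, so it cancels in the posterior comparison, and the path-likelihood argument is pointwise in the realized graph, so time-variation is harmless. Obstacle~(ii) does not arise in the paper because, as stated in the paragraph preceding the theorem, only the anonymity phase is analyzed and the walk is taken to run until it first meets an adversary; every transaction therefore has an exit node and your $o(1)$ diffusion-entry term never appears.
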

(Proof in Section \ref{proof:dandelion recall})

The reason for this result is that \algo~spreading propagates content unidirectionally over a line. 
This lack of symmetry makes the first-spy estimator---which has a recall of $p$---optimal.
Theorem \ref{thm:dandelion recall} result implies that as we explore various topologies of \algo~spreading, we only need to analyze and minimize their precision.
We do so for three topologies of the graph $G$: static trees, dynamic trees, and dynamic lines.
Each topology provides intuition about how to achieve anonymity.
We find that dynamic lines achieve nearly-optimal average precision and recall.

\subsection{Static Trees}
\label{sec:static_trees}

Recall that our goal is to mix messages from different users; in this sense, trees are a natural topology to study.
That is, consider a rooted, directed $d$-regular tree, with each edge directed toward the parent node. 
Dandelion spreading respects the directedness of the graph, so during the anonymity phase, each node passes all messages to its parent node (i.e. toward the root). 
Nodes near the root are therefore able to mix their own messages with exponentially-many other messages from users beneath them in the tree.
However nodes near the leaves of the tree have few nodes beneath them, and therefore experience minimal mixing. 
This fundamental asymmetry results in a high average precision. 

\begin{prop} \label{thm:static tree}
The expected precision under a matching estimator $\mathtt{MAT}$ on any tree is given by $\mathbf{D}_\mathtt{MAT} = p$. 
\end{prop}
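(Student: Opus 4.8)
The plan is to sandwich $\mathbf{D}_\mathtt{MAT}$ between $p$ and $p$. The easy direction is the upper bound. For \emph{any} matching estimator $\mathtt{M}$, precision and recall agree pointwise: a matching sends each transaction to a distinct server, so $\sum_{w\in V_H}\mathbbm{1}\{\mathtt{M}(X_w)=v\}\in\{0,1\}$, whence $D_\mathtt{M}(v)=\mathbbm{1}\{\mathtt{M}(X_v)=v\}=R_\mathtt{M}(v)$ and $\mathbf{D}_\mathtt{M}=\mathbf{R}_\mathtt{M}$. A tree on $n$ nodes is connected, so Theorem~\ref{thm:dandelion recall} gives $\mathbf{R}_\mathtt{OPT}=p$ for \algo~spreading on it, and therefore $\mathbf{D}_\mathtt{MAT}=\mathbf{R}_\mathtt{MAT}\le\mathbf{R}_\mathtt{OPT}=p$.

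For the lower bound it suffices to exhibit one matching of precision at least $p$, since the maximum-weight (hence precision-optimal, by Theorem~\ref{thm: optimal estimator}) matching $\mathtt{MAT}$ does at least as well. Call an honest node $v$ a \emph{boundary node} if its parent is adversarial, and let $B$ denote this set. For a boundary node $v$ with adversarial parent $u$, let $\widetilde{A}_v$ be the set of honest nodes whose (unique) upward path first meets the adversarial set at $u$; note $v\in\widetilde{A}_v$. Because the anonymity phase forwards every message along the path toward the root, a transaction originating at $s\in\widetilde{A}_v$ reaches the adversary first at $u$, handed over by $v$, and conversely the set of transactions exiting through $v$ is exactly $M_v=\{X_s:s\in\widetilde{A}_v\}$. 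Hence from $\mathbf{S}$ the adversary reads off each $M_v$, from $\mathbf{\Gamma}$ it knows the matching source sets $\widetilde{A}_v$, we have $|M_v|=|\widetilde{A}_v|$, and the $\widetilde{A}_v$'s (resp.\ $M_v$'s) are pairwise disjoint since first adversarial ancestors are unique.

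I then analyze each block separately: on the bipartite graph $\widetilde{A}_v\times M_v$ with edge weights $\prob(X_s=x\mid\mathbf{O})$, every column sum equals $1$ — given $\mathbf{O}$ the source of $x\in M_v$ is known to lie in $\widetilde{A}_v$ and $\mathbf{X}$ is a bijection, so $\sum_{s\in\widetilde{A}_v}\prob(X_s=x\mid\mathbf{O})=1$ — hence the total weight in block $v$ is $|M_v|$. Summing the values of all $|M_v|!$ perfect matchings of the block (each edge appears in $(|M_v|-1)!$ of them) shows the average matching value is $1$, so some perfect matching of block $v$ has value $\ge 1$. Stitching these per-block matchings together (legitimate since the blocks are vertex-disjoint) and extending arbitrarily produces a matching $\mathtt{M}^\star$ with $\sum_{(v,x)\in\mathtt{M}^\star}\prob(X_v=x\mid\mathbf{O})\ge|B|$. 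Since the expected precision of a matching $\mathtt{M}$ equals $\frac{1}{\tilde{n}}\,\E\big[\sum_{(v,x)\in\mathtt{M}}\prob(X_v=x\mid\mathbf{O})\big]$ and $\mathtt{MAT}$ maximizes the bracketed quantity realization-by-realization, $\mathbf{D}_\mathtt{MAT}\ge\frac{1}{\tilde{n}}\E[|B|]$. Finally, writing $|B|$ as a sum over the $n-1$ non-root positions and using that the labeling is uniform, each term has expectation $\frac{\tilde{n}}{n}\cdot\frac{np}{n-1}=\frac{\tilde{n}p}{n-1}$, so $\E[|B|]=\tilde{n}p$ and $\mathbf{D}_\mathtt{MAT}\ge p$. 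Combining the two bounds gives $\mathbf{D}_\mathtt{MAT}=p$.

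The main obstacle is the block structure in the lower bound: one must argue that the adversary's observation pins down the exit set $M_v$ and its candidate source set $\widetilde{A}_v$ exactly, and that the associated weight matrix has unit column sums so that an ``average matching'' argument yields value $\ge 1$ per block. This is clean when the anonymity phase carries every message up to the root (so $M_v$ corresponds precisely to the honest subtree below $v$); under the geometric stopping rule $M_v$ is only a random subset of $\widetilde{A}_v$, and one must instead work with a uniform \emph{fractional} matching on the rectangular block and separately account for transactions that enter diffusion before reaching any adversary — more bookkeeping, but the same $\E[|B|]=\tilde{n}p$ count ultimately drives the result.
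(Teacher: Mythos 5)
Your proof is correct within the paper's (idealized) model, but it takes a genuinely different route. The paper does not sandwich: it fixes the specific estimator $\mathtt{MAT}$ that, knowing the static tree $G$, partitions honest nodes into wards $W_u=\{v: X_v\in S_u\}$ and matches the messages in $S_u$ uniformly at random to $W_u$; conditioned on $(\mathbf{S},G)$ the expected precision contributed by each non-empty ward is exactly $1$, so the conditional precision is (number of non-empty wards)$/\tilde n$, and since a ward under $u$ is non-empty iff $u$'s parent is adversarial, the expected count is $p\tilde n$, giving the exact value $p$ in one pass. You instead read $\mathtt{MAT}$ as the precision-optimal (max-weight) matching of Theorem~\ref{thm: optimal estimator} and bound it from both sides: the upper bound reuses the identity $D_{\mathtt{M}}(v)=R_{\mathtt{M}}(v)$ for matchings plus Theorem~\ref{thm:dandelion recall}, and the lower bound uses a per-ward averaging argument over the $|M_v|!$ block matchings (unit column sums of the posterior matrix) together with $\E[|B|]=\tilde n p$. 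What your route buys is a statement about the \emph{optimal} matching estimator and reusable machinery (the averaging argument needs only column sums, not the within-ward uniformity the paper implicitly exploits); what the paper's route buys is brevity and the exact value for the concrete ward-matching estimator without invoking Theorem~\ref{thm:dandelion recall}. Two small caveats: your definition of $\widetilde A_v$ via the adversarial parent $u$ is ambiguous when $u$ has several honest children---it should be the set of honest nodes whose path to their first adversarial ancestor passes through $v$ (the paper's ward $W_v$), which also restores the claimed disjointness---and the geometric-stopping issue you flag at the end is real but is silently assumed away by the paper as well (its proof treats every message as reaching a spy during the anonymity phase, so your level of rigor here matches, indeed exceeds, the original).
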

(Proof in Section \ref{proof:static tree})

Intuitively, when the graph is known, the adversary can partition nodes into \emph{wards}, or sets of honest nodes that share the same first spy.
Each ward contributes equally to the adversary's precision, so we would like to minimize the number of wards.
On trees, the expected number of wards is $p\tilde n$, most of which consist of a single leaf with an adversarial parent node.
This gives an overall precision of $p$.

Although a precision of $p$ is an improvement over Bitcoin's current networking policy, we would like to achieve a precision close to the lower bound of $p^2$ (Theorem \ref{thm:lower_bounds_fs}).
We therefore consider topologies with fewer wards on average. 

\vspace{0.05in}
\noindent \textbf{Lesson:} 
Use topologies in which it is difficult for the adversary to partition nodes into wards.

\subsection{Dynamic Trees}
\label{sec:dynamic trees}
The adversary was able to partition the nodes of a static tree into wards largely because the graph was known.
A natural question is whether dynamic trees have the same problem, since most of the graph is hidden, except the adversary's local neighborhood.

A perfect $d$-ary tree is a rooted tree in which each node has either $d$ children or no children, and all leaves are at the same depth.
Again, we assume each edge in such a tree is directed toward the parent node.
We find that \algo~spreading on perfect $d$-ary trees has an expected precision similar to that of static trees.

%
%
\begin{prop} \label{thm: dynamic trees}
The expected precision of the first-spy estimator on a 
perfect $d$-ary tree, $d\geq 2$, can be bounded as $\mathbf{D}_\mathtt{FS} \geq p/2$.  
\end{prop}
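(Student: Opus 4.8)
The plan is to lower-bound $\mathbf{D}_\mathtt{FS}$ by keeping only the contribution of the tree's leaves, for which the first-spy estimator behaves in a particularly simple, almost deterministic way. First I would record the standing simplification: under the assumption that the diffusion phase is fully deanonymizable, the first-spy estimator attributes each transaction $X_s$ to the honest node $h(s)$ at which the anonymity-phase line of $X_s$ terminates --- either the honest node immediately preceding the first adversary encountered on $X_s$'s line, or (if the geometric stop fires first) the honest node that launches, and is hence identified as the launcher of, diffusion. Since edges on a perfect $d$-ary tree point toward the parent and \algo~spreading respects this orientation, the line of $X_s$ is a (random-length) initial segment of the path $s, \pi(s), \pi^2(s), \dots$ toward the root, where $\pi(\cdot)$ denotes the parent; in particular $h(s)$ is always $s$ itself or a proper ancestor of $s$.

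Next I would specialize to a leaf $v$ (assume $H\ge 1$ so the tree is nontrivial). The subtree rooted at $v$ is $\{v\}$, so the only transaction whose line can pass through $v$ is $X_v$ itself; hence $N(v):=\sum_{w\in V_H}\mathbbm{1}\{h(w)=v\}=\mathbbm{1}\{h(v)=v\}$. Moreover the line of $X_v$ always takes its first hop to $\pi(v)$, so $h(v)=v$ exactly when $\pi(v)\in V_A$, and in that case $\pi(v)$ is the first spy and it receives $X_v$ directly from $v$, so the estimator's output on $X_v$ is indeed $v$. Consequently, for a leaf $v\in V_H$ with $\pi(v)\in V_A$ we get $D_\mathtt{FS}(v)=\mathbbm{1}\{\mathtt{M}_\mathtt{FS}(X_v)=v\}/N(v)=1/1=1$, while for every other $v$ we simply use $D_\mathtt{FS}(v)\ge 0$.

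Assembling these, and writing $L$ for the set of leaf positions, $H$ for the height, so $n=\tfrac{d^{H+1}-1}{d-1}$ and $|L|=d^H$, I obtain $\mathbf{D}_\mathtt{FS}\ge\frac{1}{\tilde n}\sum_{v\in L}\prob(v\in V_H,\ \pi(v)\in V_A)$. Since the $np$ adversaries are a uniformly random size-$np$ subset of the $n$ positions, $\prob(v\in V_H,\pi(v)\in V_A)=\frac{np(1-p)}{n-1}$ for any non-root position $v$, whence $\mathbf{D}_\mathtt{FS}\ge\frac{|L|}{\tilde n}\cdot\frac{np(1-p)}{n-1}=\frac{|L|\,p}{n-1}>\frac{|L|}{n}\,p$. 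A one-line check shows $\frac{|L|}{n}=\frac{d^H(d-1)}{d^{H+1}-1}>\tfrac12$ whenever $d\ge 2$ (it reduces to $d^H(d-2)+1>0$), giving $\mathbf{D}_\mathtt{FS}> p/2$. The only delicate step is the reduction in the first paragraph: one must argue carefully that the full-deanonymization assumption pins attribution to the line's exit node, so that the random geometric stopping can only fail to attribute a leaf's own transaction to that leaf and can never cause some other transaction to be misattributed to a leaf; everything after that is elementary counting plus the leaf-fraction estimate.
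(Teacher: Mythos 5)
Your proposal is correct and follows essentially the same route as the paper's proof: restrict attention to honest leaves whose parent is adversarial, observe that the first-spy estimator has precision $1$ at such nodes (no other transaction's line can pass through a leaf, so the ward is the singleton $\{X_v\}$), and combine the adversarial-parent probability $\approx p$ with the fact that more than half the positions of a perfect $d$-ary tree are leaves. The only differences are bookkeeping (you sum over leaf positions rather than over honest nodes and verify the leaf-fraction bound $|L|/n>1/2$ explicitly) and your slightly more careful discussion of why the diffusion-phase attribution cannot add spurious messages to a leaf's ward, which the paper takes for granted.
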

(Proof in Section \ref{proof: dynamic trees})

Since the graph is now dynamic, the adversary cannot explicitly determine every ward like it could in the static case. 
However, the first-spy estimator naturally identifies wards that consist of a single honest leaf. 
Statistically, there are many such wards on trees that are not lines, so we obtain similar guarantees to the static case.
This implies that the problem with trees is mainly the fact that they have many leaves.

\vspace{0.05in}
\noindent \textbf{Lesson:} A dynamic graph does not mitigate the negative impact of leaf nodes.

\subsection {Dynamic Lines: Dandelion}
\label{sec:dynamic lines}

Next, we study dynamic line graphs.
Lines are 2-regular trees, but unlike higher degree trees, they do not suffer from the asymmetry problems associated with leaves. 
However, line graphs seem to lack the strong mixing properties of higher-degree graphs.
Nonetheless, we show near-optimal precision for this class of graphs.
This happens because despite the moderate mixing on lines, the number of honest nodes visible to the adversary is also small.
As such, the adversary cannot accurately partition nodes into wards, which reduces the overall precision.
Note that this  would not hold in the static case, since the adversary could identify the wards exactly.

\begin{figure}[h]
    \centering
  \includegraphics[width=.35\textwidth]{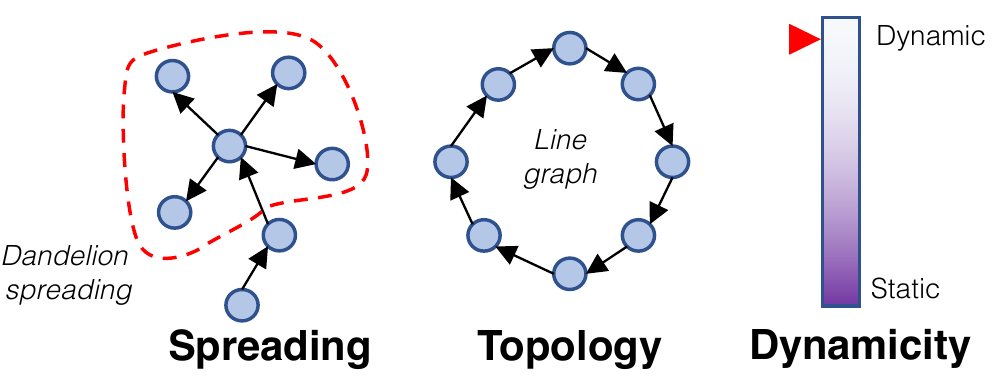}
  \caption{The \Algo~networking policy: (1) \algo~spreading, (2) a line topology, (3) a dynamic graph.}
  \label{fig:dandelion_policy}
\end{figure}

We use the name \textbf{\Algo} to refer to a full networking policy (Figure \ref{fig:dandelion_policy}): \algo~spreading over dynamic lines (i.e., $2$-regular graphs with out-degree 1).
We begin by showing that \Algo~ has near-optimal precision.




\begin{thm} \label{thm: dynamic line}
The expected precision of \Algo~(i.e., \algo~spreading on a dynamic line graph) with $n$ nodes and a fraction $p<1/3$ of adversaries, is upper bounded by 
\begin{align}
\mathbf{D}_\mathtt{OPT} \leq \frac{2p^2}{1-p}\log\left( \frac{2}{p} \right) + O\left (\frac{1}{n}\right ). 
\end{align}
\end{thm}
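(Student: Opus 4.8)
The plan is to bound the optimal precision $\mathbf{D}_\mathtt{OPT}$ by exploiting Corollary~\ref{cor: opt upper bound}: for each honest server $v$, $\mathbb{E}[D_\mathtt{OPT}(v)\mid\mathbf{O}] \le \max_{x} \mathbb{P}(X_v = x\mid\mathbf{O})$, so it suffices to control the expectation (over the random dynamic line and the random adversary placement) of this maximum posterior. Since we have already established (Theorem~\ref{thm:dandelion recall}, and the reduction of the sufficient statistic to $\mathbf{S}$ recording only first-spy exit edges) that on \algo~spreading only the first-spy information matters and timestamps are irrelevant, the adversary's observation for server $v$ reduces to: the local neighborhood of the spies on the line, and for each transaction which honest exit node first delivered it to a spy. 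The first step, then, is to set up the combinatorial model: condition on the placement of spies along the line and on the anonymity-phase stopping parameter $q$, so that the honest nodes get partitioned into maximal runs (``segments'') of consecutive honest nodes bracketed by spies. A transaction originating in a segment is, after the geometric-length walk, delivered to the spy at the downstream end of that segment (or diffused out if the walk terminates inside, but for recall/precision the relevant event is reaching a spy). The key random quantity is the length $L$ of the segment containing $v$ together with $v$'s position within it.

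The second step is to compute, conditioned on $v$ lying in a segment of honest length $\ell$ at position $i$ from the downstream (spy) end, the posterior $\mathbb{P}(X_v = x\mid\mathbf{O})$ for the transaction $x$ that exits this segment first. All $\ell$ nodes in the segment that ``reach'' the spy are a priori symmetric up to how far along the geometric walk carries them; the node closest to the spy reaches it soonest, but since timestamps are unusable, the adversary can only use the graph-structural fact of which segment $x$ came out of. So the $\mathtt{MAT}$/max-posterior calculation essentially distributes the ``credit'' for the segment's exiting transaction(s) uniformly among the reaching nodes — giving a per-node posterior on the order of $1/(\text{number of honest nodes that plausibly originated that transaction})$, which is $\Theta(1/\ell)$ when $v$ is near the downstream end. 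Then $\max_x \mathbb{P}(X_v = x\mid \mathbf O) = \Theta(1/\ell)$ for a node at bounded distance from its spy, and negligible otherwise.

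The third step is to take the expectation over segment lengths. For a line of $n$ nodes with each node independently a spy with probability $p$, the segment containing a uniformly random honest node has a length-biased geometric-type distribution: $\mathbb{P}(\text{length} = \ell)\propto \ell\, p^2 (1-p)^{\ell}$ roughly, and $v$'s position within it is uniform. Averaging $1/\ell$ (the posterior) against this length-biased distribution gives a sum like $\sum_{\ell\ge 1} p^2(1-p)^{\ell} \cdot \frac{1}{\ell}\sum_{i=1}^{\ell}(\text{reach factor})$; carrying out this geometric sum produces the $-\log(p)$ factor — specifically $\sum_\ell (1-p)^\ell/\ell \approx \log(1/p)$ — and the prefactors $p^2/(1-p)$ and the constant $2$ inside the log come from the length-biasing and from summing the geometric walk's reach probabilities. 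The $O(1/n)$ term absorbs boundary effects (the two ends of the finite line, and the $\tilde n$ vs.\ $n$ discrepancy). The hypothesis $p<1/3$ is presumably what is needed to ensure the geometric series converge with room to spare and to control the second-order terms in the combinatorial estimates.

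The main obstacle I expect is the second step: justifying rigorously that, even though the adversary is computationally unbounded and sees the full local topology around \emph{all} spies simultaneously, the cross-segment information does not help — i.e., that the transactions exiting distinct segments are (conditionally) independent and that within a segment the posterior really is (close to) uniform over the reaching nodes rather than sharply concentrated. Establishing this cleanly likely requires the Markov-chain/sufficient-statistic reduction $\mathbf X - (\mathbf S',\mathbf\Gamma) - \mathbf S$ already stated, plus a symmetrization argument using the fact that the anonymity-graph line is generated with a uniformly random labeling (so within a segment all honest nodes are exchangeable given the observation). Getting the constant inside the logarithm to be exactly $2$, rather than merely $O(1)$, will require being careful about the geometric walk's termination probability $q$ and how segment exit events are counted — that bookkeeping, rather than any deep idea, is where the real work lies.
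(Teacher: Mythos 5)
Your proposal is correct in outline and follows essentially the same route as the paper's proof: bound each node's precision by the maximum posterior (Corollary \ref{cor: opt upper bound}, applied via conditioning as in Lemma \ref{lem: conditioning events}), condition on the local spy-adjacency structure around $v$ (the events $\mathcal{E}_v(i,j)$, i.e., your segment length and position), and average the resulting $1/(\text{ward size})$ posteriors against the geometric law of ward sizes to obtain $\sum_{\ell}(1-p)^{\ell}/\ell \approx \log(1/p)$, doubled for the two spy-adjacent positions. The one clarification worth making is that the dichotomy is adjacency, not ``bounded distance'': exactly the ward head and ward tail (the nodes whose immediate successor, resp.\ predecessor, is a spy) have posterior $1/(\text{ward size})$, while every other node is interior and has posterior at most $1/(n(1-3p))$ because the dynamic graph hides its ward membership --- which is where the constant $2$ and the hypothesis $p<1/3$ actually enter (the latter via the interior-node count $|I|\ge n(1-3p)$, not via convergence of the geometric series).
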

(Proof in Section \ref{proof: dynamic line})

This result states that for small $p$, the expected maximum precision is within a logarithmic factor of our lower bound of $p^2$.
The stated bound has loose constants for improved readability;
a  tighter expression is included in the proof. 
The proof depends heavily on the fact that the adversary cannot reliably assign nodes to wards outside of its local neighborhood on the graph. 
As such, it is forced to use estimators that would give suboptimal precision in the static case, 
like variants of the first-spy estimator.

Figure \ref{fig:region_final} illustrates \Algo's detection region compared to those of other benchmark policies.
The points for diffusion and flooding are generated through simulation over a snapshot of the Bitcoin server graph from 2015 \cite{coinscope}. 
Since \algo~spreading has optimally-low recall (Theorem \ref{thm:dandelion recall}),  the Pareto frontier for \Algo~is exactly the plotted point. 
The other policies are analyzed using possibly-suboptimal estimators, so their detection regions must at least contain the plotted points.

\begin{figure}[t]   \label{fig:region_final}
    \centering
  \includegraphics[width=.41\textwidth]{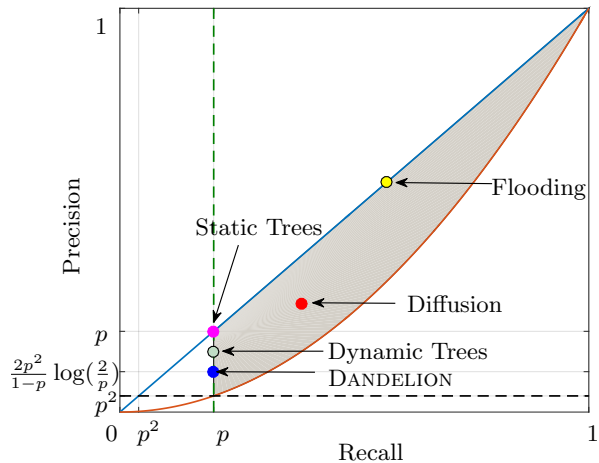}
  \put(-96,-15){Recall}
  \put(-184,-8){0}
  \put(-172,-8){$p^2$}
  \put(-142,-8){$p$}
  \put(-100,13){\Algo}
  \put(-100,23){Dynamic Trees}
  \put(-70,41){Diffusion}
  \put(-150,70){Static Trees}
  \put(-38,85){Flooding}
  \put(-2,-8){$1$}
  \put(-188,3){$p^2$}
  \put(-221,15){\rotatebox{0}{$\frac{2p^2}{1-p}\log(\frac{2}{p})$}}
  \put(-188,30){$p$}
  \put(-188,146){$1$}
  \put(-200,55){\rotatebox{90}{Precision}}
  \caption{Detection regions for studied networking policies, $p=0.2$. 
 \Algo~has a detection region close to the fundamental lower bounds. }     \label{fig:region_final}
\end{figure}

\Algo~ satisfies the theoretical demands of our problem.
However, implementing \Algo~in a distributed, robust fashion is nontrivial. 
In the next section, we discuss some challenges associated with implementing \Algo~ and present a simple, distributed implementation with empirically good performance.

%

%
%
%

\section{Systems Issues}
\label{sec:systems}
Theoretically, \Algo~ is simple and exhibits desirable anonymity properties.
However, the implementation raises a number of practical considerations, like how to construct the underlying line graph and how to provide sufficient graph dynamicity.
We discuss each of these challenges, and introduce practical, heuristic solutions for addressing them.

\subsection{Constructing a line graph}

In \Algo, all nodes propagate their messages over the \emph{same} line.
To implement this, the network must build either a Hamiltonian circuit or a set of long, disjoint lines in a fully-distributed fashion.
Constructing a Hamiltonian circuit is challenging in our case because it is not a one-time event; in order to provide dynamicity, the network must frequently construct a new random line. 
To ensure scalability, the algorithm for constructing such a line should be fully-distributed, lightweight, and asynchronous.

Traditional algorithms for computing Hamiltonian circuits are often computationally intensive and/or require centralized control 
\cite{garey1976planar,karger1997approximating},
but recent papers have studied lightweight, distributed alternatives  \cite{srikant,levy2004distributed}.
For instance, \cite{levy2004distributed} first generates line fragments, then splices them together into a circuit. 
However, it relies on the nodes of each line fragment knowing the identities of the fragment's head and tail nodes.
This could partially reveal  the graph structure to the adversary, which would likely change our anonymity guarantees.

On the other hand, \cite{srikant} builds up the circuit sequentially;  a pair of nodes start as the circuit `seeds'. 
Each node $v$ who joins the circuit contacts a random node $u$ from the partially-built circuit; 
$u$ replies with the IP address of  its outgoing neighbor $w$.
Then $v$ splices itself into the $(u,w)$ edge, so the new ordering is $u\rightarrow v\rightarrow w$. 
This distributed protocol is a viable solution for constructing an exact line.


Another alternative is to use Bitcoin's current networking strategy to approximate a line.
Currently, each Bitcoin server generates eight connections at random. 
We can approximate a line by asking each server to create one outgoing connection at random. 
We can further refine this protocol by having each server, prior to making a connection, contact $k$ nodes and connect to the node with the smallest in-degree.
Algorithm \ref{algo:dreg_approx} specifies this algorithm for approximating a line.
The protocol is fully distributed, but it is unclear how well it approximates a line.

\begin{algorithm}[t]
\DontPrintSemicolon
\KwIn{Set $V=\{v_1, v_2, \ldots, v_n\}$ of nodes; parameter $k$}
\KwOut{A connected, directed graph $G(V,E)$ with average degree $2$}
\For{$v \gets V$} {
    \tcc{pick $k$ random targets} 
    $u_i \sim$ Unif$(V \setminus \{v\}),~~\text{for }i\in \{1,\ldots, k\}$ \;
    \tcc{pick the smallest in-degree}
    $u \gets \argmin_{u_i} \text{deg}_{in}(u_i)$ \;
    $E = E \cup (v \rightarrow u)$ \tcc{make connection} 
  
}
\Return{$G(V,E)$}\;
\caption{{\sc $k$-Approximate Line} Approximates a directed line graph in a fully-distributed fashion. Each node picks an edge from $k$ options}
\label{algo:dreg_approx}
\end{algorithm}


Figure \ref{fig:line_approx} illustrates the degree distribution of Algorithm \ref{algo:dreg_approx}'s approximation of a line graph with 1,000 nodes, averaged over 1,000 trials, for different values of $k$. 
First, note that the average degree is two by construction.
As $k$ increases, the fraction of leaves decays, with the greatest reduction coming as we transition from $k=1$ to $k=2$.
This empirical observation is supported  by the following proposition:
\begin{prop} \label{prop:twochoices}
Suppose Algorithm \ref{algo:dreg_approx} is used to construct a $k$-approximate line over $n$ nodes. 
Let the empirical degree distribution of the resulting graph's nodes have support $(d_1, \ldots, d_m)$, where $d_1 <  \ldots < d_m$. 
Then with probability $1-o(1)$, the maximum degree $d_m$ satisfies the following condition:
\[
d_m =
\begin{cases} 
      \frac{ \log n}{\log \log n} \left(1 + o(1) \right ) + \Theta(1) &\text{if } k= 1 \\
      \frac{\log \log n}{\log k} \left(1 + o(1) \right ) + \Theta(1) & \text{if } k>1. 
   \end{cases}
\]
\end{prop}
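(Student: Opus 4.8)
The plan is to recognize Algorithm~\ref{algo:dreg_approx} as the classical balls-into-bins process with the \emph{power of $k$ choices}, and then invoke (re-deriving where convenient) the known maximum-load bounds. First I would reduce the quantity of interest. Each node creates exactly one outgoing edge, so every node has out-degree exactly $1$; since self-loops are forbidden, no edge is simultaneously an in-edge and an out-edge of the same node, so the undirected degree of every node equals its in-degree plus $1$. Hence $d_m = 1 + \max_{v\in V}\mathrm{indeg}(v)$, and it suffices to control the maximum \emph{load} (in-degree), viewing the $n$ outgoing edges as balls thrown one per source node in the order of the \texttt{for} loop, each ball picking $k$ independent uniform targets among the other $n-1$ nodes and landing in whichever of them currently has the smallest in-degree. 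This $+1$, together with the $O(1)$ slack inherent in the layered-induction bound for $k>1$, is what produces the ``$+\Theta(1)$'' in the claim.

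Next I would dispose of the self-exclusion technicality: drawing ball $v$'s targets from $V\setminus\{v\}$ rather than from all of $V$ perturbs every relevant probability by a factor $(1+O(1/n))^k = 1+o(1)$ (for $k=n^{o(1)}$, in particular for constant $k$, which is the only regime in which the stated bounds are meaningful); this can be handled by a routine coupling with the standard process or simply absorbed into the error terms below. With that out of the way, for $k=1$ the process is exactly $n$ balls thrown independently and uniformly into (essentially) $n$ bins, so $\mathrm{indeg}(v)$ is distributed as $\mathrm{Binomial}(n-1,\tfrac{1}{n-1})$ and a union bound gives $\prob[\max_v\mathrm{indeg}(v)\ge t] \le n\binom{n-1}{t}(n-1)^{-t} \le n\,(e/t)^{t}$, which $\to 0$ for $t=(1+\varepsilon)\tfrac{\ln n}{\ln\ln n}$; the matching lower bound comes from a second-moment argument on the number of bins of load at least $(1-\varepsilon)\tfrac{\ln n}{\ln\ln n}$, using negative association of multinomial bin loads to bound the variance by the mean. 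This yields $\max_v\mathrm{indeg}(v) = (1+o(1))\tfrac{\ln n}{\ln\ln n}$ with probability $1-o(1)$, and the base of the logarithm is immaterial since $\tfrac{\log_b n}{\log_b\log_b n} = (1+o(1))\tfrac{\ln n}{\ln\ln n}$.

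For $k>1$ I would run the layered induction of Azar--Broder--Karlin--Upfal. Let $\mu_i$ be the number of bins that ever attain load at least $i$; since $\mu_i$ is nondecreasing in time, conditioning on $\mu_i \le \beta_i n$ forces the set of load-$\ge i$ bins to have size at most $\beta_i n$ at \emph{every} step, so each subsequent ball has probability at most $(\beta_i + O(1/n))^k$ of having all $k$ of its choices inside that set; a Chernoff bound then gives $\mu_{i+1} \le 2\beta_i^k n$ with very high probability provided $\beta_i^k n = \Omega(\log n)$. Iterating $\beta_{i+1} = 2\beta_i^k$ from a constant level $\beta_{i_0}=\Theta(1)$ drives the expected number of load-$\ge(i_0+j)$ bins below $n^{-2}$ after $j=(1+o(1))\tfrac{\log\log n}{\log k}$ steps, after which a direct union bound over the remaining $O(1)$ levels shows no bin reaches load $i_0+j+O(1)$; summing the $O\!\big(\tfrac{\log\log n}{\log k}\big)$ failure probabilities still leaves total failure $o(1)$. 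The matching lower bound $\max_v\mathrm{indeg}(v) \ge (1-o(1))\tfrac{\log\log n}{\log k}$ is the corresponding reverse layered induction, which I would cite or reproduce. Adding the deterministic $+1$ from the out-degrees completes both cases.

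The step I expect to be the main obstacle is the $k>1$ layered induction: making the ``monotonicity plus Chernoff'' step fully rigorous---in particular verifying that conditioning on the layer-$i$ event does not destroy the independence used in the Chernoff estimate at layer $i+1$ (the standard fix tracks ball heights rather than conditioning naively)---and carefully handling the transition region where the number of heavily loaded bins descends from $\Theta(\log n)$ to $0$, which is precisely where the additive $\Theta(1)$ is generated. By comparison, the reduction to balls-and-bins, the self-exclusion coupling, and the $k=1$ analysis are routine.
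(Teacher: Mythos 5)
Your proposal is correct and follows essentially the same route as the paper: both reduce the maximum degree to the maximum load in a balls-into-bins process via $d_m = 1 + \max_v \mathrm{indeg}(v)$ and then invoke the classical $k=1$ maximum-load result and the Azar et al.\ power-of-$k$-choices bound for $k>1$. The only difference is that the paper simply cites these results, while you sketch their re-derivation (union bound plus second moment for $k=1$, layered induction for $k>1$) and handle the self-exclusion technicality by coupling rather than by the paper's simplifying assumption that self-loops are allowed; this is extra rigor, not a different approach.
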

(Proof in Section \ref{proof:twochoices})

Here we are using maximum degree as a proxy for regularity (or number of leaves), but recall that the expected degree is fixed by construction. 
Therefore, if we can drive the maximum degree down to $2$, the minimum degree must also be $2$.
Proposition \ref{prop:twochoices} suggests that we can reap most of the precision gains of a more regular graph by connecting to one of $k=2$ nodes with minimum in-degree, whereas larger $k$ only improves the regularity by a factor logarithmic in $k$.

Sec. \ref{sec:dynamic trees} showed that leaves increase the precision of a scheme because the leaf nodes' messages cannot be mixed with other messages.
This suggests that \Algo~can achieve lower precision over $k$-approximate lines (Algorithm \ref{algo:dreg_approx}) by increasing $k$ and decreasing the number of leaves.
Figure \ref{fig:compare_dreg} compares the the first-spy estimator precision for exact lines (generated by \cite{srikant}) and $k$-approximate lines (Algorithm \ref{algo:dreg_approx}).
The figure shows that over $k$-approximate lines, average precision decreases as $k$ increases (i.e., as the distribution becomes more regular),
but the returns are diminishing in $k$.
The most significant decrease in precision occurs as we transition from $k=1$ to $k=2$;  higher values of $k$ give marginal improvements.
Moreover, the precision of $k$-approximate lines is significantly larger than that of exact lines, which could be obtained through the line-creation protocol in \cite{srikant}.

\begin{figure}[t]
    \centering
  \includegraphics[width=.32\textwidth]{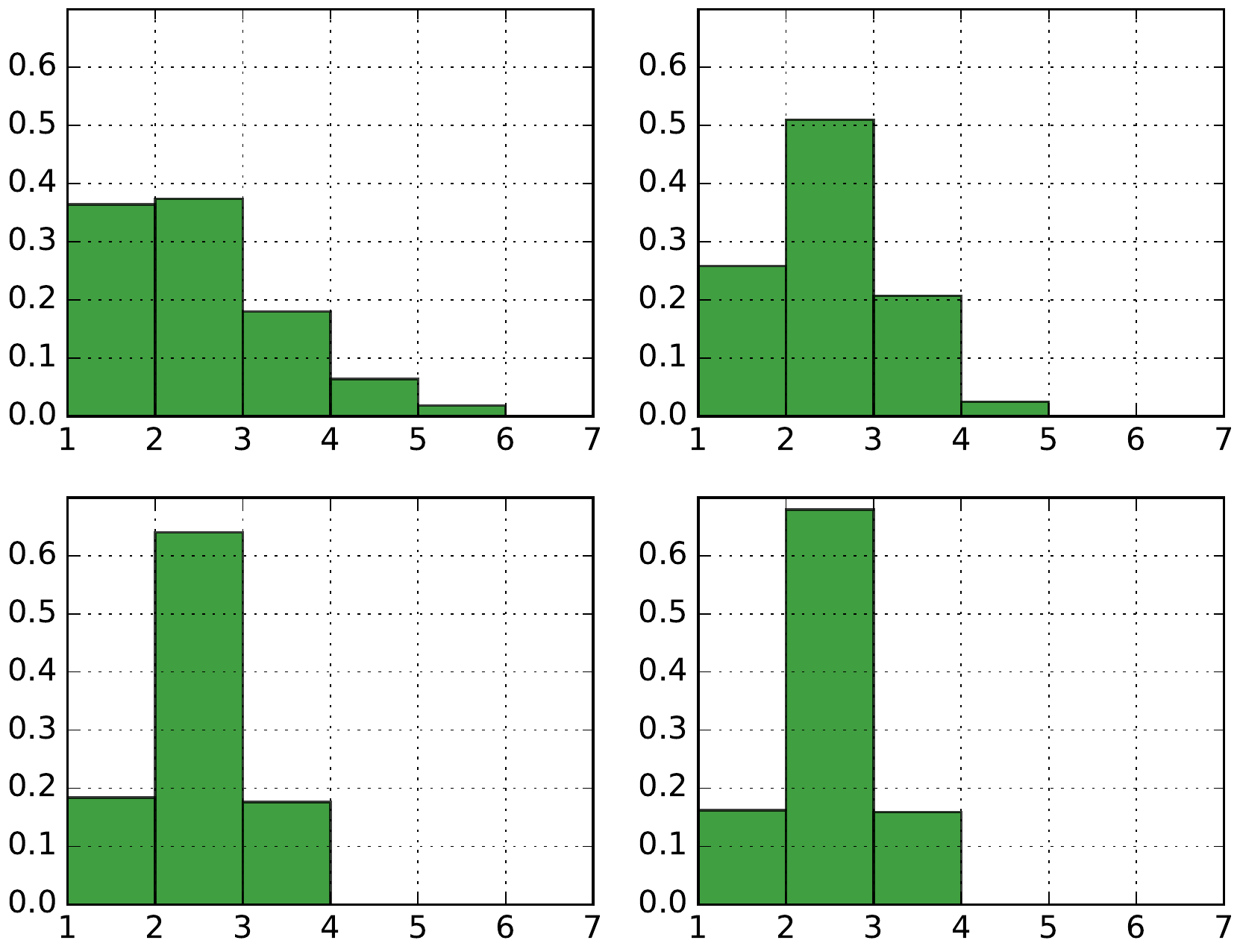}
  \put(-95,-10){Degree}
  \put(-35,105){\fcolorbox{black}{yellow}{$k=2$}}
  \put(-35,35){\fcolorbox{black}{yellow}{$k=4$}}
  \put(-120,105){\fcolorbox{black}{yellow}{$k=1$}}
  \put(-120,35){\fcolorbox{black}{yellow}{$k=3$}}
  \put(-180,30){\rotatebox{90}{Fraction of nodes}}
  \caption{Degree distribution of $k$-approximate lines (Algorithm \ref{algo:dreg_approx}) for various $k$.
  The fraction of leaves decreases as the number of edge choices $k$ increases.}
  \label{fig:line_approx}
\end{figure}


Algorithm \ref{algo:dreg_approx} and \cite{srikant} are both viable options for constructing a line. 
Although \cite{srikant} has lower overall precision, it uses more fine-grained information---connection IPs rather than simple degree information.
As such, \cite{srikant} may be less robust to misbehaving nodes.
Understanding this tradeoff, and developing alternatives that are resistant to adversarial misbehavior, are of practical interest.




\begin{figure}[t]
       \centering
       \includegraphics[width=0.37\textwidth]{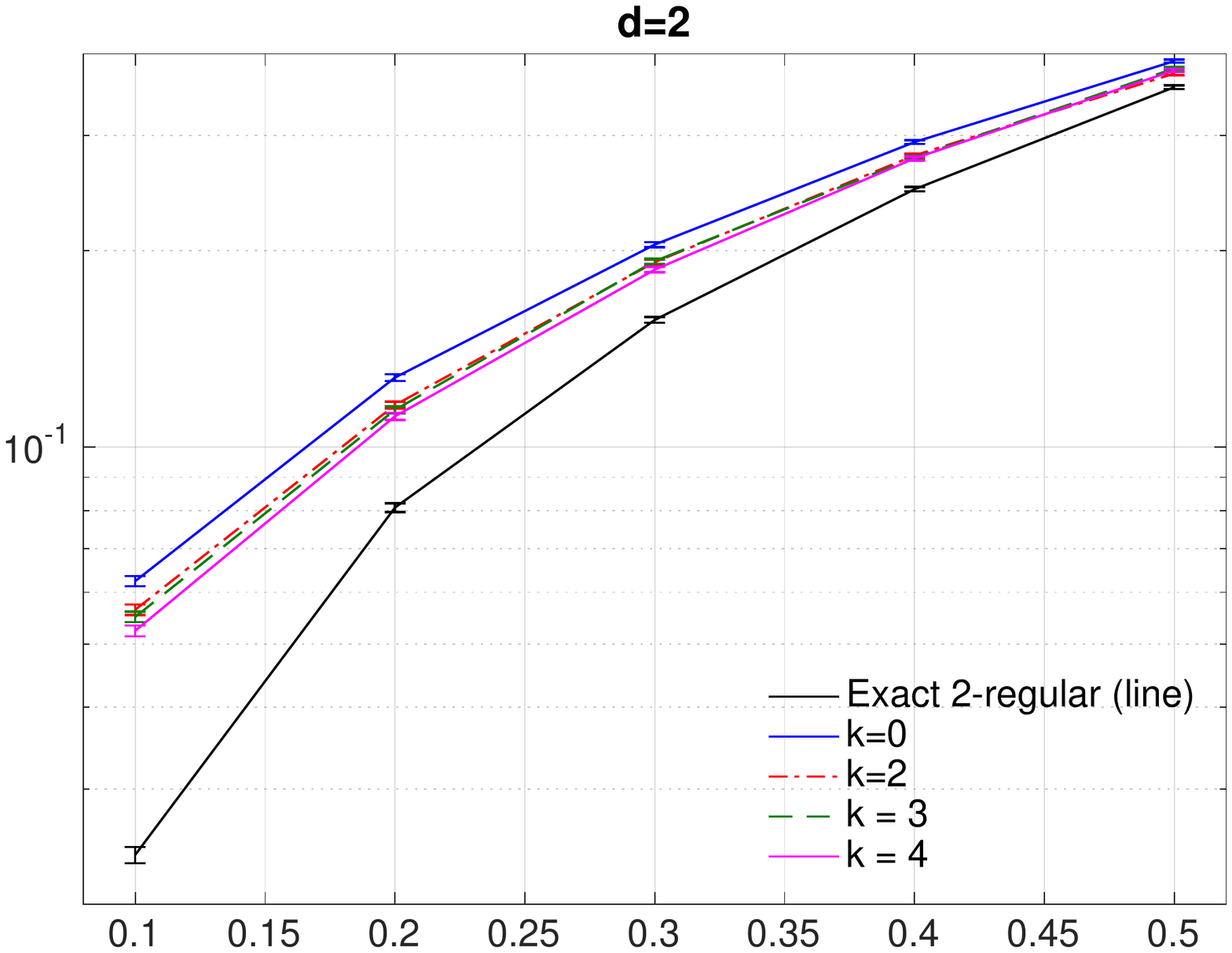}
       \put(-125,-10){Fraction of spies $p$}
       \put(-198,45){\rotatebox{90}{Average precision}}
     \caption{The $k$ edge choices during graph creation (Algorithm \ref{algo:dreg_approx}) do not significantly reduce the precision of the first-spy estimator beyond $k=1$.
     }
     \label{fig:compare_dreg}
\end{figure}

\subsection{Preventing graph leakage}
Another challenge associated with \Algo~ is that it assumes the graph $G$ is a line whose structure is unknown to the adversary. 
However, lines can be learned over time.
First, note that for any given adversarial node $s_1$ on a 2-regular digraph, $s_1$ can eventually learn the identities of the adversarial nodes immediately before and after it on the graph by sending probe messages.
Now consider the following scenario: 
a message from an honest user propagates on the line, and passes $s_1$. 
At an honest node $v$ between $s_1$ and the next adversarial node $s_2$ (see Figure \ref{fig:spies}), the message transitions into the spreading phase at and starts diffusing over the main P2P graph $H$. 
We assume that the adversary can reliably infer the diffusion source $v$.
Since $s_2$ did not receive the message before the spreading phase began, and $v$ was the source of the spreading phase, the adversary learns that $v$ lies between $s_1$ and $s_2$.
In this way, the adversary learns the internal, honest nodes of $G$ at a rate proportional to the creation of new transactions; by learning this graph, the adversary's expected per-node precision grows to $p$.

\begin{figure}[h]
    \centering
  \includegraphics[width=.23\textwidth]{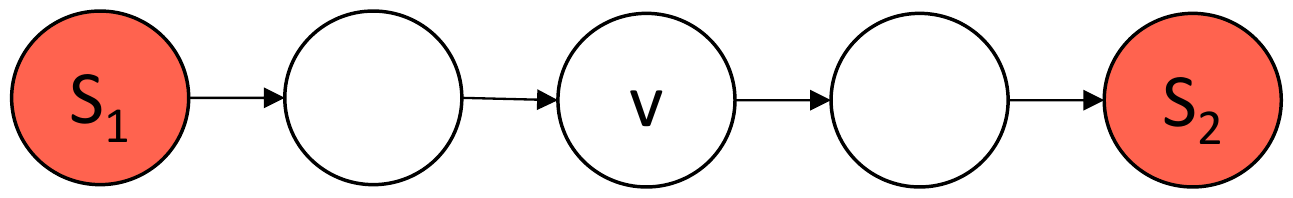}
  \caption{The adversary can easily learn line graphs.}
  \label{fig:spies}
\end{figure}

This problem must be managed by changing the graph quickly enough that the adversary cannot learn it---i.e., on the timescale at which transactions are executed.
As a ballpark estimate,
the Bitcoin network currently sees about three transactions per second \cite{blockchain_info}. 
Botnets can be rented through online services; for \$200, one can rent a botnet of 1,000 US-based \emph{zombies}, or corrupted hosts \cite{botnet_prices}.
Since the current Bitcoin network consists of about 5,500 servers \cite{bitnodes}, this corresponds to $p\approx 0.15$; 
each ward would have about seven nodes on average, of which five are unknown to the adversary in the fully-dynamic setting.
We conservatively assume that each transaction launches its spreading phase from a different honest node.
If we want to ensure that the adversary never learns more than 40\% of interior nodes, we should change the graph every $5500 \text{ transactions} \times \frac{5}{7} \times 0.4  \times \frac{1\text{ sec}}{3 \text{ transactions}}\approx 9$ minutes. 
This is easy to enforce in a distributed fashion; every nine minutes, each node will tear down its connections and form new ones. 
Synchrony between nodes is not needed for this restructuring due to the fully-distributed  line approximation protocol.

More powerful attackers can create botnets of tens of thousands of nodes, which would overwhelm \Algo. 
In such scenarios, statistical solutions are no longer appropriate. 

\section{Related Work}
\label{sec:related}
Three key facets of this work appear in the literature: practical anonymity attacks on Bitcoin, source detection analysis in diffusion processes, and the design of anonymous communication protocols.

\noindent \textbf{Anonymity attacks on Bitcoin.}
There have been several attacks on the anonymity of Bitcoin, most of which harness the public nature of the blockchain \cite{reid2013analysis,ober2013structure,ron2013quantitative}.
Transaction patterns can be used to link user transactions over time, and in some cases identify the human owner of a public key.

More recently, authors have demonstrated deanonymization attacks on Bitcoin's networking stack. 
These attacks typically  use the first-spy estimator, and achieve surprisingly high accuracies \cite{koshy2014analysis,biryukov,biryukov2015bitcoin}.
The Bitcoin community has responded to these attacks with ad hoc changes to its networking stack for improved anonymity \cite{bitcoind}.

\noindent \textbf{Analysis of diffusion.} A number of researchers have studied source detection on diffusion processes on graphs.
These results show that for various classes of graphs and adversarial models, reliable deanonymization is possible \cite{SZ11a,SZ12,FC12,WDZT14,LMOZ13,PVF12}.
However, there has been a relative lack of theoretical results in the analysis of diffusion under a spy-based adversary like ours.
Many of the results in this space propose effective heuristics that achieve high recall in practice \cite{PTV12,ZY13,ZCY14}.
These papers suggest that by using centrality information, adversaries may be able to launch stronger attacks than prior practical network attacks \cite{biryukov,biryukov2015bitcoin}.

\noindent \textbf{Anonymous broadcasting.} The topic of anonymous broadcasting has been studied for decades. 
The best-known example is dining cryptographer networks (DC nets), which enable a user to  broadcast a message anonymously with information-theoretic guarantees \cite{chaum88}.
DC nets are communication-intensive, which has prevented them from scaling beyond a few thousand nodes \cite{anonymousscale,corrigan2010dissent,golle2004dining}, and they are not well-suited to distributed systems like cryptocurrencies. 

Another relevant topic is adaptive diffusion (AD) \cite{KFSV14}, which was recently proposed as an anonymous spreading protocol for broadcasting content over fixed graphs. 
AD shares some properties with \Algo, such as symmetry-breaking.
However, AD was designed for social networks, which do not require all nodes to receive every message.  
Indeed, AD can `get stuck' on real graphs, meaning that some messages do not reach the entire network \cite{KFSV14}.
This property is unacceptable in cryptocurrencies: all nodes should receive all messages for fairness and consistency purposes. 

Finally, the core idea of \algo~spreading---i.e., passing content through a set of proxies---has been used in numerous anonymity systems, mainly for point-to-point communication \cite{tor,reiter1998crowds}. 
However, existing systems have not connected \algo~spreading to any fundamental anonymity guarantees, and they typically assume a complete graph topology \cite{tor}. 
In contrast, we identify topologies over which \algo~spreading actually provides strong guarantees (i.e., not complete graphs).
More fundamentally, our problem is focused on broadcasting over a network, which has different requirements and models than point-to-point messaging.
\section{Conclusion}
\label{sec:conclusion}

In this paper, we redesign the Bitcoin P2P stack to provide anonymity against distributed adversaries (e.g., botnets) who wish to link users to their transactions. 
We present the \Algo~ networking policy, which achieves nearly-optimal anonymity guarantees with a simple, distributed implementation.
Intuitively, \Algo~ achieves these guarantees by mixing messages from different users on a graph that remains hidden from the adversary.


A few topics of interest during  deployment are as follows: first, we have analyzed honest-but-curious adversarial nodes. In practice, botnet nodes may run malicious code that breaks protocol. 
In this case, anonymity can be negatively affected by nodes forwarding content inappropriately or misbehaving during graph construction. 
Hardening our protocols against such intrusions is critical.
For example, noninteractive graph construction protocols offer some robustness by reducing opportunities for the adversary to lie in order to generate an advantageous anonymity graph.
Alternatively, a system could use cryptographic proofs to ensure that nodes follow the graph construction protocol.
Exploring these options is beyond the scope of this paper. 

Second, we have paid less attention to the issue of message latency by assuming the two-phase architecture of dandelion spreading. As stated in Sec. \ref{sec:main}, it is unclear if such an architecture is inherently optimal. Understanding the tradeoff between anonymity and latency is of fundamental interest. 




\newpage
%
\bibliographystyle{abbrv}


\bibliography{privacy}

%
%

\appendix

\section{Proofs}

\subsection{Section \ref{sec:bounds}: Anonymity Metric Properties }

\subsubsection{Proof of Theorem \ref{thm:bounds_prec_rec}}
\label{proof:bounds_prec_rec}
Consider any realization of the network, in which the messages $\mathcal{X}$ are mapped to the servers $V_H$ according to mapping rule $\mathtt{M}$. 
Then from the definition of precision and recall at any node $v$ (Equations \eqref{eq:precision}, \eqref{eq:recall}), we have 
\begin{align}
D_{\texttt{M}}(v) &= \frac{\mathbbm{1}\{\mathtt{M}(X_v) = v\}}{\sum_{w\in V_H} \mathbbm{1}\{ \mathtt{M}(X_w) = v\} }  \nonumber \\
& \leq \mathbbm{1}\{\mathtt{M}(X_v) = v\} = R_{\texttt{M}}(v). 
\end{align}
Hence it follows that the macro-averaged precision $D_\mathtt{M}$ is at most the macro-averaged recall $R_\mathtt{M}$, implying $\mathbf{D}_\mathtt{M} \leq \mathbf{R}_\mathtt{M}$.  

To prove inequality (b), let $V_\mathtt{M} = \{v\in V_H: \mathtt{M}(X_v) = v\}$ denote the set of servers whose corresponding messages are correctly mapped by $\mathtt{M}$.
Further, for each such node $v\in V_\mathtt{M}$, let $I_v = \{x\in\mathcal{X}:\mathtt{M}(x)=v, x\neq X_v\}$ denote all the messages (other than $v$'s own message $X_v$) that are mapped to $v$.   
Then, by definition we have $R_\mathtt{M} = |V_\mathtt{M}|/\tilde{n}$ and 
\begin{align}
\tilde{n}D_\mathtt{M} &= \sum_{v\in V_H}D_{\texttt{M}}(v) = \sum_{v\in V_\mathtt{M}} \frac{1}{|I_v|+1} \notag \\
&\geq \frac{|V_\mathtt{M}|^2 }{\sum_{v\in V_\mathtt{M}}(|I_v|+1)} \geq \frac{|V_\mathtt{M}|^2}{\tilde{n}} = \tilde{n}R_\mathtt{M}^2 \label{eq: prec recall eq}
\end{align}
where Equation~\eqref{eq: prec recall eq} follows from the arithmetic-mean harmonic-mean (A.M-H.M) inequality and $\sum_{v\in V_\mathtt{M}}(|I_v| + 1) \leq \tilde{n}$. 
Hence we have $R_\mathtt{M} \leq \sqrt{D_\mathtt{M}}$, which upon taking expectation and using Jensen's inequality, yields $\mathbf{R}_\mathtt{M} \leq \sqrt{\mathbf{D}_\mathtt{M}}$. 

\subsubsection{Proof of Theorem \ref{thm:lower_bounds_fs}}
\label{proof:lower_bounds_fs}

Recall that for honest server $v$, the tuple $(x,u,T_u(x))$ is contained in $S_v$ if $v$ forwards message $x$ to adversarial node $u$ at time $T_u(x)$. 
Let us now define a related quantity $\bar{S}_v$ to denote the set of messages $x\in\mathcal{X}$ forwarded by $v$ to some adversary such that $x$ was not received by any adversarial node previously. 
This quantity $\bar{S}_v$ is useful in analyses involving the first-spy estimator. 
$\bar{\mathbf{S}}$ denotes the vector of all $\bar{S}_v$'s. 
\begin{lemma} 
If $v\in V_H$ is a honest server node in a network with a fraction $p$ of adversaries, then the recall of the first-spy estimator is $\mathbf{R}_\mathtt{FS}(v) = \mathbb{P}(X_v \in \bar{S}_v) \geq p$.
\label{lem: lower bound}
\end{lemma}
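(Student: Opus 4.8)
The plan is to show that $\mathbb{P}(X_v \in \bar{S}_v) \geq p$ for every honest server $v$, since by definition of the first-spy estimator this probability is exactly its per-node recall $\mathbf{R}_\mathtt{FS}(v)$: the first-spy estimator outputs $v$ for transaction $X_v$ precisely when $v$ is the first honest node to forward $X_v$ to some adversary, i.e. when $X_v \in \bar{S}_v$. The inequalities \eqref{eq:prec_lower}, \eqref{eq:rec_lower} then follow immediately, since $\mathbf{R}_\mathtt{OPT} \geq \mathbf{R}_\mathtt{FS} = \frac{1}{\tilde n}\sum_v \mathbf{R}_\mathtt{FS}(v) \geq p$, and $\mathbf{D}_\mathtt{OPT} \geq \mathbf{R}_\mathtt{OPT}^2 \geq p^2$ by combining Theorem~\ref{thm:bounds_prec_rec}(b) applied to the precision-optimal matching (whose precision and recall coincide) with $\mathbf{R}_\mathtt{OPT}\ge p$ — or more directly, one can lower-bound $\mathbf{D}_\mathtt{OPT}$ by the precision of the first-spy estimator, which on average is at least $p^2$ because it partitions honest nodes into wards and a ward containing $k$ nodes contributes $k \cdot \tfrac{1}{k}$ true positives over $\le k^2$... actually the cleanest route is just $\mathbf{D}_\mathtt{OPT} \ge \mathbf{D}_\mathtt{FS}$ together with the observation below, so I will present the recall bound carefully and invoke Theorem~\ref{thm:bounds_prec_rec} for the precision bound.

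The core claim to establish is $\mathbb{P}(X_v \in \bar{S}_v) \geq p$. First I would fix the transaction $X_v$ originating at $v$ and condition on everything about the spreading trajectory \emph{except} the adversarial/honest labelling of the nodes — or, equivalently, use the fact that the adversarial set is a uniformly random $p$-fraction of $V$ placed independently of the mapping $\mathbf X$. The key structural observation is that the message $X_v$, as it spreads, reaches nodes in some order; let $v = v_0, v_1, v_2, \dots$ be the sequence of distinct nodes that receive $X_v$ over time (for diffusion this is a branching-type process, but all we need is that $v$ itself is the first node to hold the message). The event $X_v \in \bar{S}_v$ says that among all honest nodes that ever forward $X_v$ to an adversary, $v$ is the first; a sufficient condition is simply that $v$ itself forwards $X_v$ directly to at least one adversarial neighbor \emph{before} any honest node does so — but in fact an even simpler sufficient event works: if the source $v$ is honest (which we are assuming) and at least one of the adversarial nodes is among the set of nodes to which $v$ directly relays the message, then $v$ delivers $X_v$ to an adversary, and since $v$ holds the message strictly before any of its relay targets, no honest node can have forwarded it to an adversary earlier. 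So it suffices to show $\mathbb{P}(\text{at least one adversary is reached directly from } v) \geq p$.

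To get the bound $\geq p$ cleanly without tracking the branching structure, here is the trick I would use: consider the \emph{first} adversarial node $a$ to receive $X_v$ (this exists with probability $1$ on a connected graph since the message eventually floods everywhere). Let $w$ be the honest node that forwarded $X_v$ to $a$ — this $w$ is by construction the first honest node to deliver $X_v$ to any adversary, so $X_v \in \bar{S}_w$. Thus exactly one honest node $w = w(X_v)$ has $X_v \in \bar{S}_{w}$ per realization (the "exit node" to the first spy), and $\mathbb{P}(X_v \in \bar{S}_v) = \mathbb{P}(w(X_v) = v)$. Now, by symmetry of the spreading protocol in node labels and the uniform-random placement of adversaries independent of which node is the source, $w(X_v)$ is "the source node with probability at least the probability the source is itself adversarial-adjacent-first"... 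The cleanest argument: condition on the unlabelled spreading pattern and on which node turns out to be $a$; then $w$ is the node immediately preceding $a$ on the message's path from $v$. I claim $\mathbb{P}(w = v \mid a \text{ is the first spy}) = \mathbb{P}(v\text{'s message's first adversarial recipient is a direct relay of } v)$. The key quantitative step is: the source $v$ relays directly to some set of neighbors; the probability that the \emph{first} node among all recipients of $X_v$ to be labelled adversarial is one of $v$'s direct recipients. Since the adversarial labels are i.i.d. Bernoulli$(p)$ placed on the ordered list of recipients $v_0=v, v_1, v_2,\dots$ (independent of the order), and $v=v_0$ is honest (given), the first recipient among $v_1, v_2, \dots$ to be adversarial is recipient $v_j$ for some $j\ge 1$ with $v_1,\dots,v_{j-1}$ all honest; we need $v_j$ to be a direct relay target of $v$. \textbf{The main obstacle} is precisely this combinatorial step: bounding $\mathbb{P}(v_j \text{ is a child of } v \text{ in the spreading DAG})$ from below by $p$ uniformly over spreading protocols. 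I would handle it by noting that in any symmetric spreading process on a connected graph, $v$ forwards to at least one other node, and more carefully: the probability that the nearest adversary to $v$ along the spreading order is reached via a path of length exactly $1$ from $v$. A clean way to force the bound: $\mathbb{P}(X_v\in\bar S_v) \ge \mathbb{P}(v\text{ relays to an adversary before relaying to any honest node that then reaches an adversary sooner})$, and since $v$'s first relay target is adversarial with probability $p$ (it being a uniformly-random-ish node, independent of labels, and at that instant no honest node has yet touched an adversary because $v$ is strictly earliest), we get $\mathbb{P}(X_v \in \bar S_v) \ge p$. I expect the write-up to require care in arguing that at the moment $v$ makes its first forward, conditioning on that target's label being adversarial is independent of the honest-labelling of $v$ and doesn't change the "first" status — this independence, following from the uniform random placement of $V_A$ and symmetry of $\sigma$, is the crux I would spell out.
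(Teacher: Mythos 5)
Your final argument is essentially the paper's own proof: let $U$ be the first node to which $v$ forwards $X_v$; by the symmetry of the spreading rule in node IDs and the uniform placement of $V_A$ (independent of the forwarding choice), $U$ is adversarial with probability $\tfrac{np}{n-1}\geq p$, and in that event no adversarial node can have received $X_v$ earlier---only $v$ has held it---so $X_v\in\bar{S}_v$, giving $\mathbf{R}_\mathtt{FS}(v)=\mathbb{P}(X_v\in\bar{S}_v)\geq p$. One caution: your intermediate claim that it suffices for \emph{some} direct relay target of $v$ to be adversarial is false in general, since if $v$'s forward to a spy happens late, an honest node that $v$ relayed to earlier may already have delivered $X_v$ to a spy, in which case that forwarding does not place $X_v$ in $\bar{S}_v$; your closing version, which restricts attention to $v$'s \emph{first} relay target, avoids this and is exactly what the paper does (including the remark that simultaneous first broadcasts to several neighbors are handled the same way).
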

\begin{proof}
Let $U \in \Gamma(v)$ denote the node to which $v$ first sends its message $X_v$. Then, 
\begin{align}
\mathbb{P}(U \in V_A) = \sum_{u\in V, u\neq v}\mathbb{P}(U = u)\mathbb{P}(U \in V_A | U = u) \notag \\ 
 = \sum_{u\in V, u\neq v}\frac{1}{n-1}\mathbb{P}(U\in V_A|U = u) = \frac{np}{n-1} \geq p,
 \end{align}
due to uniform distribution among the remaining nodes $V\backslash \{v\}$. Therefore we have, 
\begin{align} 
\mathbb{P}(X_v \in \bar{S}_v) \geq \mathbb{P}(U \in V_A) \geq p.
\end{align}
Thus $v$'s message is contained in $\bar{S}_v$ with probability at least $p$. The case where $v$ simultaneously broadcasts $X_v$ to multiple nodes can also be similarly bounded as above, and hence the lemma follows. 
\end{proof}

To show \eqref{eq:rec_lower}, note that $\mathbf{R}_\mathtt{OPT} \geq \mathbf{R}_\mathtt{FS}(v) \geq  p$, by Lemma \ref{lem: lower bound}.
Next, we show that the first-spy estimator also has a precision of at least $p^2$ regardless of the topology or  spreading scheme. 
Consider a random realization $\bar{\mathbf{S}}$, in which the adversaries observe a set of first-received messages $S_v\subseteq \mathcal{X}$ from each node $v\in V$.
Now, supposing in these observations there exists a subset of $t$ server nodes $\{v_1,v_2,\ldots,v_t\}$ whose own messages are included in the respective forwarded sets, i.e., $X_{v_i}\in \bar{S}_{v_i} \forall i=1,2,\ldots,t$. The macro-averaged precision in this case is 
\begin{align}
D_\mathtt{FS} = \frac{1}{\tilde n}\sum_{i=1}^t \frac{1}{|\bar{S}_{v_i}|} \geq \frac{t^2}{\tilde n \sum_{i=1}^{\tilde{n}} |\bar{S}_{v_i}|} \geq \frac{t^2}{\tilde{n}^2}, \label{eq: lower bound 1}
\end{align}
where the first inequality above is due to the arithmetic-mean harmonic-mean (A.M-H.M) inequality, and the second inequality is because the total number of messages is at most $\tilde n$. 
Equation~\eqref{eq: lower bound 1} in turn implies that  
\begin{align}
\mathbb{E}[D_\mathtt{FS}|T=t] \geq \frac{t^2}{\tilde{n}^2}.
\end{align}
The overall expected detection precision can then be bounded as 
\begin{align}
\mathbf{D}_\mathtt{FS} &= \mathbb{E}[D_\mathtt{FS}] = \sum_{t=0}^{\tilde{n}} \mathbb{P}(T=t)\mathbb{E}[D_\mathtt{FS}|T=t]  \notag \\
&\geq \sum_{t=0}^{\tilde{n}} \mathbb{P}(T=t) \frac{t^2}{\tilde{n}^2} = \frac{\mathbb{E}[T^2]}{\tilde{n}^2} \geq \frac{\mathbb{E}[T]^2}{\tilde{n}^2} \notag \\ 
&= \frac{\mathbb{E}[\sum_{v\in V_H}\mathbf{1}_{X_v\in \bar{S}_v}]^2}{\tilde{n}^2} \geq \frac{(p \tilde{n})^2}{\tilde{n}^2} = p^2 ,    \label{eq: lower bound 2}
\end{align}
where the inequality in Equation~\eqref{eq: lower bound 2} follows from Lemma~\ref{lem: lower bound}.  
Finally by definition we have $\mathbf{D}_\mathtt{OPT} \geq \mathbf{D}_\mathtt{FS}$ and hence the theorem follows.

\subsubsection{Proof of Theorem \ref{thm: optimal estimator}}
\label{proof: optimal estimator}

Let us first prove that the optimal mapping must be a matching. 
Supposing otherwise, consider a mapping $\mathtt{M}\in\mathcal{M}$ that is not a matching.
Then there exists a server $v$ that is mapped to the most number of messages $\{x_1,x_2,\ldots,x_k\}$ ($k>1$) in $\mathtt{M}$. 
This also implies there exists another node $u \in V_H$ such that no message is mapped to $u$. 
Now, the expected precision at server $v$ is given by 
\begin{align}
\mathbb{E}[D_\mathtt{M}(v)|\mathbf{O}] &= \frac{\sum_{i=1}^k \mathbb{P}(X_v=x_i|\mathbf{O})}{k} \notag \\
&\leq \max_{i\in\{1,\ldots,k\}} \mathbb{P}(X_v = x_i|\mathbf{O}).
\end{align}
On the other hand, the expected precision at $u$ is zero. 
Now, consider an alternative mapping $\mathtt{M'}\in\mathcal{M}$ in which all messages $x\in\mathcal{X}$ are mapped to servers exactly as in $\mathtt{M}$ except for the message $x_{i^*}$ where $i^* = \text{argmin}_{i\in\{1,\ldots,k\}}\mathbb{P}(X_v=x_i|\mathbf{O})$ which is mapped to server $u$. 
In this case, the expected precision at $v$ becomes
\begin{align}
\mathbb{E}[D_\mathtt{M'}(v)|\mathbf{O}] &= \frac{\sum_{i=1,i\neq i^*}^k \mathbb{P}(X_v=x_i|\mathbf{O})}{k-1} \notag \\
&\geq \mathbb{E}[D_\mathtt{M}(v)|\mathbf{O}],
\end{align}
while the expected precision at $u$ is 
\begin{align}
\mathbb{E}[D_\mathtt{M'}(u)|\mathbf{O}] = \mathbb{P}(X_u=x_{i^*}|\mathbf{O}) \geq 0. 
\end{align}
As such the total expected precision at servers $u$ and $v$ is 
\begin{align}
\mathbb{E}[D_\mathtt{M'}(v) + D_\mathtt{M'}(u) |\mathbf{O}] &\geq \mathbb{E}[D_\mathtt{M}(v) + D_\mathtt{M}(u) |\mathbf{O}] \notag \\
\Rightarrow \mathbb{E}[D_\mathtt{M'}|\mathbf{O}] &\geq \mathbb{E}[D_\mathtt{M}|\mathbf{O}].
\end{align}
Thus we have constructed a new mapping $\mathtt{M'}$ whose expected precision is at least as much as $\mathtt{M}$ and in which the maximum number of messages mapped to any server is smaller by 1.\footnote{In case of ties, we repeat the above process to each of the servers until the maximum server degree reduces by one.} Continuing this process, we conclude that for any mapping $\mathtt{M}\in\mathcal{M}$ there exists another matching mapping $\mathtt{M'}$ such that $\mathbb{E}[D_\mathtt{M'}|\mathbf{O}] \geq \mathbb{E}[D_\mathtt{M}|\mathbf{O}]$. Thus the optimizing mapping is achieved by a matching. 
 
Now, let $\mathcal{M^*}$ denote the set of all matchings in the bipartite graph $(V_H,\mathcal{X})$. 
By the first part of the theorem above, we can restrict our search to $\mathcal{M^*}$ for finding the optimal mapping. As such, 
\begin{align}
\mathbb{E}[D_\mathtt{OPT}|\mathbf{O}] &= \max_{\mathtt{M}\in\mathcal{M^*}} \mathbb{E}[D_\mathtt{M}|\mathbf{O}] \notag \\
&= \max_{\mathtt{M}\in\mathcal{M^*}} \sum_{(v,x)\in \mathtt{M}} \mathbb{P}(X_v = x|\mathbf{O}),
\end{align}  
implying that the optimum is achieved by a maximum weight matching.

\subsubsection{Proof of Corollary \ref{cor: opt upper bound}}
\label{proof: opt upper bound}
Let $\mathtt{M}\in\mathcal{M}$ be any mapping under observations $\mathbf{O}=(\mathbf{S},\mathbf{\Gamma})$. Consider a server $v$ and let $\{x_1,x_2,\ldots,x_k\}$ be the set of messages that are mapped to $v$ in $\mathtt{M}$. Then, 
\begin{align}
\mathbb{E}[D_\mathtt{M}(v)|\mathbf{O}] &\leq \frac{\sum_{i=1}^k \mathbb{P}(X_v=x_i|\mathbf{O})}{k}  \notag \\
&\leq \max_{i\in\{1,\ldots,k\}} \mathbb{P}(X_v = x_i|\mathbf{O}) \notag \\
&\leq \max_{x\in\mathcal{X}} \mathbb{P}(X_v = x|\mathbf{O}). \label{eq: cor upp bound}
\end{align}
Since the above Equation~\eqref{eq: cor upp bound} holds for any mapping $\mathtt{M}$,  it must hold for the optimal mapping as well.

\subsubsection{Proof of Theorem \ref{thm: optimal estimator recall}}
\label{proof: optimal estimator recall}

We want to prove that the optimal mapping must map each message $x\in\mathcal X$ to a server $v$ that maximizes $\prob(X_v=x|\mathbf{O})$. 
Supposing otherwise, let us consider a mapping $\mathtt{M}\in\mathcal{M}$ where there exists a server $w$ that is mapped to a set of messages $\{x_1,x_2,\ldots,x_k\}$ ($k\geq1$), 
where w.l.o.g. $w \notin \argmax_{v\in V_H} \prob(X_v=x_1|\mathbf{O})$. 
The expected recall at server $w$ is given by 
\begin{align}
\mathbb{E}[R_\mathtt{B}(w)|\mathbf{O}] &= \sum_{i=1}^k \mathbb{P}(X_w=x_i|\mathbf{O}). \notag 
\end{align}
Further, consider another node $u\in V_H$ such that \\ $u \in \argmax_{v\in V_H} \prob(X_v=x_1|\mathbf{O})$.
Suppose it is mapped to a different set of messages $\{y_1,\ldots, y_j\}$.
The expected recall for node $u$ is 
\begin{align}
\mathbb{E}[R_\mathtt{B}(u)|\mathbf{O}] &= \sum_{i=1}^j \mathbb{P}(X_u=y_i|\mathbf{O}). \notag 
\end{align}
Now, consider an alternative mapping $\mathtt{M'}\in\mathcal{M}$ in which all messages $x\in\mathcal{X}$ are mapped to servers exactly as in $\mathtt{M}$ except for the message $x_{1}$, which is mapped to server $u$. 
In this case, the expected recall at $w$ becomes
\begin{align}
\mathbb{E}[R_\mathtt{M'}(w)|\mathbf{O}] &= \sum_{i=2}^k \mathbb{P}(X_w=x_i|\mathbf{O}) \notag 
\end{align}
while the expected recall at $u$ is 
\begin{eqnarray}
\mathbb{E}[R_\mathtt{M'}(u)|\mathbf{O}] &=&  \mathbb{E}[R_\mathtt{M}(u)|\mathbf{O}]  + \mathbb{P}(X_u=x_{1}|\mathbf{O}).
\end{eqnarray}
As such the total expected precision at servers $u$ and $v$ is 
\begin{align}
\mathbb{E}[D_\mathtt{M'}(w) + D_\mathtt{M'}(u) |\mathbf{O}] &\geq \mathbb{E}[D_\mathtt{M}(w) + D_\mathtt{M}(u) |\mathbf{O}] \notag \\
\Rightarrow \mathbb{E}[D_\mathtt{M'}|\mathbf{O}] &\geq \mathbb{E}[D_\mathtt{M}|\mathbf{O}].
\end{align}
Thus we have constructed a new mapping $\mathtt{M'}$ whose expected precision is at least as much as $\mathtt{M}$ and in which the number of messages mapped to servers with sub-maximal likelihood is reduced by one. 
Continuing this process, we conclude that for any mapping $\mathtt{M}\in\mathcal{M}$ there exists another mapping $\mathtt{M'}$ such that each message $x$ is mapped to a server 
$v^*\in \argmax_{v\in V_H} \prob(X_v=x|\mathbf{O})$. 

\subsection{Section \ref{sec:algos}: Baseline Algorithms}

\subsubsection{Proof of Proposition \ref{thm:flooding static regular}}
\label{proof:flooding static regular}
Consider the broadcasting experiment on a random realization $G$ of the network topology. 
In the static case this topology is completely known to the adversary.  
As defined in Section~\ref{proof:lower_bounds_fs}, for any honest server $v\in V_H$, let $\bar{S}_v$ denote the set of transactions $x\in\mathcal{X}$ such that $x$ is directly forwarded by $v$ to some adversary and $x$ has not been received by any adversary previously. 
By our assumption on flooding, this means a server $v$'s message is contained in $\bar{S}_u$ if and only if 
\begin{itemize}
\item[(i)] $u$ is reachable from $v$, 
\item[(ii)] $u$ has an out-going edge to an adversary and 
\item[(iii)] no other node $u'\in V_H$ that satisfies the previous two conditions is strictly closer to $v$ than $u$. \end{itemize}
Thus by looking at the graph $G$, the adversary can construct a bipartite graph $B(V_H, V_H)$ in which there is edge $(u,v)\in V_H\times V_H$ if and only if $X_v$ will be contained in $\bar{S}_u$. 
Further for each $u\in V_H$, let $W_u = \{v\in V_H: (u,v)\in B, \nexists~ u'\neq u \text{ s.t. } (u',v)\in B \}$ denote the set of server nodes whose messages reach only the message set $\bar{S}_u$. 
Note that for servers $v\in V_H$ that have an out-going edge to an adversary, we must have $v\in W_v$. 

Now, once the messages have been broadcast in $G$, consider the following mapping strategy $\mathtt{M}$ for the adversary. 
First, for each set $\bar{S}_v$ we compute a subset $\bar{S}'_v = \{x\in \bar{S}_v: x \notin \bar{S}_{u}\forall u\neq v\}$.
Such a set $\bar{S}'_v$ corresponds to the messages that were delivered to the adversaries only by server $v$ and no other server. 
Thus, the messages in $\bar{S}'_v$ must precisely belong to the servers in $W_v$. 
As such, the adversary's mapping strategy can be: (i) for each $v\in V_H$ pick a random matching mapping between $\bar{S}'_v$ and $W_v$ and (ii) assign any remaining messages randomly to the remaining server nodes. 
Let $\mathcal{E}_v$ denote the event that $v$ has an out-going edge to an adversary. 
The payoff can then be bounded as
\begin{align}
\mathbb{E}[\tilde{n}D_\mathtt{M}|G] &\geq \mathbb{E}\left[ \sum_{\substack{v\in V_H : \\ |W_v|\geq 1}}\sum_{u\in W_v}\mathbbm{1}\{\mathtt{M}(X_u) = u\} \bigg| G \right] \notag \\
&= \sum_{\substack{v\in V_H : \\ |W_v|\geq 1}}\sum_{u\in W_v} \mathbb{E}[\mathbbm{1}\{\mathtt{M}(X_u) = u\} | G] \notag \\
&= \sum_{\substack{v\in V_H : \\ |W_v|\geq 1}}\sum_{u\in W_v} \frac{1}{|W_v|} = \sum_{v\in V_H}\mathbbm{1}\{ |W_v| \geq 1\}, \label{eq: static flood}
\end{align}
where Equation~\eqref{eq: static flood} follows because in a random matching any message in $S'_v$ is likely to be assigned to its true server in $W_v$ with probability $1/|W_v|$. Hence the total average precision is bounded by
\begin{align}
\tilde{n}\mathbf{D}_\mathtt{M} &\geq \mathbb{E}\left[\sum_{v\in V_H}\mathbbm{1}\{ |W_v| \geq 1\}\right] \\
&\geq \sum_{v\in V_H} \mathbb{P}(\mathcal{E}_v) 
= \tilde{n}(1- (1-p)^d)
\end{align}
and we have the proposition.

\begin{figure}[t]
    \centering
  \includegraphics[width=.21\textwidth]{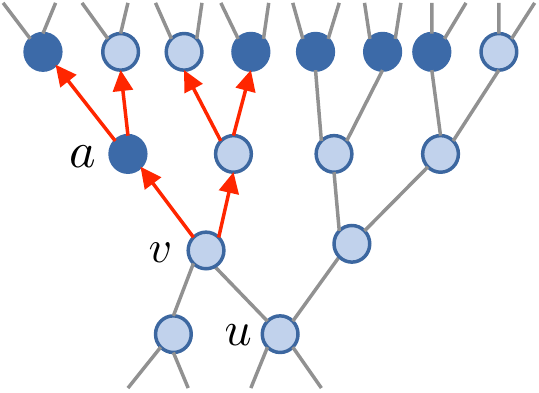}
  ~
  \includegraphics[width=.21\textwidth]{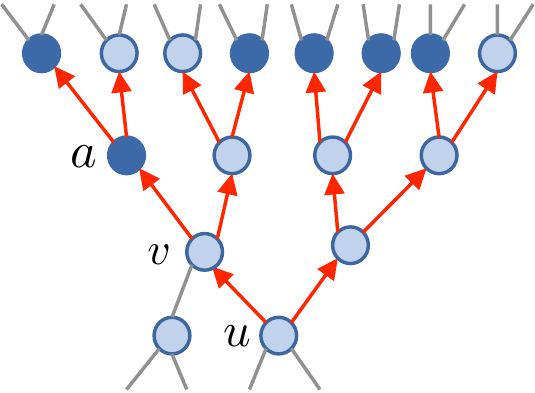}
  \caption{Comparison of the number of adversarial nodes receiving a message in $2$ rounds following first reception when the source is (i) directly connected to an adversary (left) and (ii) away from the adversary (right). The propagation of the message is shown in red; darkened nodes are adversarial. 
 }
  \label{fig:flooding_propagation}
\end{figure}

\subsubsection{Proof Sketch of Proposition \ref{thm:flooding dynamic regular}}
\label{proof:flooding dynamic regular}
Before we begin the proof, first notice that in the dynamic setting the adversaries are directly connected to at most $dpn$ (i.e., roughly a fraction $p$) honest servers, while the rest of the server locations are unknown to the adversary.
Since the hidden servers can only be trivially de-anonymized, in order to obtain our claimed average precision of $O(p)$ each of the servers visible to the adversary must be de-anonymized with a high precision close to 1. 
Indeed, in the following we describe a simple mapping scheme $\mathtt{M}$ that achieves this high precision. 

For a server $v\in V_H$, let $\mathcal{E}_v$ denote the event that at least one of $v$'s out-going edges is connected to an adversary. 
Consider then, the spreading of $v$'s message $X_v$ in the graph under event $\mathcal{E}_v$.
Since $G$ is a random $d$-regular graph, using the result in~\cite{makover2006regular}, there is almost surely a regular tree of depth at least $\frac{1}{2} \log_{d-1} n$ rooted at $v$.
For simplicity, let us consider $d=4$ in which case there is a tree of depth at least $\frac{1}{4}\log n$ rooted at $v$ almost surely.   
Thus $v$'s message propagates along this tree, reaching two nodes (at least one of which is an adversary, due to $\mathcal{E}_v$) in the first round and subsequently reaching $2^i$ new nodes in the $i$-th round for each $i<\frac{1}{4}\log n$. 
Since a fraction $p$ of the nodes are adversarial, this implies in the $i$-th round we expect roughly $p2^i$ adversarial nodes to receive $X_v$. 
On the other hand, if some other server $u$ upstream of $v$ had started broadcasting its message $X_u$, then more adversarial nodes would have received $X_u$ in the $i$-th round following reception by the first adversarial node (see Figure~\ref{fig:flooding_propagation}). 

The above observation then, naturally motivates a mapping $\mathtt{M}$ as described in Algorithm~\ref{algo:dreg_dyn_map}. 
\begin{algorithm}[t]
\DontPrintSemicolon
\KwIn{Time-stamp $T_v(x)$ and sender $S_v(x)$ for each message $x\in\mathcal{X}$ received by adversary $v$ for all $v\in V_A$.}
\KwOut{Mapping $\mathtt{M}$ from $\mathcal{X}$ to $V_H$}
$I\leftarrow V_H, J\leftarrow \mathcal{X}$ \; 
\For{each $x\in \mathcal{X}$} {
$a_\mathrm{init} \leftarrow \argmin_{v\in V_A} T_v(x)$ \;
$T_\mathrm{init} \leftarrow T_{a_\mathrm{init}}(x) $ \;
$v_\mathrm{init} \leftarrow S_{a_\mathrm{init}}(x)$\;
$\eta \leftarrow |\{v\in V_A: T_v(x) = T_\mathrm{init} + \frac{1}{4}\log n - 1\}|$ \;
\If{$\eta < 2pn^{1/4} $}{
	$\mathtt{M}(x)\leftarrow v_\mathrm{init}$\;
	$I \leftarrow I \backslash \{v_\mathrm{init}\}$ \;
	$J \leftarrow J \backslash \{x\}$ \;
	}
}
Randomly assign messages in $J$ to servers in $I$\;
\Return{$\mathtt{M}$}\;
\caption{Mapping algorithm under flooding for a dynamic $4$-regular graph.}
\label{algo:dreg_dyn_map}
\end{algorithm}
In this strategy, the adversary simply counts the number of adversarial nodes that received a particular message $x\in\mathcal{X}$ at a time $\frac{1}{4}\log n - 1$ rounds after  the message was first received by some adversarial node $a$.
If this number is small ($<2pn^{1/4}$) then we conclude the source of $x$ to be the server $v$ that sent the message to the first adversary. Otherwise the message is randomly assigned to an unassigned server at the end. 

The algorithm works because if $v$ were truly the source of $x$, then in the $(\frac{1}{4}\log n - 1)$-th round following reception by $a$, the number of adversarial nodes to receive $x$ is less than $2pn^{1/4}$ with a probability at least $1-2^{-\log(4/e) p n^{1/4}}$ by the Chernoff bound. On the other hand if $v$ were not the true source of $x$, then $x$ was initially broadcast at a time at least 2 rounds before $a$ received it. 
This implies at least $2pn^{1/4}$ adversarial nodes receive the message at a time $\frac{1}{4}\log n -1$ rounds following reception by $a$. 
Thus the total probability of error can be bounded by the union bound, to yield that whenever a server $v$'s out-going edges are connected to at least one adversarial node, $X_v$ is mapped to $v$ with precision 1 with high probability. 
Such an event $\mathcal{E}_v$ happens with a probability at least $p$ to conclude the Theorem. 

\subsubsection{Proof of Proposition \ref{thm:complete}}
\label{proof:complete}
For any message $X_u$, let $\Pi_u = (\Pi_{1,u},\Pi_{2,u},\ldots,\Pi_{L_u,u})$ be the path taken by a message from its source $u$ $(=\Pi_{1,u})$ until it reaches an adversarial node $\Pi_{L_u,u}$ for the first time ($L_u$ denotes the length of the path). 
Further for any two nodes $v,u\in V_H$, let $\mathcal{E}_{u,v}$ denote the event that $u$'s message $X_u$ reaches the adversary through server $v$, i.e., $\Pi_{1,u} = u, \Pi_{2,u} \notin V_A, \Pi_{3,u}\notin V_A, \ldots, \Pi_{k-2,u}\notin V_A, \Pi_{k-1, u} = v$ and $\Pi_{k,u}\in V_A$.  
Then by counting over paths of all possible lengths, we can evaluate probability of $\mathcal{E}_{u,u}$ as 
\begin{align}
\mathbb{P}(\mathcal{E}_{u,u}) &= \sum_{l \geq 2} \mathbb{P}(L_k = l, \mathcal{E}_{u,u}) \notag \\
&= \left( \frac{np}{n} \right) +  \sum_{l \geq 3} \left( \frac{\tilde{n}}{n} \right)^{l-3}\left( \frac{1}{n} \right) \left( \frac{np}{n} \right) = p + \frac{1}{n}. 
\end{align}
Similarly, for $u\in V_H, u\neq v$, 
\begin{align}
\mathbb{P}(\mathcal{E}_{u,v}) = \sum_{l\geq 3}\left( \frac{\tilde{n}}{n} \right)^{l-3} \left( \frac{1}{n} \right) \left( \frac{np}{n} \right) = \frac{1}{n}. \label{eq: comp graph prob}
\end{align}
Further, since the messages are all forwarded independently the set of events $\{\mathcal{E}_{v,u}: v\in V_H\}$  are mutually independent for each server $u\in V_H$. 
Hence the expected cost incurred at a server under the first-spy estimator can be written as 
\begin{align}
\mathbf{D}_\mathtt{FS}(v) &= \left( p + \frac{1}{n} \right) \mathbb{E}\left[\frac{1}{1+Z_v} \right] \notag \\
&= \left( p + \frac{1}{n} \right) \frac{1}{\tilde{n}\frac{1}{n}} \left(1-\left(1-\frac{1}{n}\right)^{\tilde{n}}\right),
\end{align}
where $Z_v = \sum_{u\in V_H, u\neq v} \mathbbm{1}\{\mathcal{E}_{u,v}\}$ is the number of messages, other than $X_v$, that reach the adversary through $v$ and $Z_v \sim \text{Binom}\left(\tilde{n}-1,\frac{1}{n}\right)$ because of independence of messages and Equation~\eqref{eq: comp graph prob}. 
The last equation above can be further simplified to yield the bound
\begin{align}
\mathbf{D}_\mathtt{FS}(v) \geq \frac{p}{1-p}(1-e^{p-1}),
\end{align}
which when averaged over all honest nodes $v\in V_H$ gives us the desired result. 

\subsection{Section \ref{sec:main}: Main Result - Dandelion}

\subsubsection{Proof of Theorem \ref{thm:dandelion recall}}
\label{proof:dandelion recall}
We first show that the first-spy estimator is recall-optimal for \algo~spreading, then that the first-spy estimator has an expected recall of $p$.

To show the first step, i.e., $\mathbf R_{\texttt{OPT}}=\mathbf R_{\texttt{FS}}$, 
Theorem \ref{thm: optimal estimator recall} implies that we must show that for every message $x$, its exit node $z$ (i.e., the node implicated by the first-spy estimator) maximizes $\prob(X_v = x|\mathbf{O})$.
For any message $X_u$, let $\Pi_u = (\Pi_{1,u},\Pi_{2,u},\ldots,\Pi_{L_u,u})$ be the path taken by a message from its source $u$ $(=\Pi_{1,u})$ until it reaches an adversarial node $\Pi_{L_u,u}$ for the first time ($L_u$ denotes the length of the path). 
From the adversary's observation $\mathbf S$, $\Pi_{L_u-1,u}$ and $\Pi_{L_u,u}$ are fixed as the exit node $z$ and the first spy for $X_u$, respectively.
Due to the specification of \algo~spreading (Algorithm \ref{algo:dandelion}), the likelihood of this path, $\mathcal L(\Pi_u)$, is 
$
\mathcal L(\Pi_u) = \prod_{i = 1}^{L_u-1} \frac{1}{\text{deg}(\Pi_{i,u})},
$
where deg$(v)$ denotes the out-degree of $v$.
Assuming a uniform prior over candidate sources, we have $\prob(X_v = x|\mathbf{O}) \propto \mathcal L(\Pi_v)$. Since each node is assumed to have an out-degree of at least 1, this likelihood is maximized by taking the shortest path possible. That is,  the maximum-likelihood path over all paths originating at honest candidate sources gives $z \in \argmax_{v\in V_H} \prob(X_v = x|\mathbf O)$.
Hence the first-spy estimator is also a maximum-recall estimator. 

Now we analyze the recall of the first-spy estimator.
Let $\mathcal{P}_v$ denote the event that $v$'s parent (i.e., the next node in the line) is adversarial. 
Then the expected recall is
\begin{align*}
\mathbf R_{\texttt{OPT}} = \mathbb{E}[R_\mathtt{FS}|\mathbf{S},G] &= \frac{1}{\tilde n}\E \left [\sum_{v\in V_H} \mathbbm{1}\{\mathcal{P}_v\} \right ] \\
\Rightarrow \mathbf{R}_\mathtt{OPT} = \frac{1}{\tilde n}\sum_{v\in V_H} \mathbb{P}(\mathcal{P}_v) &=\frac{1}{\tilde n}\sum_{v\in V_H} \left( \frac{n-1}{n} \frac{np}{n-1} \right) = p.
\end{align*}

\subsubsection{Proof of Proposition \ref{thm:static tree}}
\label{proof:static tree}

For any node $v$, let $W_v = \{u\in V_H: X_u \in S_v\}$ denote the ward under node $v$ and let $\mathbf{W}$ denote the set of all wards. 
Note that in the tree topology, the wards $W_v$ can be completely determined from knowledge of the graph $G$.  
Also, let $I(v)$ denote the node $u\in V_H$ such that $X_v \in S_u$. Then, the expected cost at a node $v$ under observations $\mathbf{O}=(\mathbf{S},G)$ by the adversary can be written as
\begin{align}
\mathbb{E}[D_\mathtt{M}(v)|\mathbf{S},G] = \mathbb{E}[D_\mathtt{M}(v)|\mathbf{S},G,\mathbf{W}] = \frac{1}{|W_{I(v)}|}. \label{eq: tree ward prob} 
\end{align}
This follows because the matching estimator $\mathtt{MAT}$ assigns the messages in $S_v$ to the nodes in $W_v$ as a random matching, and hence the probability of a node receiving the correct message is $1/|W_v|$.  
Summing Equation~\eqref{eq: tree ward prob} over all honest nodes, we have 
\begin{align}
\mathbb{E}[D_\mathtt{MAT}|\mathbf{S},G] = \sum_{v\in V_H} \frac{1}{|W_{I(v)}|} = |W|,  \label{eq: tree ward number}
\end{align}
where $|W| = |\{v:W_v \neq \emptyset \}|$ denotes the number of non-empty wards, and $\emptyset$ denotes the null set. 
Now, let $\mathcal{P}_v$ denote the event that $v$'s parent is adversarial. 
Since a ward under a node $v$ is non-empty iff $v$'s parent is adversarial, Equation~\eqref{eq: tree ward number} above becomes
\begin{align}
\mathbb{E}[D_\mathtt{MAT}|\mathbf{S},G] &= \frac{1}{\tilde n}\E \left [\sum_{v\in V_H} \mathbbm{1}\{\mathcal{P}_v\} \right ] \\
\Rightarrow \mathbf{D}_\mathtt{MAT} = \frac{1}{\tilde n}\sum_{v\in V_H} \mathbb{P}(\mathcal{P}_v) &=\frac{1}{\tilde n}\sum_{v\in V_H} \left( \frac{n-1}{n} \frac{np}{n-1} \right) = p.   \notag 
\end{align}

\subsubsection{Proof of Proposition \ref{thm: dynamic trees}}
\label{proof: dynamic trees}
First note that since the tree is dynamic, the adversary's observations consists of $\mathbf{O}=(\mathbf{S},\Gamma(V_A))$, i.e., the transaction logs and the local neighborhood of adversarial nodes. 
Now for any honest node $v\in V_H$, let $\mathcal{E}_v$ denote the event that (i) $v$ occurs at a position in $G$ which is a leaf of the tree and (ii) $v$'s parent is an adversary. 
Similarly let $\mathcal{I}_v$ denote the event that $v\in V_H$ occurs at the interior of the tree. 
We first show that whenever $\mathcal{E}_v$ happens, $v$ is detected with certainty under the first-spy estimator, i.e., 
\begin{align}
\mathbb{E}[D_\mathtt{FS}(v)|\mathbf{S},\Gamma(V_A),\mathcal{E}_v] = 1. 
\end{align} 
This is because $D_\mathtt{FS}(v) = \frac{\sum_{x\in S_v}\mathbbm{1}\{X_v = x\}}{|S_v|}$ in the first-spy estimator and $S_v = \{X_v\}$ whenever $\mathcal{E}_v$ happens. As such, 
\begin{align}
\mathbb{E}[D_\mathtt{FS}(v)|\mathbf{S},\Gamma(V_A),\mathcal{E}_v] &= \mathbb{E}[\mathbbm{1}\{X_v = X_v\}|\mathbf{S},\Gamma(V_A),\mathcal{E}_v] = 1 \notag  \\
\Rightarrow \mathbb{E}[D_\mathtt{FS}(v)|\mathcal{E}_v] &= 1. 
\end{align}
Hence the expected payoff becomes
\begin{align}
\mathbf{D}_\mathtt{FS}(v) &= \mathbb{P}(\mathcal{E}_v)\mathbb{E}[D_\mathtt{FS}(v)|\mathcal{E}_v] + \mathbb{P}(\mathcal{I}_v)\mathbb{E}[D_\mathtt{FS}(v)|\mathcal{I}_v] \\
&\geq \mathbb{P}(\mathcal{E}_v)\mathbb{E}[D_\mathtt{FS}(v)|\mathcal{E}_v] = \frac{1}{2}\frac{np}{(n-1)} \geq \frac{p}{2},
\end{align}
since at least half of the nodes are leaves in a perfect 
$d$-ary tree.
Summing over all honest nodes gives the result.

\subsubsection{Proof of Theorem \ref{thm: dynamic line}}
\label{proof: dynamic line}

As in the case of dynamic trees, the adversary's observations consists of $\mathbf{O}=(\mathbf{S},\Gamma(V_A))$ in the dynamic line as well. 
The proof works by evaluating the cost incurred under various possibilities for the local neighborhood structure around a node in the network. 
For any honest server node $v\in V_H$, let $\mathcal{E}_v(i,j)$ denote the event that (i) $i$ nodes preceding $v$ are honest nodes, the $(i+1)$-th node preceding $v$ is adversarial and (ii) $j$ nodes succeeding $v$ are honest nodes and the $(j+1)$-th node following $v$ is adversarial. 
Also for ease of notation let $\mathcal{I}_v$ denote the event $\cup_{i>0,j>0}\mathcal{E}_v(i,j)$. 
Then the following lemmas hold true. 
\begin{lemma} \label{lem: line graph expect val}
On a line-graph, for any $i,j>0$, we have 
\begin{align}
&\mathbb{E}[\max_{x\in\mathcal{X}}\mathbb{P}(X_v=x|\mathbf{S}, \Gamma(V_A),\mathcal{E}_v(i,0))|\mathcal{E}_v(i,0)]  \leq \frac{1}{i+1} \notag \\
&\mathbb{E}[\max_{x\in\mathcal{X}}\mathbb{P}(X_v=x|\mathbf{S}, \Gamma(V_A),\mathcal{E}_v(0,j))|\mathcal{E}_v(0,j)]  \leq \frac{1}{j+1} \notag \\ 
 &\mathbb{E}[\max_{x\in\mathcal{X}}\mathbb{P}(X_v=x|\mathbf{S}, \Gamma(V_A),\mathcal{E}_v(0,0))|\mathcal{E}_v(0,0)]  \leq 1 \notag \\
&\mathbb{E}[\max_{x\in\mathcal{X}}\mathbb{P}(X_v=x|\mathbf{S}, \Gamma(V_A),\mathcal{I}_v)|\mathcal{I}_v] \leq \frac{1}{n(1-3p)}. 
\end{align}
\end{lemma}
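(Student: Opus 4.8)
The plan is to bound the posterior probability $\mathbb{P}(X_v = x \mid \mathbf{S}, \Gamma(V_A), \mathcal{E}_v(i,j))$ for each shape of local neighborhood, using the fact that under \algo~spreading the only relevant observation is, for each message, the honest exit node that first delivered it to the first spy (this is the simplified $\mathbf{S}$ established just before Theorem~\ref{thm:dandelion recall}). First I would handle the one-sided cases $\mathcal{E}_v(i,0)$. Here $v$'s parent is adversarial, so $v$ is itself an exit node; the set of messages exiting to this spy via $v$ consists precisely of the messages originating at $v$ and at the $i$ honest predecessors of $v$ on the line. Conditioned on $\mathcal{E}_v(i,0)$, each of these $i+1$ nodes is equally likely (by the symmetry of the label assignment and the uniform prior on $\mathbf{X}$) to be the source of any particular message in that exit set, so the posterior that $v$ is the source of any fixed $x$ is at most $\frac{1}{i+1}$, whence the max over $x\in\mathcal{X}$ is at most $\frac{1}{i+1}$, and this already holds pointwise — no outer expectation is needed. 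The case $\mathcal{E}_v(0,j)$ is symmetric: now $v$ is an interior honest node whose $j$ honest successors lie between $v$ and the next spy, and the exit node for all $j+1$ of these nodes' messages is the same honest node (the one just before the downstream spy); again each of the $j+1$ nodes is equally likely to be the true source of that exit node's message, giving the bound $\frac{1}{j+1}$. The case $\mathcal{E}_v(0,0)$ is trivial: $\frac{1}{1} = 1$.

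The substantive case is $\mathcal{I}_v = \cup_{i>0,j>0}\mathcal{E}_v(i,j)$, where $v$ is ``buried'' — both a predecessor and a successor of $v$ are honest, so $v$ is neither an exit node nor adjacent to one, and the adversary's local view $\Gamma(V_A)$ reveals nothing about $v$'s position. The plan is to show that in this regime the posterior is essentially uniform over the pool of candidate honest nodes that are consistent with the observations, and that this pool has size $\Theta(n)$ with high probability. Concretely, conditioned on $\mathcal{I}_v$, node $v$ is indistinguishable (from the adversary's standpoint) from every other honest node that is interior to some honest segment of the line. I would argue that the number of such ``interior honest'' nodes is at least $n(1-3p)$ up to lower-order terms: on a random line with a fraction $p$ of adversarial nodes, the expected number of honest nodes adjacent to at least one spy is at most $2pn$ (two neighbors each adversarial with probability $\approx p$), so the number of buried honest nodes is at least $(1-p)n - 2pn = (1-3p)n$; one then either takes this in expectation directly or invokes a concentration bound. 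Averaging $\max_x \mathbb{P}(X_v=x\mid\cdots)$ over the randomness of the neighborhood and the message assignment, and using that a message with a given honest exit node is equally likely to have originated at any buried honest node in that segment or upstream of it, yields the claimed $\frac{1}{n(1-3p)}$ bound.

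The main obstacle is making the $\mathcal{I}_v$ argument rigorous: one must be careful that the relevant ``candidate pool'' for message $x$ is not just the buried nodes but all honest nodes on the line upstream of $x$'s exit node within the same honest segment, and that conditioning on $\mathcal{I}_v$ (an event about $v$'s own neighborhood) does not bias the distribution of the \emph{rest} of the line in a way that shrinks this pool below $\Theta(n)$. I expect this to be dispatched by the exchangeability of node labels: conditioning on the pattern of honest/adversarial nodes in a window of size $O(1)$ around $v$ leaves the global statistics of the line asymptotically unchanged, so the $(1-3p)n$ count survives, and the factor $1/n$ in the statement absorbs the $O(1)$ discrepancies. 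The $p<1/3$ hypothesis of Theorem~\ref{thm: dynamic line} is exactly what keeps $n(1-3p)$ positive, so the bound is only meaningful in that regime, which is consistent with the theorem's scope.
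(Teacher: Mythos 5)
Your treatment of $\mathcal{E}_v(i,0)$, $\mathcal{E}_v(0,j)$, and $\mathcal{E}_v(0,0)$ is essentially the paper's argument: on a line the messages exiting to a given spy are exactly those of the ward (honest segment) ending at that exit node, the assignment $\mathbf{X}$ is independent of the graph, so the posterior is uniform over the $i+1$ (resp. $j+1$) ward members, pointwise in the observations. The gap is in the $\mathcal{I}_v$ case, in two places. First, your plan to establish $|I|\geq(1-3p)n$ ``in expectation directly or via a concentration bound'' does not deliver the stated inequality: the lemma's bound $\frac{1}{n(1-3p)}$ has no slack, a lower bound on $\mathbb{E}[|I|]$ gives you nothing about $\mathbb{E}[1/|I|]$ (Jensen runs the wrong way), and a high-probability statement would leave error terms the lemma does not have. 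The fix is that the count is \emph{deterministic}: on a line each of the $pn$ adversarial nodes is adjacent to at most two honest nodes, so at most $2pn$ honest nodes are non-interior in every realization, hence $|I|\geq n(1-3p)$ always; no exchangeability or ``absorb $O(1)$ discrepancies into $1/n$'' argument is needed, and indeed such asymptotic reasoning would not produce the clean bound.

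Second, your worry about the candidate pool (``that segment or upstream of it'') is resolved more sharply than by appealing to approximate uniformity over buried nodes. On a line, a message in $S_u$ can only originate in the ward $W_u$ itself---anything further upstream is intercepted by the intervening spy and exits elsewhere---and the paper bounds the posterior by factoring: for $x\in S_u$, $\mathbb{P}(X_v=x\mid\mathbf{S},\Gamma(V_A),\mathcal{I}_v)=\mathbb{P}(v\in W_u\mid\cdot)\,\mathbb{P}(X_v=x\mid v\in W_u,\cdot)$. The adversary knows each ward's head, tail, and size, so given $\mathcal{I}_v$ the node $v$ occupies one of the $|W_u|-2$ interior slots of $W_u$ with probability $\frac{|W_u|-2}{|I|}$, and given $v\in W_u$ the message is $X_v$ with probability $\frac{1}{|W_u|}$; the product is at most $\frac{1}{|I|}\leq\frac{1}{n(1-3p)}$, uniformly in $x$. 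This factorization is the missing step that makes the interior case rigorous without any appeal to asymptotics, and it also automatically accounts for the non-buried ward members (head and tail) that your uniform-over-buried-nodes heuristic glosses over. One further small slip: under $\mathcal{E}_v(0,j)$ the node $v$ is not ``interior''---its predecessor is a spy, so $v$ is the known head of a ward of size $j+1$---though your conclusion $\frac{1}{j+1}$ is still the right one.
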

\begin{proof}
Consider a realization $G$ of the network topology such that our desired event $\mathcal{E}_v(i,0)$ happens. 
In such a graph $G$, the node succeeding $v$ is an adversarial node and the $i$ nodes preceding $v$ are honest. Let us denote this set of $i+1$ nodes -- comprising of the $i$ nodes preceding $v$ and $v$ itself -- as $W_v$ (i.e., the ward of $v$). 
Now, if the messages assigned to the nodes outside of $W_v$ is denoted by $X(V_H \backslash W_v)$, then for any $x\in S_v$ we have $\mathbb{P}(X_v = x|G,\mathbf{S},\Gamma(V_A),\mathcal{E}_v(i,0),X(V_H\backslash W_v)$
\begin{align}
 &= \frac{\mathbb{P}(X_v=x,\mathbf{S},X(V_H\backslash W_v)|G,\Gamma(V_A),\mathcal{E}_v(i,0))}{\sum_{x\in S_v}\mathbb{P}(X_v=x,\mathbf{S},X(V_H\backslash W_v)|G,\Gamma(V_A),\mathcal{E}_v(i,0))}  \notag \\
&= \frac{\mathbb{P}(X_v=x,X(V_H\backslash W_v)|G,\Gamma(V_A),\mathcal{E}_v(i,0))}{\sum_{x\in S_v}\mathbb{P}(X_v=x,X(V_H\backslash W_v)|G,\Gamma(V_A),\mathcal{E}_v(i,0))} \notag \\
&= \frac{1}{i+1}, \label{eq: line proof val 1}
\end{align}
by using the fact that the allocation of messages $\mathbf{X}$ is independent of the graph structure $(G,\Gamma(V_A),\mathcal{E}_v(i,0))$ and $\mathbb{P}(\mathbf{S}|X_v=x, X(V_H\backslash W_v), G, \Gamma(V_A), \mathcal{E}_v(i,0))=1$ on a line-graph. 
Now, taking expectation on both sides of Equation~\eqref{eq: line proof val 1} we get
\begin{align}
\mathbb{P}(X_v = x|\mathbf{S},\Gamma(V_A),\mathcal{E}_v(i,0)) = \frac{1}{i+1}~\forall x \in S_v \notag \\
\Rightarrow \max_{x\in\mathcal{X}} \mathbb{P}(X_v = x|\mathbf{S},\Gamma(V_A),\mathcal{E}_v(i,0)) = \frac{1}{i+1} \text{ or } \notag \\
 \mathbb{E}[\max_{x\in\mathcal{X}}\mathbb{P}(X_v=x|\mathbf{S}, \Gamma(V_A),\mathcal{E}_v(i,0))|\mathcal{E}_v(i,0)]  = \frac{1}{i+1}.
\end{align}
By a similar argument as above, we can also show that 
\begin{align}
 \mathbb{E}[\max_{x\in\mathcal{X}}\mathbb{P}(X_v=x|\mathbf{S}, \Gamma(V_A),\mathcal{E}_v(0,j))|\mathcal{E}_v(0,j)]  &= \frac{1}{j+1}, \notag \\ 
 \mathbb{E}[\max_{x\in\mathcal{X}}\mathbb{P}(X_v=x|\mathbf{S}, \Gamma(V_A),\mathcal{E}_v(0,0))|\mathcal{E}_v(0,0)]  &= 1.
\end{align}
Finally let us consider the case where $v$ is an interior node, i.e., event $\mathcal{I}_v$ happens. 
As before, for a head-node $u$ (an honest node whose successor is an adversarial node) let $W_u$ denote the ward containing $u$.   
Notice that under observations $\mathbf{S},\Gamma(V_A)$ the adversaries know (i) the head and tail nodes of each ward (from $\Gamma(V_A)$) and (ii) the size of each ward ($|W_u| = |S_u|$).
Therefore if a message $x$ is such that $x\in S_u$ for some $u$, then 
\begin{align}
\mathbb{P}&(X_v = x|\mathbf{S},\Gamma(V_A),\mathcal{I}_v) = \mathbb{P}(X_v = x, v\in W_u|\mathbf{S},\Gamma(V_A),\mathcal{I}_v) \notag \\
&= \mathbb{P}(v\in W_u|\mathbf{S},\Gamma(V_A),\mathcal{I}_v) \mathbb{P}(X_v = x| v\in W_u,\mathbf{S},\Gamma(V_A),\mathcal{I}_v) \notag \\
&= \frac{|W_u|-2}{|I|} \frac{1}{|W_u|} \leq \frac{1}{|I|} \leq \frac{1}{n(1-3p)},
\end{align} 
where $I$ denotes the set of all interior nodes and $|I| \geq n(1-3p)$ since each adversary is a neighbor to at most 2 honest server nodes. 
Hence we have 
\begin{align}
\mathbb{E}[\max_{x\in\mathcal{X}}\mathbb{P}(X_v=x|\mathbf{S}, \Gamma(V_A),\mathcal{I}_v)|\mathcal{I}_v] \leq \frac{1}{n(1-3p)},
\end{align}
concluding the proof. 
\end{proof}

\begin{lemma} \label{lem: line graph prob val}
On a line-graph, for $i,j>0$ we have
\begin{align}
\mathbb{P}(\mathcal{E}_v(i,0)) &\leq \left(p+\frac{1}{n}\right)^2\left(1-p+\frac{2}{n}\right)^i \label{eq: lem event 1}\\
\mathbb{P}(\mathcal{E}_v(0,j)) &\leq \left(p+\frac{1}{n}\right)^2\left(1-p+\frac{2}{n}\right)^j \label{eq: lem event 2} \\
\mathbb{P}(\mathcal{E}_v(0,0)) &\leq (p+1/n)^2 \label{eq: lem event 3}\\
\mathbb{P}(\mathcal{I}_v) &\leq (1-p)^2.  \label{eq: lem event 4}
\end{align}
\end{lemma}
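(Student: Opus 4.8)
The plan is to compute each probability exactly by exploiting the exchangeability of the honest/adversarial labelling around the cycle, and then to bound the resulting products of conditional probabilities by elementary term‑by‑term inequalities.

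First I would fix the honest node $v$ and condition on its label; the remaining $n-1$ positions of the directed cycle then carry a uniformly random assignment of $np$ adversarial labels and $\tilde n-1$ honest labels. Each of the four events depends only on a window of consecutive nodes on one or both sides of $v$, so its probability is a product of conditional probabilities obtained by revealing those nodes one at a time, sampling without replacement. For $\mathcal{E}_v(i,0)$ (resp. $\mathcal{E}_v(0,j)$) the window has size $i+3$ (resp. $j+3$); since $i,j\le\tilde n-1$ and, in the regime of interest, there are at least two adversaries, the window never wraps all the way around the cycle, so the revealed nodes are genuinely distinct. The trivial cases with at most one adversary, in which the $\mathcal{E}$‑events are either impossible or degenerate, I would dispatch separately.

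Next, the two ``anchored'' events. Revealing the successor of $v$ (adversarial with probability $\frac{np}{n-1}$) and then its predecessor (adversarial with conditional probability $\frac{np-1}{n-2}$) gives $\mathbb{P}(\mathcal{E}_v(0,0))=\frac{np}{n-1}\cdot\frac{np-1}{n-2}$, and each factor is at most $\frac{np+1}{n}=p+\frac1n$ because $\tilde n\ge 1$; this is \eqref{eq: lem event 3}. Symmetrically, $\mathcal{I}_v$ is precisely the event that both the predecessor and the successor of $v$ are honest, so $\mathbb{P}(\mathcal{I}_v)=\frac{\tilde n-1}{n-1}\cdot\frac{\tilde n-2}{n-2}$, and each factor is at most $\frac{\tilde n}{n}=1-p$, giving \eqref{eq: lem event 4}. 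For \eqref{eq: lem event 1} I would reveal the successor of $v$ (adversarial), then the $(i+1)$-th predecessor (adversarial), then the $i$ immediate predecessors one by one (all honest), obtaining
\[
\mathbb{P}(\mathcal{E}_v(i,0)) = \frac{np}{n-1}\cdot\frac{np-1}{n-2}\cdot\prod_{k=1}^{i}\frac{\tilde n-k}{n-k-2}.
\]
The first two factors are each at most $p+\frac1n$ as above, and for the product it suffices to check $\frac{\tilde n-k}{n-k-2}\le\frac{\tilde n+2}{n}=1-p+\frac2n$ for $1\le k\le i$; clearing denominators and using $n-\tilde n=np$, this collapses to $(k+2)(np-2)\ge 0$, which holds whenever there are at least two adversaries. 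Multiplying the factors yields \eqref{eq: lem event 1}, and \eqref{eq: lem event 2} follows by reversing the orientation of the cycle, which interchanges $\mathcal{E}_v(i,0)$ with $\mathcal{E}_v(0,i)$ without changing the law of the labelling.

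The main obstacle is bookkeeping rather than depth: I must track the remaining counts of honest and adversarial labels at each reveal step, keep the denominators $n-k-2$ positive over the whole range $k\le i\le\tilde n-1$ (they stay at least $np-1$), and verify the handful of elementary fraction inequalities. The only genuinely delicate point is the boundary behaviour — making sure the windows do not wrap around and that the low‑adversary cases are handled — but once the ``coloured cycle'' is viewed as a sampling‑without‑replacement process, each event is just a product of near‑$p$ or near‑$(1-p)$ Bernoulli factors with $O(1/n)$ corrections, which is exactly what the stated bounds assert.
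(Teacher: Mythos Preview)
Your proposal is correct and follows essentially the same approach as the paper: both compute the exact sampling-without-replacement product $\frac{np}{n-1}\cdot\frac{np-1}{n-2}\cdot\prod_{k=1}^{i}\frac{\tilde n-k}{n-k-2}$ and bound it term by term, with the only cosmetic difference that the paper first conditions on the position $Y_v$ of $v$ along the line (and then notes the conditional probability is independent of that position), whereas you work directly on a cycle via exchangeability. Your treatment is in fact more careful than the paper's, since you explicitly verify the fraction inequalities and handle the wrap-around and low-adversary boundary cases that the paper leaves implicit.
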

\begin{proof}
First let us consider the event $\mathcal{E}_v(i,0)$ in which node $v$ has an adversarial successor, $i$ honest predecessor nodes and an adversarial $i+1$-th predecessor.
Let $Y_v$ denote the position of node $v$ in the line graph. Then
\begin{align}
\mathbb{P}(\mathcal{E}_v(i,0)) = \sum_{j=i+1}^{n}\mathbb{P}(Y_v=j)\mathbb{P}(\mathcal{E}_v(i,0)|Y_v=j),  \label{eq: counting eq}
\end{align}
since $v$ needs to be at a position on the line graph where at least $i+1$ predecessors are feasible. 
Now, for $i+1\leq j \leq n$, by a simple counting argument we have $\mathbb{P}(\mathcal{E}_v(i,0)|Y_v=j) =$ 
\begin{align}
\left(\frac{np}{n-1}\right)\left(\frac{np-1}{n-2}\right) \left(\frac{\tilde{n}-1}{n-3}\right)\left(\frac{\tilde{n}-2}{n-4}\right)\ldots \left(\frac{\tilde{n}-i}{n-i-2}\right) \notag \\
\leq \left(p+\frac{1}{n}\right)^2\left(1-p+\frac{2}{n}\right)^i \notag
\end{align}
Combining the above inequality with Equation~\eqref{eq: counting eq} we conclude that
\begin{align}
\mathbb{P}(\mathcal{E}_v(i,0)) \leq \left(p+\frac{1}{n}\right)^2\left(1-p+\frac{2}{n}\right)^i
\end{align}
for $i>0$. 
By essentially a similar counting as above, we can also obtain the remaining Equations~\eqref{eq: lem event 2},~\eqref{eq: lem event 3} and~\eqref{eq: lem event 4} from the Lemma.   
\end{proof}

\begin{lemma} \label{lem: conditioning events}
If $\mathcal{E}_1,\mathcal{E}_2,\ldots,\mathcal{E}_k$ is a set of mutually exclusive and exhaustive events, and $v\in V_H$ is any honest server node, then
\begin{align}
\mathbf{D}_\mathtt{OPT}(v) \leq \sum_{i=1}^k \mathbb{P}(\mathcal{E}_i) \mathbb{E}[\max_{x\in\mathcal{X}}\mathbb{P}(X_v=x|\mathbf{S}, \Gamma(V_A),\mathcal{E}_i)|\mathcal{E}_i].
\end{align}
\end{lemma}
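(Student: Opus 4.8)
The plan is to derive this lemma directly from Corollary~\ref{cor: opt upper bound} by an elementary conditioning argument; there is no genuine difficulty, and the lemma is really just a packaging step for the case analysis over local neighborhood structures that follows in the proof of Theorem~\ref{thm: dynamic line}. Write $\mathbf{O}=(\mathbf{S},\Gamma(V_A))$ for the adversary's observation in the dynamic-line setting. First I would start from the definition $\mathbf{D}_\mathtt{OPT}(v)=\mathbb{E}[D_\mathtt{OPT}(v)]$, apply the tower property over $\mathbf{O}$, and invoke Corollary~\ref{cor: opt upper bound}:
\[
\mathbf{D}_\mathtt{OPT}(v)=\mathbb{E}\big[\mathbb{E}[D_\mathtt{OPT}(v)\mid\mathbf{O}]\big]\leq \mathbb{E}\Big[\max_{x\in\mathcal{X}}\mathbb{P}(X_v=x\mid\mathbf{O})\Big].
\]
So it suffices to bound the right-hand side by the claimed sum.

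Next I would expand the conditional likelihood using the fact that $\{\mathcal{E}_i\}_{i=1}^k$ is a partition of the probability space: for each fixed value of $\mathbf{O}$,
\[
\mathbb{P}(X_v=x\mid\mathbf{O})=\sum_{i=1}^k \mathbb{P}(\mathcal{E}_i\mid\mathbf{O})\,\mathbb{P}(X_v=x\mid\mathbf{O},\mathcal{E}_i),
\]
and therefore, since the coefficients $\mathbb{P}(\mathcal{E}_i\mid\mathbf{O})$ are nonnegative and a maximum of a sum is at most the sum of the maxima,
\[
\max_{x\in\mathcal{X}}\mathbb{P}(X_v=x\mid\mathbf{O})\leq \sum_{i=1}^k \mathbb{P}(\mathcal{E}_i\mid\mathbf{O})\,\max_{x\in\mathcal{X}}\mathbb{P}(X_v=x\mid\mathbf{O},\mathcal{E}_i).
\]
Taking expectations over $\mathbf{O}$ and exchanging the finite sum with the expectation leaves $k$ terms of the form $\mathbb{E}\big[\mathbb{P}(\mathcal{E}_i\mid\mathbf{O})\,f_i(\mathbf{O})\big]$, where $f_i(\mathbf{O}):=\max_{x\in\mathcal{X}}\mathbb{P}(X_v=x\mid\mathbf{O},\mathcal{E}_i)$ depends on the randomness only through $\mathbf{O}$.

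Finally I would convert each such term into the conditional expectation in the statement. Because $f_i$ is a (deterministic) function of $\mathbf{O}$, the tower property gives $\mathbb{E}[\mathbb{P}(\mathcal{E}_i\mid\mathbf{O})f_i(\mathbf{O})]=\mathbb{E}[\mathbbm{1}\{\mathcal{E}_i\}f_i(\mathbf{O})]=\mathbb{P}(\mathcal{E}_i)\,\mathbb{E}[f_i(\mathbf{O})\mid\mathcal{E}_i]$, which is exactly $\mathbb{P}(\mathcal{E}_i)\,\mathbb{E}[\max_{x\in\mathcal{X}}\mathbb{P}(X_v=x\mid\mathbf{S},\Gamma(V_A),\mathcal{E}_i)\mid\mathcal{E}_i]$; summing over $i$ yields the stated bound. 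The only point requiring care — and the closest thing to an obstacle — is the measure-theoretic bookkeeping in this last step: one must treat $f_i$ as $\sigma(\mathbf{O})$-measurable so the tower property applies, and one should note that degenerate events with $\mathbb{P}(\mathcal{E}_i)=0$ contribute nothing on either side, so the partition argument goes through without incident.
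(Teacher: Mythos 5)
Your proposal is correct and follows essentially the same route as the paper's proof: invoke Corollary~\ref{cor: opt upper bound}, expand $\mathbb{P}(X_v=x\mid\mathbf{O})$ over the partition $\{\mathcal{E}_i\}$, bound the maximum of the sum by the weighted sum of maxima, and take expectations with the tower property to obtain $\mathbb{P}(\mathcal{E}_i)\,\mathbb{E}[\cdot\mid\mathcal{E}_i]$ terms. Your write-up is in fact a bit more explicit than the paper's in the final bookkeeping step, but the argument is the same.
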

\begin{proof}
The proof is straightforward and follows from Corollary~\ref{cor: opt upper bound}. 
From Equation~\eqref{eq: opt upper bound} we have 
\begin{align}
\mathbb{E}&[D_\mathtt{OPT}(v)|\mathbf{S},\Gamma(V_A)] \leq \max_{x\in\mathcal{X}}\mathbb{P}(X_v=x|\mathbf{S},\Gamma(V_A)) \notag \\
&= \max_{x\in\mathcal{X}} \sum_{i=1}^k \mathbb{P}(\mathcal{E}_i|\mathbf{S},\Gamma(V_A))\mathbb{P}(X_v=x|\mathbf{S},\Gamma(V_A),\mathcal{E}_i) \notag \\
&\leq \sum_{i=1}^k \mathbb{P}(\mathcal{E}_i|\mathbf{S},\Gamma(V_A)) \max_{x\in\mathcal{X}} \mathbb{P}(X_v=x|\mathbf{S},\Gamma(V_A),\mathcal{E}_i).
\end{align}
Taking expectation on both sides of the above equation, we get $\mathbf{D}_\mathtt{OPT}(v) \leq \mathbb{E}[\max_{x\in\mathcal{X}} \mathbb{P}(X_v=x|\mathbf{S},\Gamma(V_A),\mathcal{E}_i)]$,
\begin{align}
\Rightarrow \mathbf{D}_\mathtt{OPT}(v) \leq  \sum_{i=1}^k \mathbb{P}(\mathcal{E}_i) \mathbb{E}[\max_{x\in\mathcal{X}}\mathbb{P}(X_v=x|\mathbf{S}, \Gamma(V_A),\mathcal{E}_i)|\mathcal{E}_i], \notag
\end{align}
and thus proving the Lemma. 
\end{proof}


To complete the proof of the Theorem, let use Lemma~\ref{lem: conditioning events} with $\mathcal{E}_v(i,0),\mathcal{E}_v(0,j),\mathcal{E}_v(0,0)$ and $\mathcal{E}_v$ for $i,j>0$ as the set of mutually exclusive and exhaustive events. 
Then the expected payoff at $v$ can be bounded as $\mathbf{D}_\mathtt{OPT}(v) \leq $  
\begin{align}
\sum_{i > 0} \mathbb{P}(\mathcal{E}_v(i,0)) \mathbb{E}[\max_{x\in\mathcal{X}}\mathbb{P}(X_v=x|\mathbf{S}, \Gamma(V_A),\mathcal{E}_v(i,0))|\mathcal{E}_v(i,0)] \notag \\
+ \sum_{j > 0} \mathbb{P}(\mathcal{E}_v(0,j)) \mathbb{E}[\max_{x\in\mathcal{X}}\mathbb{P}(X_v=x|\mathbf{S}, \Gamma(V_A),\mathcal{E}_v(0,j))|\mathcal{E}_v(0,j)] \notag \\
+ \mathbb{P}(\mathcal{E}_v(0,0)) \mathbb{E}[\max_{x\in\mathcal{X}}\mathbb{P}(X_v=x|\mathbf{S}, \Gamma(V_A),\mathcal{E}_v(0,0))|\mathcal{E}_v(0,0)] \notag \\
+ \mathbb{P}(\mathcal{I}_v) \mathbb{E}[\max_{x\in\mathcal{X}}\mathbb{P}(X_v=x|\mathbf{S}, \Gamma(V_A),\mathcal{I}_v)|\mathcal{I}_v], \label{eq: line graph main thm}
\end{align}
where the values of the individual expectation and probability terms in the above Equation~\eqref{eq: line graph main thm} have been computed in Lemmas~\ref{lem: line graph expect val} and~\ref{lem: line graph prob val} respectively. 
Using those bounds, we get 
\begin{align}
\mathbf{D}_\mathtt{OPT}(v) \leq &  \sum_{i>0} \left(p+\frac{1}{n}\right)^2\left(1-p+\frac{2}{n}\right)^i \frac{1}{i+1} \notag \\
&+ \sum_{j>0} \left(p+\frac{1}{n}\right)^2\left(1-p+\frac{2}{n}\right)^j \frac{1}{i+1} \notag \\
&+ (p+1/n)^2 + (1-p)^2 \frac{1}{n(1-3p)} \notag \\
\leq & \frac{2(p+\frac{1}{n})^2}{(1-p+\frac{2}{n})} \log \left( \frac{1}{p-\frac{2}{n}} \right) + \frac{(1-p)^2}{n(1-3p)} \notag \\
\leq & \frac{2p^2}{1-p}\log\left( \frac{2}{p} \right) + O\left(\frac{1}{n}\right).
\end{align} 
Finally averaging the expected payoff $\mathbf{D}_\mathtt{OPT}(v)$ over each of the $\tilde{n}$ honest server nodes $v\in V_H$, we get the desired result. 

\subsection{Section \ref{sec:systems}: Systems Issues}

\subsubsection{Proof of Proposition \ref{prop:twochoices}}
\label{proof:twochoices}
We map the problem of constructing a line (i.e., a $2$-regular digraph) to one of assigning balls to bins. 
Suppose each ball represents an outgoing connection, and each bin represents a server who may accept that outgoing connection.
There are $n$  balls and $n$ bins; for the sake of simplicity, we assume that a server can establish a connection to itself, so all $n$ bins are available to each ball.
Then the maximum degree $d_m$ of the degree distribution is linearly related to the maximum number of balls in any bin $h_m$.
That is, $d_m = 1 + h_m$.
When $k=1$, each ball is assigned to a bin uniformly.
The quantity $h_m$ has been studied extensively in this case, and the result for $k=1$ in Proposition \ref{prop:twochoices} is well-known \cite{johnson1977urn}.
When $k>1$, Algorithm \ref{algo:dreg_approx} exploits the `power of two choices' paradigm.
Power of two choices states that by picking the minimum-degree node among two choices, the maximum in-degree $h_m=\frac{\log \log n}{\log 2}(1+o(1)) + \Theta(1)$ with high probability \cite{azar}. 
This is an exponential reduction in maximum degree compared to when $k=1$.
More generally, for arbitrary $k>1$, the maximum degree is $\frac{\log \log n}{\log k}(1+o(1)) + \Theta(1)$.
This result is due to Azar et al. \cite{azar} and is well-studied in subsequent literature \cite{mitzenmacher2001power}.

\balancecolumns
\end{document}